\RequirePackage{fix-cm}
\documentclass[smallcondensed]{svjour3}     
\smartqed  
\usepackage{graphicx}
\usepackage{tabularx}
%
%
%
\usepackage{amsmath,amssymb}
\usepackage{color}
\usepackage{hyperref}
\usepackage{bussproofs}
\usepackage{comment}
\usepackage{listings}
\usepackage{xspace}
\usepackage{stmaryrd}

\newcommand{\HoTTSQL}{$\textsf{HoTT}\textit{SQL}$\xspace}
\newcommand{\SQLNull}{$\textsf{Null}\textit{SQL}$\xspace}
\newcommand{\SQLCoq}{$\textit{SQL}_{\textit{Coq}}$\xspace}
\newcommand{\SQLAlg}{$\textit{SQL}_{\textit{Alg}}$\xspace}

\newcommand{\vect}[1]{\overrightarrow{#1}}
\newcommand{\len}[1]{\vert {#1} \vert}
\newcommand{\bzero}{\mathbf{0}}
\newcommand{\bone}{\mathbf{1}}
\newcommand{\bfB}{\mathbf{B}}
\newcommand{\bfk}{\mathbf{k}}

\newcommand{\bN}{\mathbb{N}}
\newcommand{\app}{\mathop{++}}
\newcommand{\eval}{\mathop{\Downarrow}}

\newcommand{\orelse}{\mathop{~\vert~}}
\newcommand{\kwif}{\mathtt{if}}
\newcommand{\kwthen}{\mathtt{then}}
\newcommand{\kwelse}{\mathtt{else}}
\newcommand{\kwcard}{\mathtt{card}}
\newcommand{\kwlet}{\mathtt{let}}
\newcommand{\kwlin}{\mathtt{in}}
\newcommand{\kwsel}{{\color{blue}\mathtt{SELECT}}}
\newcommand{\kwdist}{{\color{blue}\mathtt{DISTINCT}}}
\newcommand{\kwfrom}{{\color{blue}\mathtt{FROM}}}
\newcommand{\kwlat}{{\color{blue}\mathtt{LATERAL}}}
\newcommand{\kwwhere}{{\color{blue}\mathtt{WHERE}}}
\newcommand{\kwas}{{\color{blue}\mathtt{AS}}}
\newcommand{\kwunion}{{\color{blue}\mathtt{UNION}}}
\newcommand{\kwinters}{{\color{blue}\mathtt{INTERSECT}}}
\newcommand{\kwexcept}{{\color{blue}\mathtt{EXCEPT}}}
\newcommand{\kwsome}{\mathit{some}}
\newcommand{\kwtb}{\mathit{table}}
\newcommand{\kwquery}{\mathit{query}}
\newcommand{\kwall}{{\color{blue}\mathtt{ALL}}}
\newcommand{\kwin}{{\color{blue}\mathtt{IN}}}
\newcommand{\kwis}{{\color{blue}\mathtt{IS}}}
\newcommand{\kwnot}{{\color{blue}\mathtt{NOT}}}
\newcommand{\kwnull}{{\color{blue}\mathtt{NULL}}}
\newcommand{\kwtrue}{{\color{blue}\mathtt{TRUE}}}
\newcommand{\kwfalse}{{\color{blue}\mathtt{FALSE}}}
\newcommand{\kwex}{{\color{blue}\mathtt{EXISTS}}}
\newcommand{\kwand}{{\color{blue}\mathtt{AND}}}
\newcommand{\kwor}{{\color{blue}\mathtt{OR}}}
\newcommand{\kwnodup}{\mathit{nodup}}
\newcommand{\kwtt}{\mathbf{tt}}
\newcommand{\kwff}{\mathbf{ff}}
\newcommand{\kwuu}{\mathbf{uu}}
\newcommand{\kwisnull}{\mathit{isnull}}
\newcommand{\kwistrue}{\mathit{istrue}}
\newcommand{\dom}[1]{\mathit{dom({#1})}}
\newcommand{\flatten}[1]{\mathit{flatten({#1})}}
\newcommand{\snil}{\langle\rangle}
\newcommand{\fnil}{[]}
\newcommand{\twovl}{\mathsf{2VL}}
\newcommand{\threevl}{\mathsf{3VL}}
\newcommand{\subst}[2]{\left\{{#2}\middle/{#1}\right\}}

\newcommand{\sem}[1]{\left\llbracket {#1} \right\rrbracket}
\newcommand{\gsem}[1]{\sem{#1}^\bfB}
\newcommand{\bsem}[1]{\sem{#1}^\twovl}
\newcommand{\tsem}[1]{\sem{#1}^\threevl}

\newcommand{\sfwa}[3]{\kwsel~[\kwdist]~{#1}~\kwfrom~{#2}~\kwwhere~{#3}}
\newcommand{\sfwb}[3]{
  \begin{array}{l}
  \kwsel~[\kwdist]~{#1} 
  \\
  \kwfrom~{#2}~\kwwhere~{#3}
\end{array}
}
\newcommand{\sfwc}[3]{
  \begin{array}{l}
  \kwsel~{#1} 
  \\
  \kwfrom~{#2}~\kwwhere~{#3}
\end{array}
}

\newcommand{\ssfwa}[2]{\kwsel~[\kwdist]~\ast~\kwfrom~{#1}~\kwwhere~{#2}}
\newcommand{\ssfwb}[2]{
  \begin{array}{l}
  \kwsel~[\kwdist]~\ast
  \\
  \kwfrom~{#1}~\kwwhere~{#2}
  \end{array}
}

\newcommand{\jq}[4][D]{{#2} \vdash_{#1} {#3} \Rightarrow {#4}}
\newcommand{\jc}[3][D]{{#2} \vdash_{#1} {#3}}
\newcommand{\jt}{\jc}
\newcommand{\jtl}{\jc}
\newcommand{\jT}{\jq}
\newcommand{\jTl}{\jq}
\newcommand{\jbT}{\jq}
\newcommand{\jiq}{\jc}
\newcommand{\jrcb}{\jc}
\newcommand{\jrcc}[3][D]{\jq[#1]{#2}{#3}{\kwcond}}
\newcommand{\jrct}[4][D]{\jq[#1]{#2}{#3}{\kwtuple~{#4}}}
\newcommand{\jrcs}[4][D]{\jq[#1]{#2}{#3}{\kwcoll~{#4}}}
\newcommand{\jrcd}[4][D]{\jq[#1]{#2}{#3}{\kwdisj~{#4}}}
\newcommand{\jrcg}[4][D]{\jq[#1]{#2}{#3}{\kwgen~{#4}}}

\newcommand{\rVS}{\noalign{\vskip 2ex plus.4ex minus.4ex}}

\definecolor{dred}{RGB}{127,0,0}
\definecolor{dgreen}{RGB}{0,127,0}
\definecolor{wrbk}{rgb}{0.8,0.9,1}
\definecolor{jcbk}{rgb}{0.8,1,0.8}
\definecolor{wrtxt}{rgb}{0,0.5,0.8}
\definecolor{jctxt}{rgb}{0,0.6,0}

\newcommand{\NRC}[0]{\ensuremath{\mathcal{NRC}}\xspace}

\newcommand{\NRClsb}[0]{\ensuremath{\NRC_\lambda(\mathit{Set,Bag})}\xspace}

\newcommand{\tuple}[1]{\langle{#1}\rangle}
\def\comprehension{\bigcup\setlit}

\def\kwset{\mathsf{set}}
\def\kwbag{\mathsf{bag}}
\newcommand{\setlit}[1]{\{{#1}\}}
\def\distinct{\delta}
\def\promote{\iota}

\def\plempty{\mathbf{empty}}

\def\kwtuple{\mathbf{tuple}}
\def\kwcond{\mathbf{cond}}
\def\kwcoll{\mathbf{coll}}
\def\kwdisj{\mathbf{disj}}
\def\kwgen{\mathbf{gen}}

\newcommand{\xlate}[1]{\left\Vert {#1} \right\Vert}

\newcommand{\wrnote}[1]{\colorbox{wrbk}{\parbox{0.9\textwidth}{\textcolor{wrtxt}{WR:} #1}}}

\lstdefinelanguage{coq}
{
frame=single,
frameround=ftft,
basicstyle=\scriptsize\ttfamily,
commentstyle=\color{dred},
morecomment=[n]{(*}{*)},
classoffset=0,
morekeywords={Definition, Lemma, Theorem, Parameter, Hypothesis, Variable, Inductive, Fixpoint},
keywordstyle=\bfseries\color{red},
classoffset=1,
morekeywords={fun, match, with, end, for, forall, exists, Type, Prop},
keywordstyle=\color{dgreen},
sensitive=false,
morestring=[b]",
literate={->}{{$\to$\ }}{1} {=>}{{$\Rightarrow$\ }}{1} {|}{{$\vert$}}{2},
mathescape
}

\lstnewenvironment{coq}{\lstset{language=coq}}{}

\lstdefinelanguage{icoq}
{
frame=single,
frameround=ftft,
basicstyle=\small\ttfamily,
commentstyle=\color{dred},
morecomment=[n]{(*}{*)},
classoffset=0,
morekeywords={Definition, Lemma, Theorem, Parameter, Hypothesis, Variable, Inductive, Fixpoint},
keywordstyle=\bfseries\color{red},
classoffset=1,
morekeywords={fun, match, with, end, for, forall, exists, Type, Prop},
keywordstyle=\color{dgreen},
sensitive=false,
morestring=[b]",
literate={->}{{$\to$}}{1} {=>}{{$\Rightarrow$}}{1} {|}{{$\vert$}}{2},
mathescape
}
\lstMakeShortInline[language=icoq]"

\lstdefinelanguage{sql}
{
frame=single,
frameround=ftft,
basicstyle=\scriptsize\ttfamily,
classoffset=0,
morekeywords={SELECT, DISTINCT, FROM, LATERAL, WHERE, EXISTS, UNION, INTERSECT,
  EXCEPT, ALL, TRUE, FALSE, IS, NOT, NULL, IN, AND, OR, NOT, AS, COUNT},
keywordstyle=\bfseries\color{blue},
sensitive=false,
literate={*}{{$\ast$}}{2},
mathescape
}

\lstnewenvironment{sql}{\lstset{language=sql}}{}

\lstdefinelanguage{isql}
{
frame=single,
frameround=ftft,
basicstyle=\small\ttfamily,
classoffset=0,
morekeywords={SELECT, DISTINCT, FROM, LATERAL, WHERE, EXISTS, UNION, INTERSECT,
  EXCEPT, ALL, TRUE, FALSE, IS, NOT, NULL, IN, AND, OR, NOT, AS, COUNT},
keywordstyle=\bfseries\color{blue},
sensitive=false,
literate={*}{{$\ast$}}{2},
mathescape
}

\lstMakeShortInline[language=isql]|

%
\journalname{Noname}
\begin{document}

\title{A Formalization of SQL with Nulls\thanks{This research has been supported by the National Cyber Security Centre (NCSC) project: Mechanising the metatheory of SQL with nulls.\\
This work was supported by ERC Consolidator Grant Skye (grant number 682315).}
}



\author{Wilmer Ricciotti \and James Cheney}


\institute{W. Ricciotti (Corresponding author) \and J. Cheney \at
              Laboratory for Foundations of Computer Science,	
              University of Edinburgh,
              10 Crichton St,
              Edinburgh EH8 9AB,
              United Kingdom \\
              \email{research@wilmer-ricciotti.net, jcheney@inf.ed.ac.uk}
}

\date{Received: date / Accepted: date}

\maketitle

\begin{abstract}
SQL is the world's most popular declarative language, forming the
basis of the multi-billion-dollar database industry.  Although SQL
has been standardized, the full standard is based on ambiguous natural
language rather than formal specification.  Commercial SQL
implementations interpret the standard in different ways, so that,
given the same input data, the same query can yield different results
depending on the SQL system it is run on. Even for a particular
system, mechanically checked formalization of all widely-used features of SQL remains an open problem. The lack of a well-understood formal semantics makes it very difficult to validate the soundness of database implementations.
Although formal semantics for fragments of SQL were designed in the past, they usually did not support set and bag operations, lateral joins, nested subqueries, and, crucially, null values. Null values complicate SQL's semantics in profound ways analogous to null pointers or side-effects in other programming languages. Since certain SQL queries are equivalent in the absence of null values, but produce different results when applied to tables containing incomplete data, semantics which ignore null values are able to prove query equivalences that are unsound in realistic databases.
A formal semantics of SQL supporting all the aforementioned features was only proposed recently. In this paper, we report about our mechanization of SQL semantics covering set/bag operations, lateral joins, nested subqueries, and nulls, written in the Coq proof assistant, and describe the validation of key metatheoretic properties.
Additionally, we are able to use the same framework to formalize the semantics of a flat relational calculus (with null values), and show a certified translation of its normal forms into SQL.
\keywords{SQL \and Nulls \and Semantics \and Formalization \and Coq}
\end{abstract}

\clearpage

\section{Introduction}
SQL is the standard query language used by relational databases, which
are the basis of a multi-billion dollar industry.  SQL's semantics is
notoriously subtle: the standard (ISO/IEC 9075:2016) uses natural
language that implementations interpret in different ways.

Relational databases are the world's most successful example of
declarative programming.  Commercial databases optimize queries by
applying rewriting rules to convert a request into an
equivalent one that can be executed more efficiently, using the
database's knowledge of data organization, statistics, and
indexes. However, the lack of a well-understood formal semantics for
SQL makes it very difficult to validate the soundness of candidate
rewriting rules, and even widely used database systems have been known
to return incorrect results due to bugs in query transformations (such
as the ``COUNT bug'')~\cite{Kim82,Ganski87}.  As a result, many database systems
conservatively use a limited set of very well-understood rewrite rules.

An accurate understanding of the semantics of SQL is also required to validate
techniques used to integrate SQL queries in a host programming language. One such
technique, which has been particularly influential in recent years, is \emph{language-integrated query}:
it is based on a domain specific sublanguage
of the host programming language, whose expressions can be made to correspond, after some
manipulation, to SQL queries. In order for the validity of this correspondence to be verified, we need a
formal semantics of SQL.

One of SQL's key features is incomplete information, i.e. \emph{null
  values}.  Null values are special tokens that indicate a
``missing'' or ``unknown'' value.  Unlike the ``none'' values in ``option'' or ``maybe''
types in functional languages such as ML, Haskell, or Scala, null
values are permitted as values of any field by default unless explicitly ruled
out as part of a table's schema (type declaration).  Moreover,
standard arithmetic and other primitive operations are extended to
support null values, and predicates are extended to
three-valued interpretations, to allow for the possibility that
a relationship cannot be determined to be either true or false due to
null values.  
As a result, the impact of nulls on the semantics of SQL
is similar to that of \emph{effects} such as null pointers,
exceptions, or side-effecting references in other programming
languages: almost any query can have surprising behavior in the presence of nulls.

SQL's idiosyncratic treatment of nulls is a common source of bugs in
database applications and query optimizers, especially in combination
with SQL's \emph{multiset} (or \emph{bag}) semantics. 
For example, consider the following three queries:
\begin{sql}
SELECT * FROM R WHERE 1 = 1
SELECT * FROM R WHERE A = A
SELECT * FROM R WHERE A = B OR A <> B
\end{sql}
over a relation $R$ with fields $A,B$. In conventional two-valued
logic, all three queries are equivalent because the |WHERE|-clauses are
tautologies. However, in the presence of nulls, all three
queries have different behavior: the first simply returns $R$, while
the second returns all elements of $R$ whose $A$-field is nonnull, and
the third returns all elements of $R$ such that both $A$ and $B$
values are nonnull. In the second query, if a record's $A$ value is
null, then the truth value of $A = A$ is \emph{maybe}, and such records are
not included in the resulting set. Likewise, if one of $A$ or $B$ (or
both!) is null, then $A = B 
\vee A \neq B$ has truth value
\emph{maybe}.

This problem, unfortunately, pervades most SQL features, even ones
that do not explicitly refer to equality tests.  For example, in the
absence of null values, Guagliardo and Libkin observe that all three
of the following queries have equivalent behavior (\cite{guagliardo17}):
\begin{sql}
SELECT R.A FROM R WHERE R.A NOT IN (SELECT S.A FROM S)
SELECT R.A FROM R WHERE NOT EXISTS (SELECT * FROM S WHERE S.A = R.A)
SELECT R.A FROM R EXCEPT SELECT S.A FROM S
\end{sql}
but all three have \emph{different} behavior when presented with the input table $R = \{1,null\}$ and $S = \{null\}$.  The first results in $\emptyset$, the second in $\{1,null\}$, and the third in $\{1\}$.

SQL's rather
counterintuitive semantics involving NULLs and three-valued logic
leads query optimizers to be conservative in order to avoid subtle bugs. Database implementations tend to restrict attention to a
small set of rules that have been both carefully proved correct (on
paper) and whose correctness has been validated over time. This means
that to get the best performance, a programmer often needs to know
what kinds of optimizations the query optimizer will perform and how
to reformulate queries to ensure that helpful optimizations take place. Of course, this merely
passes the buck: now the programmer must reason about the correctness
or equivalence of the more-efficient query, and as we have seen this
is easy to get wrong in the presence of nulls. As a result, database
applications are either less efficient or less reliable than they
should be.

Formal verification and certification of query transformations offers
a potential solution to this problem. We envision a (not too distant)
future in which query optimizers are \emph{certified}: that is, in
addition to mapping a given query to a hopefully more efficient one,
the optimizer provides a checkable proof that the two queries are
equivalent. Note that (as with certifying compilers~\cite{compcert})
this does not require proving the correctness or even termination of
the optimizer itself. Furthermore, we might consider several
optimizers, each specializing in different kinds of queries.

Before we get too excited about this vision, we should recognize that
there are many obstacles to realizing it.  For example, before we can
talk about proving the correctness of query transformations, let alone
mechanically checked proofs, we need to have a suitable semantics of
queries.  Formal semantics for SQL has been investigated
intermittently, including mechanized formalizations and proofs;
however, most such efforts have focused on simplified core
languages with no support for nulls~\cite{malecha10popl,benzaken14esop,Chu17},
meaning that they can and do prove equivalences that are false in real
databases, which invariably do support nulls (a recent exception to this is \SQLCoq~\cite{Benzaken19}, which we will discuss later).
Part of the reason for neglecting nulls and three-valued logic is that the theory of relational databases
and queries has been developed largely in terms of the
\emph{relational algebra} which does not support such concepts. Recent work by Guagliardo and Libkin~\cite{guagliardo17} provides the first (on-paper) formal
semantics of SQL with nulls (we will call this \SQLNull). \SQLNull is
the first formal treatment of SQL's nulls and three-valued semantics, and it
has been validated empirically using random testing to compare with
the behaviour of real database engines, but mechanized formalizations of the semantics of SQL with nulls have only appeared recently.

\subsubsection*{Contributions}\label{sec:hottsql}
This paper is a report about our formalization of SQL with null values, three-valued logic, and lateral joins: our development can be publicly accessed
at its GitHub repository
(\url{https://github.com/wricciot/nullSQL}). The most complete
formalization of SQL to date is \SQLCoq~\cite{Benzaken19}, which
was developed concurrently with our work: it formalizes a variant of
\SQLNull with grouping and aggregates and a corresponding bag-valued
relational algebra, proving the equivalence between the two. Our work
does not deal with grouping and aggregation; however, it does provide
a more accurate formalization of well-formedness constraints for SQL expressions. 
The well-formedness judgment defined in \SQLCoq accepts queries using free attribute 
names (not bound to an input table), which are rejected by concrete implementations; 
in the formalization, such queries are assigned a dummy semantics in the form of default values.

Another relevant formalization is \HoTTSQL by Chu et al.~\cite{Chu17}, which does not allow incomplete information in tables;
as it turns out, formalizing SQL with nulls requires us to deal with issues that are not immediately evident in \HoTTSQL, and thus provides us with an opportunity to consider alternatives to some of their design choices. 

We summarize here the key features of our formalization compared to the existing work.

\paragraph{Representation of tables.}
%
The \HoTTSQL paper describes two concrete alternatives for the
representation of tables: the list model and the $K$-relation
model~\cite{Green07}. They argue that lists are difficult to reason on because of the
requirement that they be equal up to permutation of elements, and that$K$-relations require the invariant of finite-supportedness to be
wired through each proof. They then go on to extend the $K$-relation
model to $K$ allowing infinite cardinalities (through HoTT types) and
claim this is a substantial improvement; they also use univalent types $\bzero$ and $\bone$ to represent truth values. However, they do not prove an
adequacy property relating this representation to a conventional one.
Despite the ease of reasoning with the \HoTTSQL approach, it is unclear how to adapt it to three-valued logic. 

As for \SQLCoq, \cite{Benzaken19} does not discuss the representation of tables in great detail; however, their formalization uses a bag datatype provided in a Coq library.

In this paper, we show instead that the difficulty of reasoning on lists up to permutations, which partly motivated the recourse to HoTT, is a typical proof-engineering issue, stemming from a lack of separation between the properties that the model is expected to satisfy, and its implementation as data (which is typical of type theory). Our key contribution is, therefore, the definition of $K$-relations as an abstract data type whose inhabitants can only be created, examined, and composed by means of structure-preserving operations, and its concrete implementation as normalized lists.

\paragraph{Reasoning on relations.}
This is a related point. Reasoning on an ADT cannot use the case analysis and induction principles that are normally the bread and butter of Coq users; for this reason, our ADT will expose some abstract well-behavedness properties that can be used as an alternative to concrete reasoning. Additionally, we will assume heterogeneous (\emph{``John Major''}) equality to help with the use of dependent types, and functional extensionality to reason up to rewriting under binders (such as the $\Sigma$ operator of $K$-relations expressing projections -- and more complex maps in our formalization).

\paragraph{The formalized fragment of SQL.}
Aside from nulls, there are several differences between the fragments of SQL used by the three formalizations. To list a few:
\begin{itemize}
\item \HoTTSQL does not employ names at any level, therefore attributes must be referenced in a de~Bruijn-like style, by position in a tuple rather than by name; \SQLCoq uses names for attributes, but not for tables, and relies on the implicit assumption that attributes be renamed so that no aliasing can happen in a cross product; in our formalization, names are used to reference attributes, and de~Bruijn indices to reference tables; our semantics is nameless.
\item Since \HoTTSQL does not have names, it does not allow attributes to be projected
  just by referencing them in a select clause (as we do), but it provides
  additional language expressions to express projections as a
  (forgetful) reshuffling of an input sequence of attributes.
\item \SQLCoq, on the other hand, by assuming that no attribute clash can occur, does not address the attribute shadowing problem mentioned by \cite{guagliardo17}.
\item Both \HoTTSQL and \SQLCoq do consider \emph{grouping} and
  \emph{aggregation} features, which are not covered by \cite{guagliardo17}, nor by our formalization;
\item Unlike both \HoTTSQL and \SQLCoq, we formalize SQL queries with |LATERAL| input, introduced in the SQL:1999 standard and supported by recent versions of DBMSs such as Oracle, PostgreSQL, and MySQL. 
  When a subquery appearing in the |FROM| clause is preceded by |LATERAL|, that subquery is allowed to reference attributes introduced by the preceding |FROM| items: this means that while normally the |FROM| items 
  of a |SELECT| query are evaluated independently, a |LATERAL| subquery needs to be evaluated once for every tuple in the preceding |FROM| items, making its semantics substantially more complicated.
\end{itemize}

\paragraph{Boolean semantics vs. three-valued semantics.}
As we mentioned above, in 
\linebreak
\HoTTSQL the evaluation of the |WHERE| clauses of queries yields necessarily a Boolean value. However, in standard SQL, conditional expressions can evaluate to an uncertain truth value, due to the presence of incomplete information in the data base. The lack of an obvious denotation of the uncertain truth value as a HoTT type makes it challenging to extend that work to nulls even in principle.
Our formalization, like Benzaken and Contejean's, provides a semantics for \SQLNull based on three-valued logic; additionally, we provide a Boolean semantics as well: we can thus formally derive
Guagliardo and Libkin's proof that, even in the presence of nulls,
three-valued logic does not increase the expressive power of SQL, and even extend it to queries with |LATERAL| input.
Whether such a property holds in the presence of grouping and
aggregation does not appear to have been investigated.

\paragraph{Relational calculus vs. SQL.}
The language-integrated query feature of programming languages such as Kleisli~\cite{Won00}, Links~\cite{CLWY06}, and Microsoft's C\# and F\# allows a user to express database queries in a typed domain-specific sublanguage which blends in nicely with the rest of the program. Core calculi such as the nested relational calculus~\cite{BNTW95} (\NRC) have been used to provided a theoretical basis to study language-integrated query: in particular, Wong's conservativity theorem~(\cite{wong:conservativity}) implies that every \NRC query mapping flat tables to flat tables can be \emph{normalized} to a flat relational calculus query, not using nested collections as intermediate data. 
Such flat queries correspond closely to SQL queries, and it is straightforward to give an algorithm to translate the former into the latter. Furthermore, in~\cite{Ricciotti19dbpl} and~\cite{Ricciotti21esop}, we extended \NRC to allow queries mixing set and bag collections, and we noted that in this language, under additional conditions, it is still possible to normalize flat queries to a form that directly corresponds to SQL, as long as |LATERAL| inputs are allowed.
 
However, the correspondence established by these works is rather informal: the correctness of translations from \NRC to SQL has not been proved formally, at least to our knowledge. In Section~\ref{sec:rc}, we fill this gap in the literature: we formally define flat relational calculus normal forms using sets and bags and their semantics, show a translation mapping them to SQL, and prove that the translation preserves the semantics of the original query.

\subsection{Structure of the paper}
We start in Section~\ref{sec:syntax} by describing our formalization of the syntax of \SQLNull, discussing our implementation choices and differences with the official SQL syntax; Section~\ref{sec:relations} is devoted to our semantic model of relations, particularly its implementation as an abstract data type; in Section~\ref{sec:semantics}, we describe how SQL queries are evaluated to semantic relations, using both Boolean and three-valued logic;  Section~\ref{sec:translation} formalizes Guagliardo and Libkin's proof that the two versions of the semantics have the same expressive power; finally Section~\ref{sec:rc} gives a semantics of normalized flat relational calculus terms and gives an algorithm to translate them to SQL queries, proving its correctness.

\section{Overview of the formalization}
The formalization we describe is partitioned in several modules and functors. In some cases, these serve as little more than namespaces, or are used mostly for the purpose of presentational separation. For example, the various parts of this development are defined in terms of an underlying collection of named tables, namely \emph{the data base} $D$; rather than cluttering all the definitions with references to $D$ and its properties, we package their signature in a module type "DB" and assume that a concrete implementation is given.

The syntax of \SQLNull, including rules defining well-formedness of queries and other expressions, is defined in a module of type "SQL".

\section{Syntax}\label{sec:syntax}
We formalize a fragment of SQL consisting of select-from-where queries
(including ``select-star'') with correlated subqueries connected with
$\kwex$ and $\kwin$ and operations of union, intersection and
difference. Both set and bag (i.e. multiset) semantics are supported, through the use of the keywords $\kwdist$ and $\kwall$.
We assume a simple data model consisting of constants $\bfk$ along with the unspecified $\kwnull$ value. We make no assumption over the semantics of constants, which may thus stand for numeric values, strings, or any other kind of data; however, for the purpose of formalization it is useful to assume that the constants be linearly ordered, for example by means of the lexicographic order on their binary representation. Relations are bags of $n$-tuples of values, where $n$ is the arity of the relation.
Our syntax is close to the standard syntax of SQL, but we make a few simplifying changes:
\begin{itemize}
\item The tables in the |FROM| clause of |SELECT-FROM-WHERE| queries are referenced by a $0$-based de~Bruijn index rather than by name; however, attributes are still referenced by name.
\item Attribute (re)naming using |AS|, both in |SELECT| and |FROM|, is mandatory.
\item The |WHERE| clause is mandatory (|WHERE TRUE| must be used when no condition is given).
\item An explicit syntax ($\kwtb~x$ or $\kwquery~Q$) is provided to differentiate between tables stored by name in the database and tables resulting from a query.
\end{itemize}
Hence, if $R$ is a relation with column names $A$, $B$, $C$, the SQL query 
|SELECT R.A FROM R| 
must be expressed as |SELECT| $0.$|A AS A FROM| $\kwtb$ |R AS (A,B,C) WHERE TRUE|.

For compactness, we will write $\kwas$ as a colon ``:''. The full syntax follows:
\[
\begin{array}{lcl@{\hspace{1.2cm}}lcl@{\hspace{1.2cm}}lcl@{\hspace{1.2cm}}lcl}
x & \in & \mathbb{X}
&
\alpha & ::= & n.x
&
\sigma & ::= & \vect{x}
&
\Gamma & ::= & \vect{\sigma}
\\
t & ::= & \multicolumn{4}{l}{\alpha \orelse \mathbf{k} \orelse \kwnull}
&
v & ::= & \multicolumn{4}{l}{\mathbf{k} \orelse \kwnull}
\\
Q & ::= & \multicolumn{10}{l}{\sfwa{\vect{t : x}}{G}{c}}
  \\
  & \orelse & \multicolumn{10}{l}{\ssfwa{G}{c}}
  \\
  & \orelse & \multicolumn{10}{l}{Q_1~\kwunion~[\kwall]~Q_2 
  \orelse Q_1~\kwinters~[\kwall]~Q_2 
  \orelse Q_1~\kwexcept~[\kwall]~Q_2}
\\
G & ::= & \multicolumn{4}{l}{
  \vect{T_1 : \sigma_1}~\kwlat \ldots \kwlat~\vect{T_n : \sigma_n}
}
\\
c & ::= & \multicolumn{10}{l}{\kwtrue \orelse \kwfalse \orelse c~\kwis~\kwtrue \orelse t~\kwis~[\kwnot]~\kwnull
  \orelse \vect{t}~[\kwnot]~\kwin~Q
  \orelse P^n(\vect{t_n})}
  \\
  & \orelse & 
  \multicolumn{10}{l}{\kwex~Q \orelse c_1~\kwand~c_2 \orelse c_1~\kwor~c_2 \orelse \kwnot~c}
\\
T & ::= & \multicolumn{10}{l}{\kwtb~x \orelse \kwquery~Q}
\end{array}
\]
The |SELECT| clause of a query takes a list of terms, which include null or constant values, and references to attributes one of the tables in the form $n.x$, where $n$ is the index referring to an input relation in the |FROM| clause, and $x$ is an attribute name. The input of the query is expressed by the |FROM| clause, which references a \emph{generator} $G$ consisting of a sequence of \emph{frames} separated by the |LATERAL| keyword; each frame is a sequence $\vect{T:\sigma}$ of input tables paired with a schema (allowing attribute renaming); an input table can be defined using variables introduced in a previous frame, but not in the same frame; concretely, in a query:
\begin{sql}
SELECT z.id
FROM T1 x,
     (SELECT * FROM T2 x' WHERE x.name = x'.name) y,
     LATERAL (SELECT * FROM T3 x' WHERE x.name = x'.name) z
\end{sql}
the expression introducing the variable |y| is ill-formed, because it uses the variable |x|, which is introduced in the same frame; however, the very similar expression associated to |z| is well-formed, because it is part of a different frame introduced by |LATERAL|. In our Coq formalization, we will model frames as lists, and sequences of frames as lists of lists.

Conditions for the |WHERE| clause of queries include Booleans and Boolean operators (|TRUE|, |FALSE|, |AND|, |OR|, |NOT|), comparison of conditions with |TRUE|, comparison of terms with |NULL|, membership tests for tuples ($\vect{t}~[\kwnot]~\kwin~Q$), non-emptiness of the result of subqueries (|EXISTS Q|), and custom predicates $P^n(\vect{t_n})$ (where $P^n$ is an $n$-ary Boolean predicate, and $\vect{t_n}$ an $n$-tuple of terms). 

The abstract syntax we have presented in Section~\ref{sec:syntax} is made concrete in Coq by means of inductive types.
\begin{coq}
  Inductive pretm : Type :=
  | tmconst : BaseConst -> pretm
  | tmnull  : pretm
  | tmvar   : FullVar -> pretm   
  
  Inductive prequery : Type :=
  | select  : bool -> list (pretm * Name) -> list (list (pretb * Scm)) -> 
                precond -> prequery
  | selstar : bool -> list (list (pretb * Scm)) -> precond -> prequery
  | qunion  : bool -> prequery -> prequery -> prequery
  | qinters : bool -> prequery -> prequery -> prequery
  | qexcept : bool -> prequery -> prequery -> prequery

  with precond : Type :=
  | cndtrue   : precond
  | cndfalse  : precond
  | cndnull   : bool -> pretm -> precond
  | cndistrue : precond -> precond
  | cndpred   : forall n, (forall l : list BaseConst, 
                    length l = n -> bool) -> 
                  list pretm -> precond
  | cndmemb   : bool -> list pretm -> prequery -> precond
  | cndex     : prequery -> precond
  | cndand    : precond -> precond -> precond
  | cndor     : precond -> precond -> precond
  | cndnot    : precond -> precond

  with pretb: Type :=
  | tbbase  : Name -> pretb
  | tbquery : prequery -> pretb.
\end{coq}
Query constructors "select" and "selstar" take a Boolean argument which, when it is true, plays the role of a |DISTINCT| selection query; similarly, the Boolean argument to constructors "qunion", "qinters", and "qexcept" plays the role of the |ALL| modifier allowing for union, intersection, and difference according to bag semantics. Conditions using base predicates are expressed by the constructor "cndpred": notice that we do not formally specify the set of base predicates defined by SQL, but allow any $n$-ary function from constant values (of type "BaseConst") to Booleans expressible in Coq to be embedded in an SQL query: such functions can easily represent SQL predicates including equality, inequality, numerical ``greater than'' relations, |LIKE| on strings, and many more.

We use well-formedness judgments (Fig.~\ref{fig:SQLjudg}) to filter out meaningless expressions, in particular those containing table references that cannot be resolved because they point to a table that is not in the |FROM| clause, or because a certain attribute name is not in the table, or is ambiguous (as it happens when a table has two columns with the same name). The formalization of legal SQL expressions has mostly been disregarded in other work, either because the formalized syntax was not sufficiently close to realistic SQL (\HoTTSQL does not use attribute or table names), or because it was decided to assign a dummy semantics to illegal expressions (as in \SQLCoq).

There are distinct judgments for the well-formedness of attribute names and terms, and five distinct, mutually defined judgments for tables, frames, generators, conditions, queries and existentially nested queries. Each judgment mentions a context $\Gamma$  which assigns a schema (list of attribute names) to each table declared in a |FROM| clause. A parameter $D$ (\emph{data base}) provides a partial map from table names $x$ to their (optional) schema $D(x)$.

\begin{figure}
\scriptsize{
\begin{tabular}{c}
\rVS
\multicolumn{1}{l}{
%
\fbox{Variables (\texttt{j\_var})}
\hspace{1cm}
\AxiomC{$x \notin \sigma$}
\UnaryInfC{$x \# \sigma \vdash x$}
\DisplayProof
\hspace{1cm}
\AxiomC{$x \neq y$}
\AxiomC{$\sigma \vdash x$}
\BinaryInfC{$y \# \sigma \vdash x$}
\DisplayProof
}
\\

\\
\rVS
\multicolumn{1}{l}{\fbox{Terms (\texttt{j\_tm}, \texttt{j\_tml})}}
\\
\rVS
\AxiomC{$\phantom{A}$}
\UnaryInfC{$\jt{\Gamma}{\mathbf{k}}$}
\DisplayProof
\hspace{.7cm}
\AxiomC{$\phantom{A}$}
\UnaryInfC{$\jt{\Gamma}{\kwnull}$}
\DisplayProof
\hspace{.7cm}
\AxiomC{$\Gamma(n) = \kwsome~\sigma$}
\AxiomC{$\sigma \vdash x$}
\BinaryInfC{$\jt{\Gamma}{n.x}$}
\DisplayProof
\hspace{.7cm}
\AxiomC{$\forall t \in \vect{t'}: \jt{\Gamma}{t}$}
\UnaryInfC{$\jtl{\Gamma}{\vect{t'}}$}
\DisplayProof
\\
\rVS
\multicolumn{1}{l}{\fbox{Queries (\texttt{j\_query})}}
\\
\rVS
\AxiomC{$\jTl{\Gamma}{G}{\Gamma'}$} \noLine
\UnaryInfC{$\jtl{\Gamma',\Gamma}{\vect{t}}$}
\AxiomC{$\jc{\Gamma',\Gamma}{c}$} \noLine
\UnaryInfC{$\tau = \vect{x}$}
\BinaryInfC{$\jq{\Gamma}{\sfwb{\vect{t : x}}{G}{c}}{\tau}$}
\DisplayProof
\hspace{1cm}
\AxiomC{$\jTl{\Gamma}{G}{\Gamma'}$} \noLine
\UnaryInfC{$\jtl{\Gamma',\Gamma}{\dom{\Gamma'}}$}
\AxiomC{$\jc{\Gamma',\Gamma}{c}$} \noLine
\UnaryInfC{$\tau = \flatten{\Gamma'}$}
\BinaryInfC{$\jq{\Gamma}{\ssfwb{G}{c}}{\tau}$}
\DisplayProof
\\
\rVS
\AxiomC{$\jq{\Gamma}{Q_1}{\sigma}$}
\AxiomC{$\jq{\Gamma}{Q_2}{\sigma}$}
\BinaryInfC{$\jq{\Gamma}{Q_1~\{\kwunion~\vert~\kwinters~\vert~\kwexcept\}~[\kwall]~Q_2}{\sigma}$}
\DisplayProof
\smallskip\\
\rVS
\multicolumn{1}{l}{\fbox{Nested queries (\texttt{j\_inquery})}}
\\
\rVS
\AxiomC{$\jTl{\Gamma}{G}{\Gamma'}$} \noLine
\UnaryInfC{$\jtl{\Gamma',\Gamma}{\vect{t}}$}
\AxiomC{$\jc{\Gamma',\Gamma}{c}$}
\BinaryInfC{$\jiq{\Gamma}{\sfwb{\vect{t : x}}{G}{c}}$}
\DisplayProof
\hspace{1cm}
\AxiomC{$\jTl{\Gamma}{G}{\Gamma'}$}
\AxiomC{$\jc{\Gamma',\Gamma}{c}$} 
\BinaryInfC{$\jiq{\Gamma}{\ssfwb{G}{c}}$}
\DisplayProof
\\
\rVS
\AxiomC{$\jq{\Gamma}{Q_1}{\sigma}$}
\AxiomC{$\jq{\Gamma}{Q_2}{\sigma}$}
\BinaryInfC{$\jiq{\Gamma}{Q_1~\{\kwunion~\vert~\kwinters~\vert~\kwexcept\}~[\kwall]~Q_2}$}
\DisplayProof
\\
\rVS
\multicolumn{1}{l}{\fbox{Tables (\texttt{j\_tb})}}
\\
\rVS
\AxiomC{$D(x) = \kwsome~\sigma$}
\UnaryInfC{$\jT{\Gamma}{\kwtb~x}{\sigma}$}
\DisplayProof
\hspace{1cm}
\AxiomC{$\jq{\Gamma}{Q}{\sigma}$}
\UnaryInfC{$\jT{\Gamma}{\kwquery~Q}{\sigma}$}
\DisplayProof
\\
\rVS
\multicolumn{1}{l}{\fbox{Frames and generators (\texttt{j\_btb}, \texttt{j\_btbl})}}
\\
\rVS
\AxiomC{$\phantom{A}$}
\UnaryInfC{$\jbT{\Gamma}{\snil}{\snil}$}
\DisplayProof
\hspace{1cm}
\AxiomC{$\vert \sigma \vert = \vert \sigma' \vert$} \noLine
\UnaryInfC{$\jT{\Gamma}{T}{\sigma}$}
\AxiomC{$\kwnodup~\sigma'$} \noLine
\UnaryInfC{$\jbT{\Gamma}{\vect{U:\tau}}{\Gamma'}$}
\BinaryInfC{$\jbT{\Gamma}{T:\sigma', \vect{U:\tau}}{\sigma', \Gamma'}$}
\DisplayProof
\\
\rVS
\AxiomC{$\phantom{A}$}
\UnaryInfC{$\jbT{\Gamma}{\fnil}{\fnil}$}
\DisplayProof
\hspace{1cm}
\AxiomC{$\jbT{\Gamma}{G}{\Gamma_1}$}
\AxiomC{$\jbT{\Gamma_1, \Gamma}{\vect{T:\sigma}}{\Gamma_2}$}
\BinaryInfC{$\jbT{\Gamma}{\vect{T:\sigma}, \kwlat~G}{\Gamma_2, \Gamma_1}$}
\DisplayProof
\\
\rVS
\multicolumn{1}{l}{\fbox{Conditions (\texttt{j\_cond})}}
\\
\AxiomC{$\phantom{A}$}
\UnaryInfC{$\jc{\Gamma}{\{\kwtrue \vert \kwfalse\}}$}
\DisplayProof
\hspace{1cm}
\AxiomC{$\jt{\Gamma}{t}$}
\UnaryInfC{$\jc{\Gamma}{t~\kwis~[\kwnot]~\kwnull}$}
\DisplayProof
\hspace{1cm}
\AxiomC{$\jc{\Gamma}{c}$}
\UnaryInfC{$\jc{\Gamma}{c~\kwis~\kwtrue}$}
\DisplayProof
\\
\rVS
\AxiomC{$\vert \vect{t} \vert = n$} 
\AxiomC{$\jtl{\Gamma}{\vect{t}}$}
\BinaryInfC{$\jc{\Gamma}{P^n(\vect{t})}$}
\DisplayProof
\hspace{1cm}
\AxiomC{$\jtl{\Gamma}{\vect{t}}$}
\AxiomC{$\vert \vect{t} \vert = \vert \sigma \vert$} \noLine
\UnaryInfC{$\jq{\Gamma}{Q}{\sigma}$}
\BinaryInfC{$\jc{\Gamma}{\vect{t}~\kwin~Q}$}
\DisplayProof
\hspace{1cm}
\AxiomC{$\jiq{\Gamma}{Q}$}
\UnaryInfC{$\jc{\Gamma}{\kwex~Q}$}
\DisplayProof
\\
\rVS
\AxiomC{$\jc{\Gamma}{c_1}$}
\AxiomC{$\jc{\Gamma}{c_2}$}
\BinaryInfC{$\jc{\Gamma}{c_1~\{\kwand~\vert~\kwor\}~c_2}$}
\DisplayProof
\hspace{1cm}
\AxiomC{$\jc{\Gamma}{c}$}
\UnaryInfC{$\jc{\Gamma}{\kwnot~c}$}
\DisplayProof
\end{tabular}
}
\caption{Well-formed SQL syntax.}\label{fig:SQLjudg}
\end{figure}

We review some of the well-formedness rules. The rules for terms state that constant literals $\mathbf{k}$ and null values are well formed in all contexts. To check whether an attribute reference $n.x$ is well formed (where $n$ is a de~Bruijn index referring to a table and $x$ an attribute name), we first perform a lookup of the $n$-th schema in $\Gamma$: if this returns some schema $\sigma$, and the attribute $x$ is declared in $\sigma$ (with no repetitions), then $n.x$ is well formed. The rules for conditions recursively check that nested subqueries be well-formed and that base predicates $P^n$ be applied to exactly $n$ arguments. 


The well-formedness judgments for queries and tables assign a schema to their main argument. Similarly, well-formed frames of tables are assigned the corresponding sequence of schemas, i.e. a context. The well-formedness judgment for generators uses, recursively, the well-formedness of frames, where each frame added to the generator must be well-formed in a suitably extended context (notice that the last frame is added to the left, contrary to SQL syntax, but coherently with Coq's notation for lists), and finally returns a context obtained by concatenating all the contexts assigned to the individual well-formed frames.

The SQL standard allows well-formed queries to return tables whose schema contains repeated attribute names (e.g. |SELECT A, A, B FROM R|), but requires attribute references in terms to be unambiguous (so that, if the previous query appears as part of a larger one, the attribute name |B| can be used, but |A| cannot). This behaviour is faithfully mimicked in our well-formedness judgments: while well-formed terms are required to only use unambiguous attribute references, the rules for queries do not check that the schema assignment be unambiguous. Furthermore, in a |SELECT *| query that is not contained in an |EXISTS| clause, the star is essentially expanded to the attribute names of the input tables (so that, for example, |SELECT * FROM (SELECT A, A FROM R)| is rejected even though the inner query is accepted, and the ambiguous attribute name |A| is not explicitly referenced).

As an exception, when a |SELECT *| query appears inside an |EXISTS| clause (meaning it is only run for the purpose of checking whether its output is empty or not), SQL considers it well-formed even when the star stands for an ambiguous attribute list. Thus we model this situation as a different well-formedness predicate, with a more relaxed rule for |SELECT *|; furthermore, since the output of an existential subquery is thrown away after checking for non-emptiness, this predicate does not return a schema.

In our formalization, we need to prove weakening only for the term judgment, but not for queries, tables or conditions; weakening for terms is almost painless and only requires us to define a lift function that increments table indices by a given $k$.
Thus, if a term $t$ is well-formed in a context $\Gamma$, then it is also well-formed in an extended context $\Gamma',\Gamma$, provided that we lift it by an amount corresponding to the length of $\Gamma'$.
\begin{lemma}
If $\jt{\Gamma}{t}$, then for all 
$\Gamma'$ we have $\jt{\Gamma',\Gamma}{\mathtt{tm\_lift}~t~\vert \Gamma' \vert}$.
\end{lemma}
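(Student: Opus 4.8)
The plan is to proceed by case analysis on the term $t$, which — since the term well-formedness judgment \texttt{j\_tm} has exactly three syntax-directed rules — amounts to inverting the hypothesis $\jt{\Gamma}{t}$. For the two base cases $t = \mathbf{k}$ and $t = \kwnull$, the shift ${tm\_lift}~t~\len{\Gamma'}$ is definitionally equal to $t$ itself, and the axioms for constants and $\kwnull$ apply in an arbitrary context, so the goal $\jt{\Gamma',\Gamma}{{tm\_lift}~t~\len{\Gamma'}}$ follows at once.

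The only case with content is the attribute reference $t = n.x$. Inverting the hypothesis produces a schema $\sigma$ with $\Gamma(n) = \kwsome~\sigma$ and a derivation of $\sigma \vdash x$. Since ${tm\_lift}~(n.x)~\len{\Gamma'} = (\len{\Gamma'} + n).x$, it suffices to re-apply the attribute rule, reusing verbatim the derivation of $\sigma \vdash x$, once we know $(\Gamma',\Gamma)(\len{\Gamma'} + n) = \kwsome~\sigma$. This is an instance of the key auxiliary fact — the de~Bruijn counterpart of ``weakening shifts indices past the newly inserted block'' — namely that lookup in the concatenation satisfies $(\Gamma',\Gamma)(\len{\Gamma'} + n) = \Gamma(n)$. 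That lemma is proved by a short induction on $\Gamma'$, or obtained directly from the standard-library characterisation of \texttt{nth\_error} on appended lists together with the arithmetic identity $\len{\Gamma'} + n - \len{\Gamma'} = n$; combining it with $\Gamma(n) = \kwsome~\sigma$ closes the case.

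I do not expect a real obstacle here: the statement is deliberately lightweight — as the surrounding text observes, it is the \emph{only} weakening property the development needs, precisely because the term judgment is the one that must be reindexed when the context grows, while queries, tables, and conditions never require it. The only friction is routine bookkeeping with the de~Bruijn arithmetic, plus — if contexts carry dependently-typed length information in the Coq encoding — the usual transport (\texttt{eq\_rect}) noise; since the attribute rule merely inspects the \emph{result} of a positional lookup, none of this affects the shape of the argument.
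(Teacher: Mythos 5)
Your proof is correct and matches what the development actually does: the paper gives no explicit proof, describing weakening for terms as ``almost painless'' once \texttt{tm\_lift} (which shifts only the table index of a \texttt{tmvar} and leaves constants and \texttt{NULL} untouched) is defined, and the intended argument is exactly your case analysis with the concatenated-context lookup fact $(\Gamma',\Gamma)(\len{\Gamma'}+n)=\Gamma(n)$ discharging the only non-trivial case. Nothing further is needed.
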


\section{$K$-relations as an abstract data type}\label{sec:relations}
We recall the notion of \emph{$K$-relation}, introduced in~\cite{Green07} by Green et al.: for a commutative semi-ring $(K,+,\times,0,1)$ (i.e. $(K,+,0)$ and
$(K,\times,1)$ are commutative monoids,  $\times$ distributes over
$+$, and $0$ annihilates $\times$), a $K$-relation is a \emph{finitely supported} function $R$ of type $T \to K$, where by finitely supported we mean that $R~t \neq 0$ only for finitely many $t : T$. $K$-relations constitute a natural model for databases: for example, if $K = \bN$, $R~t$ can be interpreted as the multiplicity of a tuple $t$ in $R$, and finite-supportedness corresponds to the finiteness of bags. 
In Coq, we can represent $K$-relations as (computable) functions: however, each function must be proved finitely supported separately, cluttering the formalization. To minimize the complication, we model $K$-relations by means of an abstract data type (as opposed to the concrete type of functions); this technique was previously used by one of the authors to formalize binding structures~\cite{Ricciotti15}. 

Just as in the theory of programming languages, an abstract data type for $K$-relations does not provide access to implementation details, but offers a selection of operations (union, difference, cartesian product) that are known to preserve the structural properties of $K$-relations, and in particular finite-supportedness. For the purpose of this work, the ADT we describe is specialized to $\bN$-relations; we fully believe our technique can be adapted to general commutative semi-rings (including the provenance semi-rings that provided the original motivation for $K$-relations), with some adaptations due to the fact that our model needs to support operations, like difference, that are not available in a semi-ring.

Our abstract type of relations is defined by means of the following signature:
%
\begin{coq}[texcl]
  Parameter R : nat -> Type.
  Parameter V : Type.
  Definition T := Vector.t V.
  Parameter memb : forall n, R n -> T n -> nat.            (*$\#(r,t)$*)
  Parameter plus : forall n, R n -> R n -> R n.            (*$\oplus$*)
  Parameter minus: forall n, R n -> R n -> R n.            (*$\backslash$*)
  Parameter inter: forall n, R n -> R n -> R n.            (*$\cap$*)
  Parameter times: forall m n, R m -> R n -> R (m + n).    (*$\times$*)
  Parameter sum  : forall m n, R m -> (T m -> T n) -> R n.  (*$\Sigma$*)
  Parameter rsum  : forall m n, R m -> (T m -> R n) -> R n. (*$\biguplus$*)
  Parameter sel  : forall n, R n -> (T n -> bool) -> R n.   (*$\sigma$*)
  Parameter flat : forall n, R n -> R n.                  (*$\Vert \cdot \Vert$*)
  Parameter supp : forall n, R n -> list (T n).
  Parameter Rnil : forall n, R n.
  Parameter Rone : R 0.
  Parameter Rsingle : forall n, T n -> R n.
\end{coq}
This signature declares a type family $\mathtt{R~n}$ of $\mathtt{n}$-ary relations, and a type $\mathtt{V}$ of data values. The type family $\mathtt{T~n}$ of $\mathtt{n}$-tuples is defined as a vector with base type $\mathtt{V}$. The key difference compared to the concrete approach is that, given a relation $r$ and a tuple $t$, both with the same arity, we obtain the multiplicity of $t$ in $r$ as $\#(r,t)$, where $\#(\cdot,\cdot)$ is an abstract operator; the concrete style $r~t$ is not allowed because the type of $R$ is abstract, i.e. we do not know whether it is implemented as a function or as something else.

We also declare binary operators $\oplus$, $\backslash$, and $\cap$ for the disjoint union, difference, and intersection on $n$-ary bags. The cartesian product $\times$ takes two relations of possibly different arity, say $m$ and $n$, and returns a relation of arity $m+n$. 

The operator $\mathtt{sum~r~f}$, for which we use the notation $\sum_r f$ (or, sometimes, $\sum_{x \leftarrow r} f~x$) represents bag comprehension: it takes a relation $r$ of arity $m$ and a function $f$ from $m$-tuples to $n$-tuples, and builds a new relation of arity $n$ as a disjoint union of all the $f~x$, where $x$ is a tuple in $r$, taken with its multiplicity; note that for such an operation to be well-defined, we need $r$ to be finitely supported. We also provide a more general form of comprehension $\mathtt{rsum~r~g}$, with the notation $\biguplus_r g$ (or, equivalently, $\biguplus_{x \leftarrow r} g~x$) where the function $g$ maps $m$-tuples to $n$-relations: the output of this comprehension will be a new relation of arity $n$ built by taking the disjoint union of all the relations $g~x$, where $x$ is a tuple in $r$, taken with its multiplicity. Again, this operation is well-defined only if $r$ is finitely supported.

Filtering is provided by $\mathtt{sel~r~p}$ (notation: $\sigma_p(r)$), where $p$ is a boolean predicate on tuples: this will return a relation that contains all the tuples of $r$ that satisfy $p$, but not the other ones.

We also want to be able to convert a bag $r$ to a set (i.e.,
0/1-valued bag) $\Vert r \Vert$ containing exactly one copy of each
tuple present in $r$ (regardless of the original
multiplicity). Finally, there is an operator $\mathtt{supp~r}$
returning a list of tuples representing the finite support of $r$.

"Rnil n" identifies the standard empty relation of arity "n", and similarly "Rone" is the standard 0-ary singleton containing exactly one copy of the empty tuple. We also provide "Rsingle n t", or the singleton relation containing the tuple "t" of arity "n", although this can easily be defined in terms of "Rone" and "sum".

In our approach, all the operations on abstract relations mentioned so far are declared but not concretely defined. When ADTs are used for programming, nothing more than the signature of all operations is needed, and indeed this suffices in our case as well if all we are interested in is defining the semantics of SQL in terms of abstract relations. However, proving theorems about this semantics would be impossible if we had no clue about what these operations do: how do we know that $\oplus$ really performs a multiset  union, and $\cap$ an intersection? To make reasoning on abstract relations possible without access to their implementation, we will require that any implementation shall provide some correctness criteria, or proofs that all operations behave as expected. 


The full definition of the correctness criteria for abstract relations as we formalized them in Coq is as follows:
\begin{coq}
  Parameter p_ext : 
    forall n, forall r s : R n, 
    (forall t, memb r t = memb s t) -> r = s.
  Parameter p_fs : 
    forall n, forall r : R n, forall t, 
    memb r t > 0 -> List.In t (supp r).
  Parameter p_fs_r : 
    forall n, forall r : R n, forall t, 
    List.In t (supp r) -> memb r t > 0.
  Parameter p_nodup : 
    forall n, forall r : R n, NoDup (supp r).
  Parameter p_plus : 
    forall n, forall r1 r2 : R n, forall t, 
    memb (plus r1 r2) t = memb r1 t + memb r2 t.
  Parameter p_minus : 
    forall n, forall r1 r2 : R n, forall t, 
    memb (minus r1 r2) t = memb r1 t - memb r2 t.
  Parameter p_inter : 
    forall n, forall r1 r2 : R n, forall t, 
    memb (inter r1 r2) t 
    = min (memb r1 t) (memb r2 t).
  Parameter p_times : 
    forall m n, forall r1 : R m, forall r2 : R n,
    forall t t1 t2, t = Vector.append t1 t2 -> 
    memb (times r1 r2) t = memb r1 t1 * memb r2 t2.
  Parameter p_sum : 
    forall m n, forall r : R m, 
    forall f : T m -> T n, forall t, 
    memb (sum r f) t = list_sum (List.map (memb r) 
      (filter (fun x => T_eqb (f x) t) (supp r))).
  Parameter p_rsum : 
    forall m n, forall r : R m, 
	forall f : T m -> R n, forall t,
    memb (rsum r f) t = list_sum (List.map 
	  (fun t0 => memb r t0 * memb (f t0) t) 
	  (supp r)).
  Parameter p_self : 
    forall n, forall r : R n, forall p t, 
    p t = false -> memb (sel r p) t = 0.
  Parameter p_selt : 
    forall n, forall r : R n, forall p t, 
    p t = true -> memb (sel r p) t = memb r t.
  Definition flatnat := fun n => 
    match n with 0 => 0 | _ => 1 end.
  Parameter p_flat : 
    forall n, forall r : R n, forall t, 
    memb (flat r) t = flatnat (memb r t).
  Parameter p_nil : forall n (t : T n), memb Rnil t = 0.
  Parameter p_one : forall t, memb Rone t = 1.
  Parameter p_single : 
    forall n (t : T n), memb (Rsingle t) t = 1.
  Parameter p_single_neq : 
    forall n (t1 t2 : T n), t1 <> t2 -> memb (Rsingle t1) t2 = 0.
\end{coq}

A first, important property is that relations must be extensional: in other words, any two relations containing the same tuples with the same multiplicities, are equal; this is not true of lists, because two lists containing the same elements in a different order are not equal. Relations should also be finitely supported, and we expect the support not to contain duplicates. 
The properties for the standard $0$-ary relations "Rnil" and "Rone" describe the standard $0$-ary relations, which implicitly employs the fact that the only $0$-tuple is the empty tuple.
The properties for "plus", "minus", "inter" express the behaviour of
disjoint union, difference, and intersection: for instance, a tuple $\#(r \oplus s, t)$ is equal to $\#(r,t) + \#(s,t)$. The behaviour of cartesian products is described as follows: if $r_1$ and $r_2$ are, respectively, an $m$-ary and an $n$-ary relation, and $t$ is an $(m+n)$-tuple, we can split $t$ into an $m$-tuple $t_1$ and an $n$-tuple $t_2$, and $\#(r_1 \times r_2,t) = \#(r_1,t_1) * \#(r_2,t_2)$. The behaviour of filtering ("p_self", "p_selt") depends on whether the filter predicate $p$ is satisfied or not: $\#(\sigma_p(r),t)$ is equal to $\#(r,t )$ if $p~t = \mathtt{true}$, but it is zero otherwise. 

The value of $\#(\Vert r \Vert,t)$ is one if $\#(r,t)$ is greater than zero, or  zero otherwise.
Finally, "p_sum" and "p_rsum" describe the behaviour of bag comprehensions by relating it to the support of the base relation: $\#(\sum_r~f,t)$ is equal to the sum of multiplicities of those elements $x$ of $r$ such that $t = f~x$; this value can be obtained by applying standard list functions to $\mathtt{supp}~r$; $\#(\biguplus_r~g,t)$ is equal to the sum of multiplicities of the elements $x$ of $r$ multiplied by the multiplicities of $t$ in $g~x$.

\subsection{A model of $K$-relations}
The properties of "R" that we have assumed describe a ``na\"ive'' presentation of $K$-relations: they really are nothing more than a list of desiderata, providing no argument (other than common sense) to support their own satisfiability. However, we show that an implementation of "R" (that is, in logical terms, a model of its axioms) can be given within the logic of Coq.

Crucially, our implementation relies on the assumption that the type "V" of values be totally ordered under a relation $\leq_\mathtt{V}$; consequently, tuples of type "T n" are also totally ordered under the corresponding lexicographic order $\leq_\mathtt{T~n}$. We then provide an implementation of "R n" by means of a refinement type:

\begin{coq}
  Definition R := fun n => { l : list (T n) & is_sorted l = true }.
  Definition memb {n} : T n -> R n -> nat 
    := fun A x => List.count_occ (projT1 A) x.
\end{coq}
where "is_sorted l" is a computable predicate returning "true" if and only if "l" is sorted according to the order $\leq_\mathtt{T~n}$.
The inhabitants of "R n" are dependent pairs $\langle l,H \rangle$, such that $l : \mathtt{T~n}$ and $H : \mathtt{is\_sorted}~l = \mathtt{true}$. The multiplicity function for relations "memb" is implemented by counting the number of occurrences of a tuple in the sorted list ("count_occ" is a Coq standard library function on lists).

The most important property that this definition must satisfy is extensionality. For any two sorted lists $l_1, l_2$ of the same type, we can indeed prove that whenever they contain the same number of occurrences of all elements, they must be equal: however, to show that $\langle l_1, H_1 \rangle = \langle l_2, H_2 \rangle$ (where $H_i : \mathtt{is\_sorted}~l_i = \mathtt{true}$) we also need to know that the two proofs $H_1$ and $H_2$ are equal. Knowing that $l_1 = l_2$, this is a consequence of uniqueness of identity proofs (UIP) on "bool", which is provable in Coq (unlike generalized UIP).

Operations on relations can often be implemented using the following scheme:
\begin{coq}
  Definition op {n} : R n -> R n -> ... -> R n := fun A B ... => 
    existT _ (sort (f (projT1 A) (projT1 B) ...)) (sort_is_sorted _).
\end{coq}
where "f" is some function of type "list (T n)" $\to$ "list (T n)" $\to \ldots \to$ "list (T n)". Given relations "A", "B" ... we apply "f" to the underlying lists "projT1 A", "projT1 B",...; then, we sort the result and we lift it to a relation by means of the dependent pair constructor "existT". The theorem "sort_is_sorted" states that
"is_sorted (sort l) = true" for all lists "l". The scheme is used to define disjoint union, difference and intersection:
\begin{coq}
  Definition plus {n} : R n -> R n -> R n 
    := fun A B => existT _ 
    (sort (projT1 A++projT1 B)) (sort_is_sorted _).
  Definition minus {n} : R n -> R n -> R n
    := fun A B => existT _ 
      (sort (list_minus (projT1 A) (projT1 B))) 
      (sort_is_sorted _).
  Definition inter {n} : R n -> R n -> R n
    := fun A B => existT _ 
      (sort (list_inter (projT1 A) (projT1 B))) 
      (sort_is_sorted _).
\end{coq}
For disjoint union, "f" is just list concatenation. For difference, we have to provide a function "list_minus", which could be defined directly by recursion in the obvious way; instead, we decided to use the following definition:
\begin{coq}
  Definition list_minus {n} : list (T n) -> list (T n) -> list (T n)
  := fun l1 l2 => let l := nodup _ (l1 ++ l2) in
   List.fold_left (fun acc x => 
    acc ++ repeat x (count_occ _ l1 x - count_occ _ l2 x)) l List.nil.
\end{coq}
This definition first builds a duplicate-free list "l" containing all
tuples that may be required to appear in the output. Then, for each tuple "x" in "l", we add to the output as many copies of "x" as required (this is the difference between the number of occurrences of "x" in "l1" and "l2"). The advantage of this definition is that it is explicitly based on the correctness property of relational difference: thus, the proof of correctness is somewhat more direct. The same approach can be used for intersection and, with adaptations, for cartesian product.
Finally, "sum", "rsum", "sel", and "flat" reflect respectively list map, concat-map, filter, and duplicate elimination.

We do not provide an operation to test for the emptiness of a relation, or to compute the number of tuples in a relation; however, this may be readily expressed by means of "sum": all we need to do is map all tuples to the same distinguished tuple. The simplest option is to use the empty tuple $\langle\rangle$ and check for membership:
\[
\mathtt{card}~S := \#(\sum_S (\lambda x.\langle\rangle), \langle\rangle)
\]
The correctness criterion for "card", stating that the cardinality of a relation is equal to the sum of the number of occurrences of all tuples in its support, is an immediate consequence of its definition and of the property "p_sum":
\begin{lemma}
$\mathtt{card}~S = \mathtt{list\_sum}~[ \#(S,x) \vert x \leftarrow \mathtt{supp}~S ]$
\end{lemma}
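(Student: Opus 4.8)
The plan is to derive this as a near-immediate corollary of the correctness criterion "p_sum". First I would unfold the definition of cardinality: $\mathtt{card}~S = \#\bigl(\sum_S(\lambda x.\snil),\snil\bigr)$. Then I would instantiate "p_sum" with $r := S$, $f := (\lambda x.\snil)$ and $t := \snil$, which rewrites the left-hand side to
\[
\mathtt{list\_sum}\bigl(\mathtt{map}~(\mathtt{memb}~S)~(\mathtt{filter}~(\lambda x.\,\mathtt{T\_eqb}~\snil~\snil)~(\mathtt{supp}~S))\bigr).
\]
Since $\mathtt{memb}~S~x$ is exactly $\#(S,x)$ by our notational convention, proving the lemma reduces to showing that this $\mathtt{filter}$ leaves $\mathtt{supp}~S$ unchanged.

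Next I would observe that the filtering predicate $\lambda x.\,\mathtt{T\_eqb}~\snil~\snil$ does not depend on $x$ and is constantly $\mathtt{true}$: the empty tuple is the unique inhabitant of the type $\mathtt{T}~0$ of $0$-tuples, so $\mathtt{T\_eqb}~\snil~\snil$ reduces to $\mathtt{true}$ (equivalently, this is reflexivity of the boolean tuple-equality test at the nil vector). Combining this with the standard list fact that filtering with a uniformly-true predicate is the identity, I get $\mathtt{filter}~(\lambda x.\,\mathtt{T\_eqb}~\snil~\snil)~(\mathtt{supp}~S) = \mathtt{supp}~S$; substituting back, the left-hand side becomes $\mathtt{list\_sum}(\mathtt{map}~(\mathtt{memb}~S)~(\mathtt{supp}~S))$, which is literally $\mathtt{list\_sum}~[\,\#(S,x) \vert x \leftarrow \mathtt{supp}~S\,]$, the right-hand side of the statement.

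I do not expect any real obstacle here — the lemma is genuinely a one-liner on top of "p_sum". The only point demanding (very mild) attention is the handling of the empty tuple: depending on how $\mathtt{T\_eqb}$ is defined — as the generic decidable equality on vectors, or by structural recursion — the equation $\mathtt{T\_eqb}~\snil~\snil = \mathtt{true}$ is either an appeal to its reflexivity lemma or holds definitionally; one also needs the trivial library lemma that $\mathtt{filter}$ with a constantly-true predicate returns its argument. Both are routine.
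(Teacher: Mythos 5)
Your proof is correct and matches the paper's own argument, which states that the lemma is an immediate consequence of the definition of $\mathtt{card}$ and the property \texttt{p\_sum}; you have simply spelled out the two routine details (the filter predicate $\mathtt{T\_eqb}~\snil~\snil$ being constantly true because the empty tuple is the unique $0$-tuple, and filtering by a constantly-true predicate being the identity) that the paper leaves implicit.
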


\section{Formalized semantics}\label{sec:semantics}
The formal semantics of SQL can be given as a recursively defined function or as an inductive judgment. Although in our development we considered both options and performed some of the proofs in both styles, we will here only discuss the latter, which has proven considerably easier to reason on.
As we intend to prove that three-valued logic (3VL) does not add expressive power to SQL compared to Boolean (two-valued) logic (2VL), we actually need two different definitions: a semantic evaluation based on 3VL (corresponding to the SQL standard), and a similar evaluation based on Boolean logic. We factorized the two definitions, which can be obtained by instantiating a Coq functor to the chosen notion of truth value.

\subsection{Truth values}
For the semantics of SQL conditions, we use an abstract type $\bfB$ of truth values: this can be instantiated to Boolean values ("bool") or to 3VL values ("tribool", with values "ttrue" or $\mathrm{T}$, "tfalse" or $\mathrm{F}$, and "unknown" or $\mathrm{U}$): in the latter case, we obtain the usual three-valued logic of SQL. Technically, 3VL refers either to Kleene's ``strong logic of indeterminacy'', or to Łukasiewicz's L3 logic, which share the same values and truth tables for conjunction, disjunction, and negation (Figure~\ref{fig:3VLops}); both logics also define an implication connective, with different truth tables: since implication plays no role in the semantics of SQL, it is omitted in our formalization.

\begin{figure}
\begin{tabular}{|c|c|c|c|}
\hline
$\land$ & $\mathbf{F}$ & $\mathbf{U}$ & $\mathbf{T}$ \\
\hline
$\mathbf{F}$ & $\mathrm{F}$ & $\mathrm{F}$ & $\mathrm{F}$ \\
\hline
$\mathbf{U}$ & $\mathrm{F}$ & $\mathrm{U}$ & $\mathrm{U}$ \\
\hline
$\mathbf{T}$ & $\mathrm{F}$ & $\mathrm{U}$ & $\mathrm{T}$ \\
\hline
\end{tabular}
\hspace{2cm}
\begin{tabular}{|c|c|c|c|}
\hline
$\lor$ & $\mathbf{F}$ & $\mathbf{U}$ & $\mathbf{T}$ \\
\hline
$\mathbf{F}$ & $\mathrm{F}$ & $\mathrm{U}$ & $\mathrm{T}$ \\
\hline
$\mathbf{U}$ & $\mathrm{U}$ & $\mathrm{U}$ & $\mathrm{T}$ \\
\hline
$\mathbf{T}$ & $\mathrm{T}$ & $\mathrm{T}$ & $\mathrm{T}$ \\
\hline
\end{tabular}
\hspace{2cm}
\begin{tabular}{|c|c|}
\hline
$A$ & $\lnot A$ \\
\hline
$\mathbf{F}$ & $\mathrm{T}$ \\
\hline
$\mathbf{U}$ & $\mathrm{U}$ \\
\hline
$\mathbf{T}$ & $\mathrm{F}$ \\
\hline
\end{tabular}
\caption{Three-valued logic truth tables.}\label{fig:3VLops}
\end{figure}

For convenience, "bool" and "tribool" will be packaged in modules "Sem2" and "Sem3" of type "SEM" together with some of their properties.
\begin{coq}
Module Type SEM (Db : DB).
  Import Db.
  Parameter B         : Type.
  Parameter btrue     : B.
  Parameter bfalse    : B.
  Parameter bmaybe    : B.
  Parameter band      : B -> B -> B.
  Parameter bor       : B -> B -> B.
  Parameter bneg      : B -> B.
  Parameter is_btrue  : B -> bool.
  Parameter is_bfalse : B -> bool.
  Parameter of_bool   : bool -> B.
  Parameter veq       : Value -> Value -> B.

  Hypothesis sem_bpred : forall n, 
   (forall l : list BaseConst, length l = n -> bool) 
    -> forall l : list Value, length l = n -> B.
End SEM.
\end{coq}
"SEM" declares the abstract truth values "btrue", "bfalse", "bmaybe" (in "Sem3", "bmaybe" is mapped to the uncertain value "unknown"; in "Sem2", both "bmaybe" and "bfalse" are mapped to "false"). "SEM" also declares abstract operations ("band", "bor", "bneg"), operations relating abstract truth values and Booleans ("is_btrue", "is_bfalse", "of_bool"), a "B"-valued equality predicate for SQL values (including "NULL"s), and an operation "sem_bpred" which lifts $n$-ary Boolean-valued predicates on constants to "B"-valued predicates on SQL values (including "NULL"s): this is used to define the semantics of SQL conditions using base predicates. A theorem "sem_bpred_elim" describes the behaviour of "sem_bpred": if the list of values "l" provided as input does not contain "NULL"s, it is converted to a list of constants "cl", then the base predicate "p" is applied to "cl"; this yields a Boolean value that is converted to "B" by means of "of_bool". If "l" contains one or more "NULL"s, "sem_bpred" will return "bmaybe".

\subsection{A functor of SQL semantics}
In Coq, when defining a collection of partial maps for expressions subject to well-formedness conditions, we can use an ``algorithmic approach'' based on dependently typed functions, or a ``declarative approach'' based on inductively defined judgments. The two alternatives come both with benefits and drawbacks; for the purposes of this formalization, consisting of dozens of cases with non-trivial definitions, we judged the declarative approach as more suitable, as it helps decouple proof obligations from definitions.
Our inductive judgments implement SQL semantics according to the following style. When a certain expression (query, table or condition) is well-formed for a context $\Gamma$, we expect its semantics to depend on the value assignments for the variables declared in $\Gamma$: we call such an assignment an \emph{environment} for $\Gamma$ (which has type $\mathtt{env}~\Gamma$ in our formalization); thus, we define a semantics that assigns to each well-formed expression an \emph{evaluation}, i.e. a function taking as input an environment, and returning as output a value, tuple, relation, or truth value. Subsequent proofs do not rely on the concrete structure of environments, but internally they are represented as lists of lists of values, which have to match the structure of $\Gamma$:

\begin{coq}
  Definition preenv := list (list Value).
  Definition env := fun g => { h : preenv &
   List.map (@List.length Name) g = List.map (@List.length Value) h }.
\end{coq}

Similarly to well-formedness judgments, we have judgments for the semantics of attribute names and terms, and five mutually defined judgments for the various expression types of SQL. Figure~\ref{fig:semtypes} summarizes the judgments, highlighting the type of the evaluation they return. In our notation, we use judgments $\gsem{\mathfrak{J}}$ with a superscript $\bfB$ denoting their definition can be instantiated to different notions of truth value, in particular, "bool" and "tribool"; we will use the notation $\bsem{\mathfrak{J}}$ and $\tsem{\mathfrak{J}}$ for the two instances. The semantics of attributes and terms does not depend on the notion of truth value, thus the corresponding judgments do not have a superscript. Concretely, our Coq formalization provides a module "Evl" for the judgments that do not depend on $\bfB$, and a functor "SQLSemantics" for the other judgments, which we instantiate with the "Sem2" and "Sem3" we described in the previous section.

We can prove that our semantics assigns only one evaluation to each SQL expression.
\begin{lemma}\label{lem:funeval}
For all judgments $\mathfrak{J}$, if $\gsem{\mathfrak{J}} \eval S$ and $\gsem{\mathfrak{J}} \eval S'$, then $S=S'$. 
\end{lemma}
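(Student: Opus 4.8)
The plan is to proceed by mutual induction on the derivation of the first evaluation $\gsem{\mathfrak{J}} \eval S$, inverting in each case the second derivation $\gsem{\mathfrak{J}} \eval S'$. The semantic judgments are syntax-directed: each shape of SQL expression --- a term, a condition of a given form, a query of a given form, a table, a table sequence, a nested query --- is the conclusion of exactly one rule, so inversion of the second derivation produces sub-derivations that line up one-for-one with the premises used in the first. Applying the induction hypotheses to the matching pairs of sub-derivations shows that the corresponding intermediate evaluations agree, and the final evaluations $S$ and $S'$, being built from these by the same semantic operation, are then equal by congruence. Because the judgments for conditions, queries, tables, table sequences and nested queries are mutually inductive (with those for attribute names and terms layered underneath), the first concrete step is to assemble a combined induction principle covering all of them at once, so that the induction hypothesis is available across judgment forms: the query case uses it for table sequences and conditions, the condition case uses it for nested queries, and so on.

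The atomic cases --- constants, $\kwnull$, $\kwtrue$, $\kwfalse$, the tests $t~\kwis~[\kwnot]~\kwnull$, the empty table sequence, and the constant relations "Rnil" and "Rone" --- are immediate: the rule has no premises and fixes $S$ outright. For the compound conditions ($\kwand$, $\kwor$, $\kwnot$, $\kwex$, $[\kwnot]~\kwin$, and $P^n(\vect{t})$), the compound queries ($\kwunion$, $\kwinters$, $\kwexcept$, with or without $\kwall$), the select and select-star forms, and the table constructors $\kwtb~x$, $\kwquery~Q$ together with the table-sequence cons, inversion exposes the evaluations of the immediate constituents; the induction hypothesis equates each constituent evaluation with its counterpart, and congruence closes the case.

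The main obstacle is dependency. An evaluation is a function from environments, and its codomain type depends on data reconstructed from the derivation --- most visibly the schema $\tau$ that a query synthesises (for $\kwsel~\ast$ this is $\flatten{\Gamma'}$, with $\Gamma'$ the context produced by the $\kwfrom$ clause) and, for the table-sequence judgment, the whole context $\Gamma'$ it returns. A priori two derivations of the same $\gsem{\mathfrak{J}}$ might carry different such data, so before $S = S'$ even typechecks as a homogeneous equation we must first invoke the table-sequence induction hypothesis to prove that the two synthesised contexts coincide, transport along that equality, and only then compare the relation-valued evaluations. This is handled exactly as the rest of the development is set up: the induction hypotheses are stated with heterogeneous (``John Major'') equality; the propositional-equality coercions are discharged using UIP on the relevant decidable types ("nat" for arities, "bool" for the flag bits, schemas and contexts being lists of attribute names); and functional extensionality is used to rewrite under the binders appearing in $\Sigma$-style evaluations, where the function argument of "sum" is itself built from a sub-evaluation. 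A related minor wrinkle is the $P^n(\vect{t})$ case, whose rule embeds a dependent Coq predicate taking a list $l$ together with a proof that its length equals $n$: equating two of its applications needs proof irrelevance for that length proof, which again follows from UIP on "nat". Aside from this coercion bookkeeping, every case is routine.
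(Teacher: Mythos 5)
Your proposal is correct and follows essentially the same route as the paper's proof: a mutual induction over the five mutually defined semantic judgments (with the attribute and term judgments handled separately underneath), using the syntax-directedness of the rules to invert the second derivation, and strengthening the statement for queries, tables and table sequences so that the synthesised schema or context is also shown equal before the (heterogeneous) equality of evaluations is compared. The attention you give to UIP, functional extensionality and the dependent coercions matches the machinery the formalisation explicitly assumes for exactly this purpose.
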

Thanks to the previous result, whenever $\sem{\mathfrak{J}} \eval S$, we are allowed to use the notation $\sem{\mathfrak{J}}$ for the semantic evaluation $S$, with no ambiguity.
\begin{figure}[t]
\footnotesize{
\[
\begin{array}{llll}
\mbox{Simple attributes} & \sem{\tau \vdash x} \eval S_x & \mbox{s.t. } S_x & : \mathtt{env}~[\tau] \to \mathtt{V}
\\
\mbox{Full attributes} & \sem{\Gamma \vdash n.x} \eval S_{n.x} & \mbox{s.t. } S_{n.x} & : \mathtt{env}~\Gamma \to \mathtt{V}
\\
\mbox{Terms} & \sem{\Gamma \vdash t} \eval S_{t} & \mbox{s.t. } S_t & : \mathtt{env}~\Gamma \to \mathtt{V}
\\
& \sem{\Gamma \vdash \vect{t}} \eval S_{\vect{t}} & \mbox{s.t. } S_{\vect{t}} & : \mathtt{env}~\Gamma \to \mathtt{T}~\len{\vect{t}}
\\
\mbox{Queries} & \gsem{\jq{\Gamma}{Q}{\tau}} \eval S_Q & \mbox{s.t. } S_Q & : \mathtt{env}~\Gamma \to \mathtt{R}~\len{\tau}
\\
\mbox{Nested queries} & \gsem{\jiq{\Gamma}{Q}} \eval S_Q & \mbox{s.t. } S_Q & : \mathtt{env}~\Gamma \to \mathtt{bool}
\\
\mbox{Tables} & \gsem{\jT{\Gamma}{T}{\tau}} \eval S_T & \mbox{s.t. } S_T & : \mathtt{env}~\Gamma \to \mathtt{R}~\len{\tau}
\\
\mbox{Frames} & \gsem{\jTl{\Gamma}{\vect{T:\tau}}{\Gamma'}} \eval S_{\vect{T}} & \mbox{s.t. } S_{\vect{T}} & : \mathtt{env}~\Gamma \to \mathtt{R}~\len{\mathtt{concat}~\Gamma'}
\\
\mbox{Generators} & \gsem{\jTl{\Gamma}{G}{\Gamma'}} \eval S_{G} & \mbox{s.t. } S_{G} & : \mathtt{env}~\Gamma \to \mathtt{R}~\len{\mathtt{concat}~\Gamma'}
\\
\mbox{Conditions} & \gsem{\jc{\Gamma}{c}} \eval S_c & \mbox{s.t. } S_c & : \mathtt{env}~\Gamma \to \bfB
\end{array}
\]
}
\caption{Formal semantics of SQL (types).}\label{fig:semtypes}
\end{figure}
Simple attributes are defined in a schema rather than a context: their semantics $\sem{\tau \vdash x}$ maps an environment for the singleton context $[\tau]$ to a value. Similarly, the semantics of fully qualified attributes $\sem{\Gamma \vdash n.x}$ maps an environment for $\Gamma$ to a value. In both cases, the output value is obtained by lookup into the environment.

The evaluation of terms $\sem{\jt{\Gamma}{t}}$ returns a value for $t$ given a certain environment $\gamma$ for $\Gamma$. In our definition, terms can be either full attributes $n.x$, constants $\bfk$, or "NULL". We have just explained the semantics of full attributes; on the other hand, constants and "NULL"s are already values and can thus be returned as such. The evaluation of term sequences $\sem{\Gamma \vdash \vect{t}}$, given an environment, returns the tuple of values corresponding to each of the terms and is implemented in the obvious way.

Queries and tables ($\gsem{\jq{\Gamma}{Q}{\tau}}$, $\gsem{\jT{\Gamma}{T}{\tau}}$) evaluate to relations whose arity corresponds to the length of their schema $\tau$ (written $\vert \tau \vert$). Existential subqueries evaluate to a non-emptiness test: their evaluation returns a Boolean which is "true" if, and only if, the query returns a non-empty relation. The evaluation of frames $\gsem{\jTl{\Gamma}{\vect{(T:\tau)}}{\Gamma'}}$ returns again a relation, whose arity corresponds to the arity of their cross join: this is obtained by flattening $\Gamma'$ and counting its elements; the judgment for generators operates in a similar way. Conditions evaluate to truth values in $\bfB$: in particular, the evaluation of logical values and connectives |TRUE|, |FALSE|, |AND|, |OR| and |NOT| exploits the operations "btrue", "bfalse", "band", "bor", and "bneg" provided to the functor by the input module "SEM"; similarly, atomic predicates are evaluated using the operation "sem_bpred", while to evaluate $c~\kwis~\kwtrue$, we first evaluate the condition $c$ recursively, obtaining a truth value in $\bfB$, then we pass this value to "is_btrue", which returns a "bool" (even when we are using 3VL), and finally coerce it back to $\bfB$ using the operation "of_bool" (this construction ensures that |IS TRUE| always evaluates to either "btrue" or "bfalse").

As for well-formedness judgments, we prove a weakening lemma:

\begin{lemma}
If $\gsem{\jt{\Gamma}{t}} \eval S$ then, for all $\Gamma'$, we have
\[
\gsem{\jt{\Gamma',\Gamma}{\mathtt{tm\_lift}~t~\vert \Gamma' \vert}} \eval \lambda\eta.\mathtt{subenv2}~\eta
\] 
where $\mathtt{subenv2} : \mathtt{env}~(\Gamma',\Gamma) \to \mathtt{env}~\Gamma$ takes an environment for a context obtained by concatenation and returns its right projection.
\end{lemma}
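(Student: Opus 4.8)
The plan is to argue by case analysis on the term $t$ (equivalently, by inversion on the derivation of $\gsem{\jt{\Gamma}{t}} \eval S$); no induction is needed because terms carry no proper subterms, and in particular the lemma for single terms does not require the mutually-defined judgment for term sequences. The conclusion asserts that the lifted term evaluates to $\lambda\eta.\,S\,(\mathtt{subenv2}~\eta)$, i.e. the composite of the original evaluation with the right projection on environments. For $t = \mathbf{k}$ and $t = \kwnull$ the argument is immediate: $\mathtt{tm\_lift}$ touches only $\mathtt{tmvar}$, so $\mathtt{tm\_lift}~t~|\Gamma'| = t$; the evaluation rule for a constant (resp. a null) forces $S$ to be the constant function returning the corresponding value; the same rule applies verbatim in the context $\Gamma',\Gamma$, producing the same constant function; and since $S$ ignores its argument, that function coincides with $\lambda\eta.\,S\,(\mathtt{subenv2}~\eta)$ once functional extensionality is applied.

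The attribute case $t = n.x$ is where the actual content lies, since $\mathtt{tm\_lift}~(n.x)~|\Gamma'| = (|\Gamma'|+n).x$. Inverting the hypothesis yields $\Gamma(n) = \kwsome~\sigma$, a subderivation $\sem{\sigma \vdash x} \eval S_x$ of the simple-attribute semantics, and $S = \lambda\gamma.\,S_x\,(\text{block}_n\,\gamma)$, where $\text{block}_n$ extracts the $n$-th sub-environment (of type $\mathtt{env}~[\sigma]$). I would then reassemble this derivation over $\Gamma',\Gamma$ from three facts: (i) $(\Gamma',\Gamma)(|\Gamma'|+n) = \kwsome~\sigma$ --- the \emph{same} schema $\sigma$ --- which is the standard behaviour of positional lookup on a concatenation; (ii) the subderivation $\sem{\sigma \vdash x} \eval S_x$ is reused unchanged, since the type of the simple-attribute judgment depends only on $\sigma$ and $x$, not on the enclosing context; and (iii) $\text{block}_{|\Gamma'|+n}\,\eta = \text{block}_n\,(\mathtt{subenv2}~\eta)$ for every $\eta : \mathtt{env}~(\Gamma',\Gamma)$, which holds because $\mathtt{subenv2}$ discards exactly the first $|\Gamma'|$ sub-environments. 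Feeding (i)--(iii) through the evaluation rule for full attributes gives $\gsem{\jt{\Gamma',\Gamma}{(|\Gamma'|+n).x}} \eval \lambda\eta.\,S_x\,(\text{block}_{|\Gamma'|+n}\,\eta)$, which rewrites to $\lambda\eta.\,S\,(\mathtt{subenv2}~\eta)$ modulo functional extensionality.

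I expect the main obstacle to be the dependent-type bookkeeping rather than the index arithmetic. The type $\mathtt{env}~\Gamma$ bundles a length-compatibility proof, and the sub-environment $\text{block}_n\,\gamma$ is only coerced into an $\mathtt{env}~[\sigma]$ via that proof together with $\Gamma(n) = \kwsome~\sigma$; transporting along the equality relating $(\Gamma',\Gamma)(|\Gamma'|+n)$ to $\Gamma(n)$ and along the re-bracketing of the environment into a $\Gamma'$-part and a $\Gamma$-part produces coercions that must be shown irrelevant before the two evaluations can be identified on the nose. This is exactly where the paper's standing assumptions are spent: heterogeneous (``John Major'') equality to compare the two index-dependent evaluations, UIP on the decidable data occurring in the dependent pairs, and functional extensionality to conclude equality of the two environment-indexed functions. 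In practice I would first isolate fact (iii) about $\mathtt{subenv2}$ and block lookup as an auxiliary lemma, then plug it, together with (i), into the attribute case while keeping each transport local so that UIP and John Major equality discharge them uniformly.
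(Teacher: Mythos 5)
Your proposal matches what the paper has in mind: the paper gives no explicit proof, describing weakening for terms only as ``almost painless'' once the lift function is defined, and a case analysis on $t$ with trivial constant/null cases and all the content in the attribute case is exactly what that amounts to. You are also right to read the stated conclusion $\lambda\eta.\,\mathtt{subenv2}~\eta$ as $\lambda\eta.\,S\,(\mathtt{subenv2}~\eta)$, and the three facts (i)--(iii) you isolate for the attribute case, plus the remark that the real work is transport/UIP/extensionality bookkeeping rather than index arithmetic, are the right ingredients.

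One correction to your claim that ``no induction is needed'': it is true that terms have no proper subterms, but in the mechanised semantics (see the appendix) the judgment for full attributes, \texttt{j\_fvar\_sem}, is not a single rule with a lookup premise $\Gamma(n)=\kwsome~\sigma$; it is defined by recursion on the de~Bruijn index via \texttt{jfs\_hd}/\texttt{jfs\_tl}, each \texttt{jfs\_tl} step peeling one schema off the front of the context and composing the evaluation with one more \texttt{subenv2}. So your step ``feed (i)--(iii) through the evaluation rule for full attributes'' does not go through as a one-shot inversion-and-reassembly: the derivation of $\gsem{\jt{\Gamma',\Gamma}{(|\Gamma'|+n).x}}\eval\cdots$ is built by an induction on $\Gamma'$, applying \texttt{jfs\_tl} once per schema of $\Gamma'$, and the accumulated projections compose to exactly your fact (iii); facts (i) and (iii) themselves are proved by that same induction rather than quoted as ``standard behaviour of positional lookup.'' With that restructuring (induction on $\Gamma'$ localized to the attribute case, or equivalently a separate weakening lemma for \texttt{j\_fvar\_sem}) the proof goes through as you describe, including the uses of functional extensionality and the irrelevance of the coercions.
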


\subsection{Discussion}
To explain the semantics of queries, let us consider the informal definition~\cite{guagliardo17}:
\[
\footnotesize{
\begin{array}{l}
\sem{\sfwc{\vect{t : x}}{\vect{T : \sigma}}{c}}~\eta =
\quad \left\{ k \cdot \sem{t} \eta' \middle\vert \#\left(\sem{\vect{T : \sigma}}~\eta, \vect{V} \right) = k, \sem{c}~\eta' = \kwtt \right\}
\end{array}
}
\]
where $\eta'$ is defined as the extension of evaluation $\eta$ assigning values $\vect{V}$ to fully qualified attributes from $\vect{T : \sigma}$ (in the notation used by~\cite{guagliardo17}, $\eta' := \eta \stackrel{\vect{V}}{\oplus} \ell(\vect{T : \sigma})$). This definition operates by taking the semantics of the tables in the $\kwfrom$ clause (their cartesian product). For each tuple $\vect{V}$ contained $k$ times in this multiset, we extend the environment $\eta$ with $\vect{V}$, obtaining $\eta'$. If $c$ evaluates to $\kwtt$ in the extended environment, we yield $k$ copies of $\sem{t}~\eta'$ in the result.

The definition above makes implicit assumptions (particularly, the fact that $\eta$ and $\eta'$ should be good environments for the expressions whose semantics is evaluated), and at the same time introduces a certain redundancy by computing the number $k$ of occurrences of $\vect{V}$ in the input tables, and using it to yield the same number of copies of output tuples.

In our formalization, the semantics above is implemented using abstract relations rather than multisets. While in the paper definition the environment $\eta'$ is obtained by shadowing names already defined in $\eta$, we can dispense with that since we rule out name clashes syntactically, thanks to the use of de~Bruijn indices. The implementation uses dependent types and some of the rules use equality proofs to allow premises and conclusions to typecheck: we will not describe these technical details here, and refer the interested reader to the Coq scripts.

\begin{minipage}{0.9\textwidth}
\begin{center}
\footnotesize{
\AxiomC{$\sem{\jTl{\Gamma}{G}{\Gamma'}} \eval S_{G}$}
\AxiomC{$\sem{\jc{\Gamma',\Gamma}{c}} \eval S_c$}
\AxiomC{$\sem{\jtl{\Gamma',\Gamma}{\vect{t}}} \eval S_{\vect{t}}$}
\TrinaryInfC{$
\begin{array}{l}
  \sem{\jq{\Gamma}{\sfwc{\vect{t : x}}{G}{c}}{\sigma'}} \eval \lambda \eta. \\
  \qquad \kwlet~p := \lambda \vect{v}. \mathtt{is\_btrue}~(S_c~([\Gamma' \mapsto \vect{v}] \app \eta))~\kwlin \\
  \qquad \kwlet~R := \sigma_p(S_G)~\eta)~\kwlin \\
  \qquad \kwlet~f := \lambda \vect{v}. S_{\vect{t}}~([\Gamma' \mapsto \vect{v}]\app \eta)~\kwlin~ \displaystyle \sum_{R}~f
  \end{array}$}
\DisplayProof
}
\end{center}
\end{minipage}

In this mechanized version, the input to the |SELECT| is generalized to one that may include lateral joins, by using $G = \vect{T_1:\sigma_1}~\kwlat~\ldots~\kwlat~\vect{T_n:\sigma_n}$ (we get the original version for $n=1$); the relation $R := \sigma_p(S_{G}~\eta)$ replaces the predicate in the multiset comprehension, whereas $f$ assumes the role of the output expression. Whenever a certain tuple $\vect{V}$ appears $k$ times in $R$, the relational comprehension operator adds $f~V$ to the output the same number of times, so it is unnecessary to make $k$ explicit in the definition. The operation $[\Gamma' \mapsto \vect{v}]$ creates an environment for $\Gamma'$ by providing a tuple $\vect{v}$ of correct length: this constitutes a proof obligation that can be fulfilled by noticing that each $\vect{v}$ ultimately comes from $\sem{\jTl{\Gamma}{G}{\Gamma'}}$, whose type is $\mathtt{env}~\Gamma \to \mathtt{R}~\len{\mathtt{concat}~\Gamma'}$. Since $G$ represents a telescope of lateral joins, its semantics deserves some attention. The interesting case is the following:

\begin{minipage}{0.9\textwidth}
\begin{center}
\footnotesize{
\AxiomC{$\sem{\jTl{\Gamma}{\vect{T:\sigma}}{\Gamma'}} \eval S_{\vect{T}}$}
\AxiomC{$\sem{\jTl{\Gamma',\Gamma}{G}{\Gamma''}} \eval S_G$}
\BinaryInfC{$
\begin{array}{l}
  \sem{\jTl{\Gamma}{\vect{T : \sigma}~\kwlat~G}{\Gamma'',\Gamma'}} \eval \lambda \eta. \\
  \qquad \kwlet~R := (S_{\vect{T}}~\eta)~\kwlin \\
  \qquad \kwlet~f := \lambda \vect{v}.(S_G~([\Gamma' \mapsto \vect{v}]\app \eta)) \times (\mathtt{R\_single}~\vect{v})~\kwlin~ \displaystyle \biguplus_{R}~f
  \end{array}$}
\DisplayProof
}
\end{center}
\end{minipage}

To evaluate a generator $\vect{T:\sigma}~\kwlat~G$ given an environment $\eta$, we first evaluate $\vect{T:\sigma}$ in $\eta$, obtaining a relation $R$; then, for each tuple $\vect{v}$ in $R$, we extend $\eta$ with that particular value of $\vect{v}$ and evaluate $G$ recursively in it; we take the product of the resulting relation with the singleton containing the tuple $\vect{v}$; finally, we perform a disjoint union for all the $\vect{v}$. Notice that in the absence of $\kwlat$ it would have sufficed to perform a product between the semantics of $\vect{T:\sigma}$ and that of $G$; that is not possible here, because we need to consider a different semantics of $G$ for each element of the semantics of $\vect{T:\sigma}$.

Perhaps a more intuitive way of implementing this semantics would have been a judgment in the form $\sem{\jq{\Gamma}{Q}{\tau}}~\eta \eval R$, where $\eta$ is an environment for $\Gamma$ and $R$ is the relation resulting from the evaluation of $Q$ in that specific environment; however, in the example above, we can see that, in order to compute the relation resulting from the evaluation of the query, the predicate $p$ is used to evaluate the condition $c$ in various different environments: this forces us to evaluate conditions to functions taking as input an environment, and due to the mutual definition of conditions and queries, the evaluation of queries must result in a function as well.

The appendix contains the full definition of the semantics we formalized. We only consider here the judgment used to evaluate |IN| conditions, as it deserves a brief explanation:
\begin{minipage}{\textwidth}
\footnotesize{
\AxiomC{$\sem{\jtl{\Gamma}{\vect{t}}} \eval S_{\vect{t}}$}
\AxiomC{$\sem{\jq{\Gamma}{Q}{\tau}} \eval S_Q$}
\BinaryInfC{$
\begin{array}{l}
 \sem{\jc{\Gamma}{\vect{t}~\kwin~Q}} \eval \lambda \eta. \\
 \qquad \kwlet~n^\kwtt := \kwcard~(\sigma_{p^\kwtt}(S_Q~\eta))~\kwlin \\
 \qquad \kwlet~n^\kwuu := \kwcard~(\sigma_{p^\kwuu}(S_Q~\eta))~\kwlin \\
 \qquad \kwif~(0 < n^\kwtt)~\kwthen~\mathtt{btrue} \\
 \qquad \kwelse~\kwif~(0 < n^\kwuu)~\kwthen~\mathtt{bmaybe} \\
 \qquad \kwelse~\mathtt{bfalse}
\end{array}
$}
}
\end{minipage}
The membership condition must bridge the gap between the three-valued logic of SQL and the Boolean logic used by abstract relations: in particular, to check whether a tuple $\vect{t}$ appears in the result of a query $Q$, we cannot simply evaluate $\vect{t}$ to $\vect{V}$ and $Q$ to $S$ and check whether $\#(S,\vect{V})$ is greater than zero, because in three-valued logic "NULL" is not equal to itself. Instead, given the semantics of $Q$, we compute the number $n^\kwtt$ of tuples that are equal to $\vect{V}$ and the number $n^\kwuu$ of the tuples of $S$ that are not different from $\vect{V}$ (i.e. the matching is up to the presence of some "NULL"s). If $n^\kwtt$ is greater than zero, then the condition evaluates to "btrue"; if $n^ \kwtt = 0$ but $n^\kwuu > 0$, the condition evaluates to "bmaybe"; if both values are zero, then the tuple is certainly not in the result of $Q$ and the condition evaluates to "bfalse".

The predicates $p^\kwtt$ and $p^\kwuu$ used in the definition are defined as follows:
\begin{align*}
p^\kwtt & := \lambda \vect{V}. \mathtt{fold\_right2}~(\lambda v,w,\mathtt{acc}. \mathtt{acc} \land \mathtt{is\_btrue}~(\mathtt{veq}~v~w))~\mathtt{true}~\vect{V}~(S_Q~\eta)
\\
p^\kwuu & := \lambda \vect{V}. \mathtt{fold\_right2}~(\lambda v,w,\mathtt{acc}. \mathtt{acc} \land \lnot \mathtt{is\_bfalse}~(\mathtt{veq}~v~w))~\mathtt{true}~\vect{V}~(S_Q~\eta)
\end{align*}
Value equality "veq : V ->   V ->  B" returns "bmaybe" when either of the two arguments is "NULL", otherwise corresponds to syntactic equality: "fold_right2" iterates "veq" on pairs of values from the two tuples $\vect{V}$ and $S_Q~\eta$. Although in Boolean logic a predicate is true precisely when it is not false, in "tribool" the $p^\kwtt$ and $p^\kwuu$ may assume different values.

\section{Validation of rewrite rules}\label{sec:rewrite}
Now that we have a formalized semantics of \SQLNull, it is a good time to show that it can be used to verify the soundness of some rewrite rules. The two rules we consider allow tables in the |FROM| clause of a query to be shuffled, and nested queries to be unnested. In the following statements,  given an index $n$ and schema $\sigma = x_1,\ldots,x_k$, we will write $n.\sigma$ as a shorthand for the term sequence $n.x_1,\ldots,n.x_k$; if $\vect{u} = u_1,\ldots,u_k$, we will write $\subst{n.\sigma}{\vect{u}}$ for the simultaneous substitution of $u_i$ for $x_i$, where $i = 1,\ldots,k$. The symbol $\simeq$ represents heterogeneous equality.

\begin{theorem}
Let $\len{\tau'} = \len{\sigma_1} + \len{\sigma_2}$, and $S, S'$ evaluations such that
\[
\footnotesize{
\begin{array}{l}
\sem{
 \jq[]{\Gamma}{
 \kwsel~\ast~
 \kwfrom~T_1 : \sigma_1, T_2 : \sigma_2
 }{\tau}
 } \eval S
\\
\sem{
 \jq[]{\Gamma}{
  \kwsel~(1.\sigma_1, 0.\sigma_2):\tau'~
  \kwfrom~T_2:\sigma_2,T_1:\sigma_1
  }{\tau'}
 } \eval S'
\end{array}
}
\]
Then for all $\eta : \mathtt{env}~\Gamma$, we have $S~\eta \simeq S'~\eta$.
\end{theorem}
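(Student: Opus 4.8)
The plan is to unfold both query denotations into canonical form --- legitimate because evaluation is deterministic (Lemma~\ref{lem:funeval}), so one may exhibit a single derivation and read $S$ and $S'$ off it --- reduce each to a cartesian product of the two |FROM|-table denotations, and then bridge the gap with a single commutativity lemma for $\times$ under $\sum$. Since the arity $\len{\sigma_1}+\len{\sigma_2}$ occurs in several guises (as itself, as $\len{\sigma_2}+\len{\sigma_1}$, and as $\len{\tau'}$, the last matched by the hypothesis), and since a table-list denotation is built as $\cdots\times\mathtt{Rone}$, introducing spurious $m+0$ arities that do not reduce in Coq, every equation below holds only up to heterogeneous equality, transporting along the evident $\bN$-equalities; I will not track these casts explicitly.

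Write $A := \sem{\Gamma \vdash T_1 : \sigma_1}\,\eta$ and $B := \sem{\Gamma \vdash T_2 : \sigma_2}\,\eta$. Because |FROM|-tables are typed against $\Gamma$, not against the accumulated context, $A$ and $B$ are literally the same objects in both queries, and (stripping the trailing $\mathtt{Rone}$ factor, a unit law from "p_times", "p_one", "p_ext") the |FROM| clause denotes $A\times B$ in the first query and $B\times A$ in the second. For the first query, the evaluation rule for |SELECT *| (in the appendix; the analogue of the |SELECT| rule shown above, with projection term $\dom{\Gamma'}$ and $\Gamma' = \sigma_1 \# \sigma_2$) gives $S\,\eta = \sum_{\sigma_p(A\times B)} f$, where $p$ is constantly $\mathtt{true}$ (the |WHERE| clause, elided in the statement, is |TRUE|) and $f\,v = \sem{\Gamma',\Gamma \vdash \dom{\Gamma'}}\,([\Gamma' \mapsto v]\app\eta)$; a routine computation (evaluating $\dom{\Gamma'}$ in $[\Gamma'\mapsto v]\app\eta$ recovers $v$ block by block) gives $f = \mathrm{id}$. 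For the second query, the |SELECT| rule with selected terms $1.\sigma_1,\,0.\sigma_2$ and $\Gamma'' = \sigma_2 \# \sigma_1$ gives $S'\,\eta = \sum_{\sigma_{p'}(B\times A)} f'$ with $p'$ constantly $\mathtt{true}$ and, since evaluating $1.\sigma_1,\,0.\sigma_2$ in $[\Gamma''\mapsto v']\app\eta$ exchanges the two blocks of $v'$, $f' = \mathrm{swap}$: the map sending $\mathtt{append}\,w\,u$ (for $w : \mathtt{T}\,\len{\sigma_2}$, $u : \mathtt{T}\,\len{\sigma_1}$) to $\mathtt{append}\,u\,w$. Since $\sigma_p r = r$ whenever $p$ is constantly true (from "p_selt" and "p_ext") and $\sum_r \mathrm{id} = r$ (a simpler instance of the lemma below), the goal reduces to $\sum_{B\times A}\mathrm{swap} = A\times B$.

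This key lemma is proved by extensionality: by "p_ext", fixing $t = \mathtt{append}\,t_1\,t_2$ with $t_1 : \mathtt{T}\,\len{\sigma_1}$ and $t_2 : \mathtt{T}\,\len{\sigma_2}$, it suffices to show $\#(\sum_{B\times A}\mathrm{swap},\,t) = \#(A\times B,\,t)$. The right-hand side is $\#(A,t_1)\cdot\#(B,t_2)$ by "p_times". For the left-hand side, "p_sum" rewrites it to $\mathtt{list\_sum}$ of $\#(B\times A,\,\cdot)$ over the sublist of $\mathtt{supp}\,(B\times A)$ on which $\mathrm{swap}\,x = t$; as $\mathrm{swap}$ is injective this sublist is empty or $[\mathtt{append}\,t_2\,t_1]$, and as $\mathtt{supp}$ has no duplicates ("p_nodup") there is no double-counting, so the sum equals $\#(B\times A,\,\mathtt{append}\,t_2\,t_1)$ when $\mathtt{append}\,t_2\,t_1 \in \mathtt{supp}\,(B\times A)$ and $0$ otherwise. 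By "p_fs"/"p_fs_r" that membership is equivalent to $\#(B\times A,\,\mathtt{append}\,t_2\,t_1) > 0$, so in both branches the value is $\#(B\times A,\,\mathtt{append}\,t_2\,t_1) = \#(B,t_2)\cdot\#(A,t_1)$ (again "p_times"), and commutativity of multiplication on $\bN$ closes the case.

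The mathematical content is thus just commutativity of the cartesian product; the real obstacle is the dependent-type bookkeeping. Every $\times$, every selection, and every $\sum$ above carries an arity that matches its target only after a rearrangement of $+$ (and Coq's $m+0$ does not compute), so the formal proof is a careful chain of transports along $\bN$-equalities --- with the hypothesis $\len{\tau'} = \len{\sigma_1}+\len{\sigma_2}$ supplying the last one, needed to make $S\,\eta$ and $S'\,\eta$ heterogeneously comparable --- under which equalities of $\mathtt{R}$-valued terms must be threaded as heterogeneous equalities. What keeps this tractable is reasoning exclusively through "p_ext", so that one argues about $\#(\cdot,\cdot)$ and never about the concrete sorted-list model, together with Lemma~\ref{lem:funeval}, so that the canonical evaluations are not re-derived by hand at each use.
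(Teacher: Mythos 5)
Your proposal is correct and follows essentially the same route as the paper's proof: invert (or, equivalently, use determinism of) the two semantic judgments, observe that the second query's select clause evaluates to the block-swapping function on tuples, and reduce the goal to a pointwise multiplicity comparison of the two cartesian products, closed by \texttt{p\_sum}, \texttt{p\_times}, and commutativity of multiplication, all modulo heterogeneous equality for the arity casts. The only difference is organizational --- you package the final step as a standalone lemma $\sum_{B\times A}\mathrm{swap} = A\times B$ and spell out the support/no-duplicates reasoning that the paper compresses into ``easily obtained by inversion on $S_{\kwfrom}$ and $S'_{\kwfrom}$.''
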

\begin{proof}
The proof proceeds by inversion on the derivation of the two semantic judgments; the hypothesis on the length of $\tau'$ is required for the select clause of the second query to be adequate. The goal simplifies to:
\[
\footnotesize{
\#\left( \sum_{\vect{v} \leftarrow S_{\kwfrom}~\eta} \vect{v}, r_1 \right) \simeq \#\left(\sum_{\vect{v} \leftarrow S'_{\kwfrom}~\eta} (S'_{\kwsel}~([\Gamma'' \mapsto \vect{v}]\app\eta)),r_2 \right)
}
\]
under the hypotheses $r_1 \simeq r_2$, $\sem{\jbT{\Gamma}{T_1:\sigma_1,T_2:\sigma_2}{\Gamma'}} \eval S_{\kwfrom}$, 
\linebreak
$\sem{\jbT{\Gamma}{T_2:\sigma_2,T_1:\sigma_1}{\Gamma''}} \eval S'_{\kwfrom}$, $\sem{\jtl{\Gamma'',\Gamma}{1.\sigma_1,0.\sigma_2}} \eval S'_{\kwsel}$. We prove by functional extensionality that the rhs is equal to $\#(\sum_{\vect{v} \leftarrow S'_{\kwfrom}~\eta} (\mathit{flip}~\vect{v},r_2)$, 
where $\mathit{flip}$ is the function that takes a vector of length $\len{\sigma_2}+\len{\sigma_1}$ and swaps the first $\len{\sigma_2}$ elements with the last $\len{\sigma_1}$. Then the goal becomes $\#(S_{\kwfrom},r_1) = \#(S'_{\kwfrom},\mathit{flip}~r_2)$,
which is easily obtained by inversion on 
$S_{\kwfrom}$ and $S'_{\kwfrom}$.
\end{proof}

\begin{theorem}
Let $S, S'$ be evaluations such that
\[
\footnotesize{
\begin{array}{l}
\sem{
 \jq[]{\Gamma}{
 \kwsel~\vect{t:x}~\kwfrom~\kwquery~
 (\kwsel~\vect{u:y}~\kwfrom~T:\sigma_2~\kwwhere~c) : \sigma_1
 }{\tau}
 } \eval S
\\
\sem{
 \jq[]{\Gamma}{
  \kwsel~(\vect{t:x})\subst{0.\sigma_1}{\vect{u}}~
  \kwfrom~T:\sigma_2~\kwwhere~c
 }{\tau'}
 } \eval S'
\end{array}
}
\]
Then for all $\eta : \mathtt{env}~\Gamma$, we have $S~\eta \simeq S'~\eta$.
\end{theorem}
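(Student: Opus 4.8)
The plan is to mirror the proof of the previous theorem: invert the two semantic derivations, normalise both evaluations into a single bag comprehension over a common base relation, and then reduce the goal to a term-substitution lemma. I first invert the derivation of $S$. Writing $Q_0 := \kwsel~\vect{u:y}~\kwfrom~T:\sigma_2~\kwwhere~c$ for the nested subquery, the outer query is a select-from-where query (without |DISTINCT|) whose |FROM| clause is the one-element table list $[(\kwquery~Q_0):\sigma_1]$ and whose |WHERE| clause is $\kwtrue$; by the select rule of Section~\ref{sec:semantics} --- and since $c = \kwtrue$ makes the selection predicate constantly \texttt{true}, so that $\sigma$ acts as the identity (\texttt{p\_selt}) --- its evaluation is $S~\eta \simeq \sum_{\vect v \leftarrow \sem{Q_0}~\eta} S^{\mathrm{out}}_{\kwsel}([\sigma_1 \mapsto \vect v]\app\eta)$, where $S^{\mathrm{out}}_{\kwsel}$ evaluates the outer select list $\vect t$ in context $[\sigma_1],\Gamma$ and the $\simeq$ absorbs the arity coercion $n+0 = n$ coming from the trailing $\times\,\mathtt{Rone}$ that the one-element |FROM| list carries (\texttt{p\_times}, \texttt{p\_one}). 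A further inversion unfolds $\sem{Q_0}~\eta$ into $\sum_{\vect w \leftarrow \sigma_{p_c}(S^{\mathrm{fr}}~\eta)} S^{\mathrm{in}}_{\kwsel}([\sigma_2 \mapsto \vect w]\app\eta)$, where $S^{\mathrm{fr}}$ evaluates the |FROM| list $[T:\sigma_2]$, $p_c$ is the Boolean reflection of the semantics of $c$ (in context $[\sigma_2],\Gamma$), and $S^{\mathrm{in}}_{\kwsel}$ evaluates $\vect u$. Inverting the derivation of $S'$ yields $S'~\eta = \sum_{\vect w \leftarrow \sigma_{p_c}(S^{\mathrm{fr}}~\eta)} S'_{\kwsel}([\sigma_2 \mapsto \vect w]\app\eta)$ with the \emph{same} $S^{\mathrm{fr}}$ and $p_c$, since $S'$ has the same |FROM| list and |WHERE| condition as $Q_0$ evaluated in the same context, so Lemma~\ref{lem:funeval} identifies the corresponding evaluations. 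Thus $S$ and $S'$ become bag comprehensions over a common base relation $\sigma_{p_c}(S^{\mathrm{fr}}~\eta)$.

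Next I collapse the nested comprehension in $S$ via the composition law $\sum_{\vect v \leftarrow (\sum_{\vect w \leftarrow R} g~\vect w)} f~\vect v = \sum_{\vect w \leftarrow R} f~(g~\vect w)$, which follows from \texttt{p\_sum} by a routine calculation with \texttt{list\_sum}, \texttt{List.map} and \texttt{filter} (instantiated here at arities $\len{\sigma_2} \to \len{\sigma_1} \to \len{\tau}$). This rewrites $S~\eta$ to $\sum_{\vect w \leftarrow \sigma_{p_c}(S^{\mathrm{fr}}~\eta)} S^{\mathrm{out}}_{\kwsel}([\sigma_1 \mapsto S^{\mathrm{in}}_{\kwsel}([\sigma_2 \mapsto \vect w]\app\eta)]\app\eta)$. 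Comparing term by term with $S'~\eta$ and using functional extensionality to rewrite under the $\sum$, the goal reduces to the pointwise identity, for every tuple $\vect w$ and with $\rho := [\sigma_2 \mapsto \vect w]\app\eta$,
\[
S^{\mathrm{out}}_{\kwsel}\bigl([\sigma_1 \mapsto S^{\mathrm{in}}_{\kwsel}~\rho]\app\eta\bigr) = S'_{\kwsel}~\rho,
\]
i.e.\ evaluating the select list $\vect t$ in an environment whose first block holds the values that $\vect u$ takes at $\rho$ equals evaluating $(\vect t)\subst{0.\sigma_1}{\vect u}$ directly at $\rho$.

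This remaining identity is a substitution lemma for the evaluation of term sequences, which I would prove by induction on $\vect t$. The key observation is that the substitution replaces the table bound at de~Bruijn index $0$ (schema $\sigma_1$) by the table $T$ (schema $\sigma_2$) and leaves every other index untouched, so it performs no shifting: a reference $n.x$ with $n \geq 1$ is unchanged and, on both sides, is resolved by looking $x$ up at index $n-1$ in the common tail $\eta$; a reference $0.x_i$ is resolved on the left to the $i$-th component of $S^{\mathrm{in}}_{\kwsel}~\rho$, which is exactly the value of $u_i$ at $\rho$, matching the right-hand side where $0.x_i$ has been replaced by $u_i$; constants and $\kwnull$ are handled immediately. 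Extending the argument componentwise over the term list closes the proof.

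I expect the main obstacle to be not any individual step but the dependent-type bookkeeping running through all of them. Heterogeneous equality is needed both to relate $S~\eta : \mathtt{R}~\len{\tau}$ with $S'~\eta : \mathtt{R}~\len{\tau'}$ --- for which one first notes that substitution preserves the select-list names, hence $\tau = \tau'$ --- and to carry the elimination of the trailing $\times\,\mathtt{Rone}$ (arity $n+0$ versus $n$) through the inversions; the final chain therefore splices one $\simeq$ step with several $=$ steps (the composition law, extensionality, the substitution lemma), so transitivity must be threaded together with the type equality $\mathtt{R}~\len{\tau} = \mathtt{R}~\len{\tau'}$. Phrasing the comprehension-composition law so that it respects the abstract interface of relations --- where only \texttt{p\_sum} is available and \texttt{R} carries no induction principle --- is the other delicate point, together with checking that the side conditions extracted by inversion (notably $\len{\vect u} = \len{\sigma_1}$, $\kwnodup~\sigma_1$, and the length constraints on the various select lists) make the |SELECT| clause of $S'$ adequate: exactly the sort of obligation discharged by the length hypothesis and the \texttt{flip} manipulation in the previous proof.
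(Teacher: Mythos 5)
Your proposal matches the paper's proof essentially step for step: inversion on both semantic derivations, identification of the common base relation $\sigma_{p_c}(S_{\kwfrom}~\eta)$ via Lemma~\ref{lem:funeval}, collapse of the nested comprehension through the composition law $\sum_{\sum_r f}g = \sum_r(g\circ f)$, and reduction to a pointwise substitution lemma proved by induction on the term sequence $\vect{t}$. The extra detail you give on the trailing $\times\,\mathtt{Rone}$ coercion and the heterogeneous-equality bookkeeping is exactly the material the paper defers to the Coq scripts, so the proposal is correct and takes the same route.
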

\begin{proof}
By inversion on the derivation of the two evaluations (and also using
\linebreak
Lemma~\ref{lem:funeval}), we know that 
$\sem{\jT{\Gamma}{T}{\sigma_2}} \eval S_{\kwfrom}$,
$\sem{\jtl{\sigma_1,\Gamma}{\vect{t}}} \eval S_{\kwsel}$,
\linebreak
$\sem{\jtl{\sigma_2,\Gamma}{\vect{u}}} \eval S'_{\kwsel}$,
$\sem{\jc{\sigma_2,\Gamma}{c}} \eval S_c$,
$\sem{\jtl{\sigma_2,\Gamma}{(\vect{t:x})\subst{0.\sigma_1}{\vect{u}}}} \eval S''_{\kwsel}$.
\linebreak
The lhs of the thesis computes to an abstract expression containing two nested $\sum$ operations; we prove the general result that $\sum_{\sum_r~f}~g = \sum_r~(g \circ f)$ and obtain the new lhs:
\[
\footnotesize{
  \sum_{\vect{w} \leftarrow 
    \sigma_{p_c} (S_{\kwfrom}~\eta)} 
	(S_{\kwsel} ([\sigma_1 \mapsto (S'_{\kwsel} ([\sigma_2 \mapsto \vect{w}]\app\eta))]\app\eta))
}
\]
where $p_c(\vect{w}) ) := S_c~([\sigma_2 \mapsto \vect{w}]\app\eta)$. The rhs of the goal computes to:
\[
\footnotesize{
\sum_{\vect{w} \leftarrow
  \sigma_{p_c} (S_{\kwfrom}~\eta)}
  (S''_{\kwsel}~([\sigma_2 \mapsto \vect{w}]\app\eta))
}
\]
Then, for the lhs and rhs to be equal, we only need to prove the following:
\[
\footnotesize{
(S_{\kwsel} ([\sigma_1 \mapsto (S'_{\kwsel} ([\sigma_2 \mapsto \vect{w}]\app\eta))]\app\eta))
\simeq (S''_{\kwsel}~([\sigma_2 \mapsto \vect{w}]\app\eta))
}
\]
This is a property of substitution that we prove by induction on the sequence of terms $\vect{t}$.
\end{proof}

\section{Elimination of three-valued logic}\label{sec:translation}
We now move to formalizing Guagliardo and Libkin's proof that SQL has the same expressive power under Boolean and three-valued logic, in the sense that for every query evaluated under 3VL, there exists another query with the same semantics in Boolean logic, and vice-versa. The proof is constructive: we exhibit an (algorithmic) transformation $(\cdot)^\kwtt$ which turns a query for 3VL-SQL into Boolean-SQL (a much simpler transformation $(\cdot)^*$ operates in the opposite direction). The transformation $(\cdot)^\kwtt$ is defined by mutual recursion on queries, tables, and conditions; more precisely, $(\cdot)^\kwtt$ is mutually defined with an auxiliary transformation $(\cdot)^\kwff$, operating on conditions only: the rationale is that while $c^\kwtt$ is "true" in Boolean logic when $c$ is "ttrue" in 3VL, $c^\kwff$ is "true" in Boolean logic when $c$ is "tfalse" in 3VL; as a corollary, when $c$ evaluates to 3VL "unknown", both $c^\kwtt$ and $c^\kwff$ are Boolean "false".  

\begin{figure}[!h]
\scriptsize{
\[
\begin{array}{ccc}
\begin{array}{rl}
\kwtrue^\kwtt & = \kwtrue 
\\
\kwfalse^\kwtt & = \kwfalse 
\\
(t~\kwis~\kwnull)^\kwtt & = t~\kwis~\kwnull 
\\
(t~\kwis~\kwnot~\kwnull)^\kwtt & = t~\kwis~\kwnot~\kwnull 
\\
(c~\kwis~\kwtrue)^\kwtt & = c^\kwtt
\\
P^n(\vect{t})^\kwtt & = P^n(\vect{t})
\\
(\kwex~Q)^\kwtt & = \kwex~Q^\kwtt
\\
(c_1~\kwand~c_2)^\kwtt & = c_1^\kwtt~\kwand~c_2^\kwtt
\\
(c_1~\kwor~c_2)^\kwtt & = c_1^\kwtt~\kwor~c_2^\kwtt
\\
(\kwnot~c)^\kwtt & = c^\kwff
\end{array}
& \quad &
\begin{array}{rl}
\kwtrue^\kwff & = \kwfalse
\\
\kwfalse^\kwff & = \kwtrue \\
(t~\kwis~\kwnull)^\kwff & = t~\kwis~\kwnot~\kwnull 
\\
(t~\kwis~\kwnot~\kwnull)^\kwff & = t~\kwis~\kwnull 
\\
(c~\kwis~\kwtrue)^\kwff & = \kwnot~c^\kwtt
\\
P^n(\vect{t})^\kwff & = \kwnot~P^n(\vect{t})~\kwand~\vect{t~\kwis~\kwnot~\kwnull}
\\
(\kwex~Q)^\kwff & = \kwnot~\kwex~Q^\kwtt
\\
(c_1~\kwand~c_2)^\kwff & = c_1^\kwff~\kwor~c_2^\kwff
\\
(c_1~\kwor~c_2)^\kwff & = c_1^\kwff~\kwand~c_2^\kwff
\\
(\kwnot~c)^\kwff & = c^\kwtt
\end{array}
\\
\phantom{nothing}
\\
\\
\multicolumn{3}{l}{
 (\vect{t}~\kwin~Q)^\kwtt= 
 (\vect{t}~\kwnot~\kwin~Q)^\kwff 
= \vect{t}~\kwin~Q^\kwtt }
\\
\multicolumn{3}{l}{
 (\vect{t}~\kwnot~\kwin~Q)^\kwtt = 
 (\vect{t}~\kwin~Q)^\kwff 
}
 \\
\multicolumn{3}{l}{
 \qquad =
   \kwnot~\kwex~(\kwsel~\ast~\kwfrom~[\mathit{table}~Q^\kwtt : \varphi(\vert \vect{t}\vert)]
}
\\
\multicolumn{3}{l}{
 \qquad \qquad  \kwwhere~\vect{(t^+_i~\kwis~\kwnull~\kwor~0.\varphi(\vert \vect{t} \vert)_i~\kwis~\kwnull~\kwor~t^+_1 = 0.\varphi(\vert \vect{t}\vert)_i)})
}
\\
\phantom{nothing}
\\
\multicolumn{3}{l}{
(\kwsel~[\kwdist]~\vect{t : x}~\kwfrom~G~\kwwhere~{c})^\kwtt 
=
\kwsel~[\kwdist]~\vect{t : x}~\kwfrom~G^\kwtt~\kwwhere~c
}
\\
\phantom{nothing}
\\
\multicolumn{3}{l}{
(\kwsel~[\kwdist]~\ast~\kwfrom~{\vect{T : \beta}}~\kwwhere~{c})^\kwtt 
=
\kwsel~[\kwdist]~\ast~\kwfrom~{\vect{T^\kwtt : \beta}}~\kwwhere~c
}
\\
\phantom{nothing}
\\
\multicolumn{3}{l}{
(Q_1~\{\kwunion \vert \kwinters \vert \kwexcept \}~Q_2)^\kwtt
= Q_1^\kwtt~\{\kwunion \vert \kwinters \vert \kwexcept \}~Q_2^\kwtt
}
\\
\phantom{nothing}
\\
\multicolumn{3}{l}{
G^\kwtt = ((\vect{T_1 : \beta_1})~\kwlat~\ldots~\kwlat~(\vect{T_k : \beta_k}))^\kwtt 
= (\vect{T_1^\kwtt : \beta_1})~\kwlat~\ldots~\kwlat~(\vect{T_k^\kwtt : \beta_k})
}
\end{array}
\]
}
\caption{Translation from 3VL-SQL to 2VL-SQL}\label{fig:translation}
\end{figure}

Figure~\ref{fig:translation} shows the definition of these transformations: these extend Guagliardo and Libkin's version by adding cases for |LATERAL| query inputs and for the |IS TRUE| test.
Most of the interesting things happen within conditions: while the definition of (|$\vect{t}$ IN Q|)$^\kwtt$ simply propagates the transformation to the nested query, the definition of (|$\vect{t}$ NOT IN Q|)$^\kwtt$ is more involved: it requires us to evaluate |Q|$^\kwtt$ as a nested query and then keep those tuples that are equal to $\vect{t}$ up to the presence of |NULL|s (either in $\vect{t}$ or in |Q|); if the resulting relation is not empty, the condition evaluates to true; in the formalization a "fold_right" operation is used to generate all the conditions on the elements of $\vect{t}$ and of the tuples from |Q|. The definition of this case is further complicated by the fact that the schema of $Q$ may not be well-formed, so we need to replace it with a new schema made of pairwise distinct names (generated on the fly by the $\varphi$ operation); furthermore, since in the translated query we use $\vect{t}$ inside a nested |SELECT *| query (thus, in an extended context), we use the "tm_lift" operation to increment the de~Bruijn indices it may contain (in the figure, we use the notation $t^+_i$ for this operation). Negations are translated as (|NOT $c$|)$^\kwtt = c^\kwff$; the transformation commutes in the other cases.

As for the negative translation $(\cdot)^\kwff$, it proceeds by
propagating the negation to the leaves of the conditional expression
(using de~Morgan's laws for |AND|s and |OR|s). The membership tests
(|$\vect{t}$ IN Q|)$^\kwff$ and (|$\vect{t}$ NOT IN Q|)$^\kwff$ are
defined as in the positive translation, but with their roles swapped. 
In the interesting case, we translate $P^n(\vect{t})^\kwff$ by checking that $P^n(\vect{t})$ is not true and that all elements of $\vect{t}$ are not null (here as well, the condition is computed by means of a "fold_right" on the elements of $\vect{t}$). The two translations are described by the following Coq code.

\begin{coq}
  Fixpoint ttcond (d: Db.D) (c : precond) : precond :=
    match c with
    | cndmemb true tl Q => cndmemb true tl (ttquery d Q)
    | cndmemb false tl Q => 
        let al := freshlist (length tl) in
          cndnot (cndex (selstar false 
            [(tbquery (ttquery d Q), al)]
            (List.fold_right (fun (ta : pretm * Name) acc =>
              let (t,a) := ta in
              cndand (cndor (cndnull true (tmvar (0,a))) 
                (cndor (cndnull true (tm_lift t 1)) 
                  (cndeq (tm_lift t 1) (tmvar (0,a))))) acc)
             cndtrue (List.combine tl al))))
    | cndex Q => cndex (ttquery d Q)
    | cndnot c1 => ffcond d c1
	(* ... *)
    end
  with ffcond (d: Db.D) (c : precond) : precond :=
    match c with
    | cndtrue => cndfalse
    | cndfalse => cndtrue
    | cndnull b t => cndnull (negb b) t
    | cndpred n p tml => 
        cndand (cndnot c) 
          (List.fold_right (fun t acc => 
            cndand (cndnull false t) acc) cndtrue tml)
    | cndmemb true tl Q => 
        let al := freshlist (length tl) in
          cndnot (cndex (selstar false 
            [(tbquery (ttquery d Q), al)]
            (List.fold_right (fun (ta : pretm * Name) acc =>
              let (t,a) := ta in
              cndand (cndor (cndnull true (tmvar (0,a))) 
                (cndor (cndnull true (tm_lift t 1)) 
                  (cndeq (tm_lift t 1) (tmvar (0,a))))) acc)
             cndtrue (List.combine tl al))))
    | cndmemb false tl Q => cndmemb true tl (ttquery d Q)
    | cndex Q => cndnot (cndex (ttquery d Q))
    | cndand c1 c2 => cndor (ffcond d c1) (ffcond d c2)
    | cndor c1 c2 => cndand (ffcond d c1) (ffcond d c2)
    | cndnot c1 => ttcond d c1
    end
  with ttquery (d: Db.D) (Q : prequery) : prequery :=
    match Q with
    | select b btm btb c => 
        select b btm (List.map (fun bt => 
          (tttable d (fst bt), snd bt)) btb) (ttcond d c)
    (* ... *)
    end
  with tttable (d: Db.D) (T : pretb) : pretb :=
    match T with
    | tbquery Q => tbquery (ttquery d Q)
    | _ => T
    end
  .
\end{coq}

We prove that the translation preserves the semantics of queries in the following theorem.
\begin{theorem} 
For all queries $Q$, if $\sem{\jq{\Gamma}{Q}{\tau}}^\threevl \eval S$, there exists $S'$ such that
\linebreak
$\sem{\jq{\Gamma}{Q^\kwtt}{\tau}}^\twovl \eval S'$ and for all $\eta : \mathtt{env}~\Gamma$, $S~\eta = S'~\eta$.
\end{theorem}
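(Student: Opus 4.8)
The plan is to establish the theorem as one instance of a single statement proved by mutual induction on the semantic derivations. Since $(\cdot)^\kwtt$ is defined by mutual recursion with $(\cdot)^\kwff$ and with translations on tables, table sequences and conditions, and since the semantics of queries and of conditions are themselves mutually recursive (a $\kwwhere$ clause calls the condition semantics, while $\kwin$ and $\kwex$ conditions call the query semantics), a packaged statement is unavoidable. For queries and tables the clause reads: the 2VL semantics of the translation equals, pointwise in the environment, the 3VL semantics of the original, with the same schema; for table sequences, likewise. For conditions we need the coupled invariant that really drives the construction: if $\tsem{\jc{\Gamma}{c}} \eval S_c$ then $\bsem{\jc{\Gamma}{c^\kwtt}} \eval S_c^\kwtt$ and $\bsem{\jc{\Gamma}{c^\kwff}} \eval S_c^\kwff$, and for every $\eta$ the Boolean $\mathtt{is\_btrue}(S_c^\kwtt\,\eta)$ holds exactly when $S_c\,\eta = \kwtt$ in 3VL, while $\mathtt{is\_btrue}(S_c^\kwff\,\eta)$ holds exactly when $S_c\,\eta = \kwff$ in 3VL. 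For nested ($\kwex$) subqueries the clause is that the Boolean non-emptiness verdict is preserved. In every case, part of the induction is re-deriving well-formedness of the translated expression with the same schema, since the semantic judgments only apply to well-formed syntax.

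For each rule I would invert the 3VL derivation, apply the relevant induction hypotheses, and reassemble a 2VL derivation for the translation. The query cases ($\kwsel$, $\kwsel~\ast$, and the three set/bag operators) should be routine, because the translation only rewrites the $\kwfrom$ tables and the $\kwwhere$ condition: by the table-sequence hypothesis the relation $\sigma_p(S_{\vect{T}}\,\eta)$ is unchanged, by the condition hypothesis the selection predicate $p$ (which tests $\mathtt{is\_btrue}$ of the condition's value) is unchanged, and the bag comprehension $\sum_{R} f$ with the untouched $S_{\vect{t}}$ gives the same relation; schemas survive because they are read off the syntax. Among conditions, $\kwtrue$, $\kwfalse$, the $\kwnull$ tests, $P^n(\vect{t})$, $\kwex$, and the Boolean connectives reduce to small facts about $\mathtt{Sem2}$ and $\mathtt{Sem3}$ that I would isolate first: that $\mathtt{is\_btrue}$ and $\mathtt{is\_bfalse}$ interact with $\mathtt{band}$, $\mathtt{bor}$, $\mathtt{bneg}$ in the Kleene fashion Figure~\ref{fig:translation} presupposes; that $\mathtt{sem\_bpred}$ yields $\kwtt$ in 3VL iff it yields $\kwtt$ in 2VL (by $\mathtt{sem\_bpred\_elim}$: the two agree except on the $\mathtt{bmaybe}$ result, which $\mathtt{is\_btrue}$ rejects in either logic); and the matching $(\cdot)^\kwff$ computations, e.g.\ $P^n(\vect{t})^\kwff$ is 2VL-true iff $P^n(\vect{t})$ is not 2VL-true and every $t_i$ is non-null, which is exactly ``$P^n(\vect{t})$ is $\kwff$ in 3VL''. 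The $\kwex$ case needs only the nested-query hypothesis together with $\mathtt{is\_btrue}(\mathtt{of\_bool}\,b) = b$ and $\mathtt{is\_bfalse}(\mathtt{of\_bool}\,b) = \lnot b$.

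The substantial work, and the step I expect to be the main obstacle, is the $\kwin$/$\kwnot~\kwin$ family, specifically $(\vect{t}~\kwnot~\kwin~Q)^\kwtt$ (equivalently $(\vect{t}~\kwin~Q)^\kwff$), translated to $\kwnot~\kwex$ of a freshly built $\kwsel~\ast$ query. There are two tasks. Well-formedness: the inner query carries a generated schema $\varphi(\len{\vect{t}})$ of pairwise distinct names, so I would record the $\mathtt{freshlist}$ facts (length and $\mathtt{NoDup}$); and it mentions the terms $\vect{t}$ under one extra table binding, so I would retype them with the $\mathtt{tm\_lift}$ weakening lemma of Section~\ref{sec:syntax} and retype $Q^\kwtt$ with the query hypothesis, assembling a $\mathtt{j\_inquery}$ derivation through the relaxed $\kwsel~\ast$ rule (which is what makes the construction legal even when $Q$'s own schema is ambiguous). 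Computation: by the semantic weakening lemma of Section~\ref{sec:semantics}, inside the built query each lifted term $t^+_i$ evaluates, in the extended environment $[\varphi(\len{\vect{t}}) \mapsto \vect{w}]\app\eta$, to the value of $t_i$ under $\eta$; hence the fold-generated filter $\bigwedge_i \bigl(t^+_i~\kwis~\kwnull \kwor 0.\varphi(\len{\vect{t}})_i~\kwis~\kwnull \kwor t^+_i = 0.\varphi(\len{\vect{t}})_i\bigr)$ is 2VL-true of a tuple $\vect{w}\in\bsem{Q^\kwtt}\,\eta$ exactly when $\vect{w}$ is ``not different from'' $\sem{\vect{t}}\,\eta$, which is the same predicate as the $p^\kwuu$ of the 3VL $\kwin$ semantics; by the query hypothesis $\bsem{Q^\kwtt} = \tsem{Q}$, so the cardinality of the inner relation equals the number $n^\kwuu$ from that semantics, and $\kwnot~\kwex$ is 2VL-true iff $n^\kwuu = 0$, which (since $n^\kwtt \le n^\kwuu$) is precisely ``$\vect{t}~\kwnot~\kwin~Q$ is $\kwtt$ in 3VL''. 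The easier direction $(\vect{t}~\kwin~Q)^\kwtt = \vect{t}~\kwin~Q^\kwtt$ then follows because in $\mathtt{Sem2}$ the $\kwin$ semantics collapses ($n^\kwuu$ and $n^\kwtt$ coincide there), so $\vect{t}~\kwin~Q^\kwtt$ is 2VL-true iff some tuple of $\tsem{Q}\,\eta$ matches $\sem{\vect{t}}\,\eta$ componentwise with no nulls, i.e.\ iff $n^\kwtt > 0$. Throughout this case the real friction is keeping the dependent-type bookkeeping --- schema and arity equalities, the heterogeneous equalities, and the environment-extension proof obligations --- under control while pushing the 2VL semantics of the constructed query through the added binder and the fold.
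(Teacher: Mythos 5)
Your proposal follows the same route as the paper: a mutual induction on the five mutually defined semantic judgments, with the key strengthened invariant for conditions that couples the positive and negative translations ($c^\kwtt$ is 2VL-true exactly when $c$ is 3VL-true, and $c^\kwff$ is 2VL-true exactly when $c$ is 3VL-false). The additional detail you give on the $\kwin$/$\kwnot~\kwin$ cases and the well-formedness obligations is consistent with, and more explicit than, the paper's brief proof sketch.
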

The proof of the theorem is by induction on the semantic judgments
yielding $S$: this is actually a mutual induction on the five mutually
defined evaluations. For the part of the proof that deals with
conditions, we need to prove a stronger statement that essentially says that $c^\kwtt$ evaluates to "true" only if  $c$ evaluates to "ttrue", and $c^\kwff$ evaluates to "true" only if $c$ evaluates to "tfalse": in other words, $c^\kwtt$ asserts the truth of $c$, while $c^\kwff$ asserts its falsehood.

An immediate question raised by this result asks whether a realistic
semantics for \SQLNull can be derived from a semantics that does not
have a special treatment of null values, just by translating input
queries under the the $(\cdot)^\kwtt$ transformation. The answer is
affirmative in principle: however, to prove the validity of rewrite
rules under that semantics, one would then need to reason not on the
original query $Q$, but on its translated version $Q^\kwtt$. This
would greatly complicate the proof since, recursively, one would
need to reason on conditions using two different induction
hypotheses for their positive and negative translation.

\section{Embedding the relational calculus}\label{sec:rc}
We now formalize a relational calculus to show that its normal forms
can be translated to SQL in a semantically preserving way. The
calculus we describe is a variant of the heterogeneous nested
relational calculus (\NRClsb~\cite{Ricciotti19dbpl,Ricciotti20fscd}),
which provides both set and bag semantics, enriched with a constant
$\kwnull$ to account for indeterminate values. All variants of \NRC
allow terms of nested collection type, which cannot be expressed in SQL directly; however, we 
will show that normal forms whose type is a flat relation can be translated to SQL. 

The terms of \NRClsb are defined by the following grammar:
\[
\begin{array}{lcl@{\hspace{1.2cm}}lcl@{\hspace{1.2cm}}lcl@{\hspace{1.2cm}}lcl}
M & ::=  & n \orelse \mathbf{k} \orelse \kwnull \orelse P^n(\vect{M_n}) \orelse \plempty_b(M) 
\\
  & \orelse & \kwtrue \orelse \kwfalse \orelse \kwisnull(M) \orelse \kwistrue(M) \orelse M_1 \land M_2 \orelse M_1 \lor M_2 \orelse \lnot M
\\
  & \orelse & \tuple{\vect{x = M}} \orelse M.x \orelse \kwtb~x
\\
  & \orelse & \emptyset_{b,\sigma} \orelse \setlit{M}_b \orelse \distinct M \orelse \promote M
\\
  & \orelse & M_1 \cup M_2 \orelse M_1 - M_2
              \orelse \comprehension{M_1 \mid M_2} \orelse M_1~\kwwhere~M_2

\end{array}
\]
Variables are represented as de~Bruijn indices $n$. The grammar
provides empty collections and singletons, along with the standard
operations of union, intersection, and difference; empty collections
$\emptyset$ and singletons $\setlit{M}$ are annotated with a subscript
$b$ representing their kind, which can be $\kwset$ or $\kwbag$; empty
collections are additionally annotated with their schema $\sigma$; the
other collection operations do not require annotations. There are also
operations $\distinct$ and $\promote$, which respectively convert a
bag into a set by duplicate elimination, and promote a set to a bag in
which each element has multiplicity equal to 1. A comprehension
$\comprehension{M_1 \mid M_2}$ binds a variable in $M_1$:
semantically, this corresponds to the union of the $M_1[V/0]$ for all
values $V$ in the collection $M_2$ (this is a set or bag union
depending on whether $M_1$ and $M_2$ are sets or bags); $M_1$ and
$M_2$ are called the head and the generator of a comprehension
respectively. The one-armed conditional $M_1~\kwwhere~ M_2$ is
equivalent to $M_1$ when $M_2$ is true, and to an empty collection
otherwise. The emptiness test $\plempty_b(M)$ is annotated with a
Boolean depending on whether its argument is a set or a bag.

Tuples with named fields $\tuple{\vect{x = M}}$, and tuple projections $M.x$ are standard; null values $\kwnull$, constants $\mathbf{k}$, standard Boolean operations and constants, the test for nullness $\kwisnull(M)$, the test for truth $\kwistrue(M)$, custom predicates $P^n(\vect{M_n})$, and table references $\kwtb~x$ are similar to the corresponding SQL concepts of Section~\ref{sec:syntax}. 

The abstract syntax above corresponds to the following Coq implementation.
\begin{coq}
  Inductive tm :=
  | cst     : BaseConst -> tm
  | null    : tm
  | pred    : forall n, (forall l : list BaseConst, length l = n -> bool) 
                -> list tm -> tm
  | rctrue  : tm
  | rcfalse : tm
  | isnull  : tm -> tm              (* $\kwisnull(M)$ *)
  | istrue  : tm -> tm              (* $\kwistrue(M)$ *)
  | rcand   : tm -> tm -> tm         (* $M \land N$ *)
  | rcor    : tm -> tm -> tm         (* $M \lor N$ *)
  | rcnot   : tm -> tm              (* $\lnot M$ *)
  | var     : nat -> tm
  | mktup   : list (Name * tm) -> tm
  | proj    : tm -> Name -> tm
  | tab     : Name -> tm
  | nil     : bool -> Scm -> tm       (* $\emptyset_{b,\sigma}$ *)
  | single  : bool -> tm -> tm        (* $\setlit{M}_b$ *)
  | union   : tm -> tm -> tm          (* $M \cup N$ *)
  | diff    : tm -> tm -> tm          (* $M - N$ *)
  | comprn  : tm -> tm -> tm          (* $\comprehension{M \mid N}$ *)
  | cwhere  : tm -> tm -> tm          (* $M~\kwwhere~N$ *)
  | dist    : tm -> tm               (* $\distinct M$ *)
  | prom    : tm -> tm               (* $\promote M$ *)
  | empty   : bool -> tm -> tm        (* $\plempty_b(M)$ *)
\end{coq}
The most important difference between this concrete syntax and the abstract one is that where the latter uses subscripts $\kwbag$, $\kwset$, the former employs a Boolean which is true for sets, and false for bags.

In this formalization, we are only interested in assigning meaning to RC \emph{normal forms}, corresponding to the terms in this grammar:
\[
	\begin{array}{rcll}
	M & ::= & \emptyset_{\kwbag,\sigma} \orelse \bigcup \vect{D} & \text{bag collections} \\
	D & ::= & \comprehension{\setlit{V}~\kwwhere~B \mid \vect{G}} & \text{bag comprehensions} \\
	G & ::= & \kwtb~t \orelse \promote L \orelse M - M' & \text{bag comprehension generators}\\
	L & ::= & \emptyset_{\kwset,\sigma} \orelse \bigcup \vect{D} & \text{set collections} \\
	C & ::= & \comprehension{\setlit{V}~\kwwhere~B \mid \vect{F}} & \text{set comprehensions} \\
	F & ::= & \distinct (\kwtb~t) \orelse \distinct(M - M)' & \text{set comprehension generators}\\
	V & ::= & n \orelse \tuple{\vect{x = X}} & \text{tuples} \\
	B & ::= & \kwtrue \orelse \kwfalse \orelse \kwisnull~X \orelse \kwistrue~B & \text{conditions} \\
	& \orelse & p^n(\vect{X}) \orelse \plempty_\kwbag(M) \orelse \plempty_\kwset(L) & \\
	& \orelse & B \land B' \orelse B \lor B' \orelse \lnot B & \\
	X & ::= & \mathbf{k} \orelse \kwnull \orelse n.x & \text{base expressions}
	\end{array}
\]

In Coq, we define normal forms by means of an inductive judgment described in Fig.~\ref{fig:RCjudg}. Similarly to the grammar, the judgment partitions normal forms in various categories depending on their type: base expressions, tuples with a certain schema $\sigma$ ($\kwtuple~\sigma$), conditional tests ($\kwcond$), and collections of tuples ($\kwcoll~b,\sigma$), where $b$ can be $\kwbag$ or $\kwset$. Collections in normal form are defined as unions of nested comprehensions, thanks to auxiliary categories $\kwdisj~ b,\sigma$ and $\kwgen~ b,\sigma$ representing respectively comprehensions and comprehension generators.

\begin{figure}
\scriptsize{
\begin{tabular}{c}
\multicolumn{1}{l}{
%
\fbox{Variables (\texttt{j\_var})}
\hspace{1cm}
\AxiomC{$x \notin \sigma$}
\UnaryInfC{$x \# \sigma \vdash x$}
\DisplayProof
\hspace{1cm}
\AxiomC{$x \neq y$}
\AxiomC{$\sigma \vdash x$}
\BinaryInfC{$y \# \sigma \vdash x$}
\DisplayProof
}
\\

\\
\multicolumn{1}{l}{\fbox{Base expressions (\texttt{j\_nbase})}}
\\
\rVS
\AxiomC{$\phantom{A}$}
\UnaryInfC{$\jrcb{\Gamma}{\mathbf{k}}$}
\DisplayProof
\hspace{.7cm}
\AxiomC{$\phantom{A}$}
\UnaryInfC{$\jrcb{\Gamma}{\kwnull}$}
\DisplayProof
\hspace{.7cm}
\AxiomC{$\Gamma(n) = \kwsome~\sigma$}
\AxiomC{$\sigma \vdash x$}
\BinaryInfC{$\jrcb{\Gamma}{M.x}$}
\DisplayProof
\\

\\

\multicolumn{1}{l}{
%
\fbox{Tuples (\texttt{j\_ntuple})}
\hspace{1cm}
\AxiomC{$\Gamma(n) = \kwsome~\sigma$}
\UnaryInfC{$\jrct{\Gamma}{n}{\sigma}$}
\DisplayProof
\hspace{1cm}
\AxiomC{(for all $i$: $\jrcb{\Gamma}{M_i}$)}
\AxiomC{$\kwnodup~\vect{x}$}
\BinaryInfC{$\jrct{\Gamma}{\tuple{\vect{x = M}}}{\vect{x}}$}
\DisplayProof
}
\\

\\
\multicolumn{1}{l}{\fbox{Collections (\texttt{j\_ncoll})}}
\\
\rVS
\AxiomC{$\phantom{A}$}
\UnaryInfC{$\jrcs{\Gamma}{\emptyset_{b,\sigma}}{b,\sigma}$}
\DisplayProof
\hspace{1cm}
\AxiomC{$\jrcd{\Gamma}{M}{b,\sigma}$}
\UnaryInfC{$\jrcs{\Gamma}{M}{b,\sigma}$}
\DisplayProof
\hspace{1cm}
\AxiomC{$\jrcs{\Gamma}{M}{b,\sigma}$}
\noLine\UnaryInfC{$\jrcs{\Gamma}{N}{b,\sigma}$}
\UnaryInfC{$\jrcs{\Gamma}{M \cup N}{b,\sigma}$}
\DisplayProof
\\

\\
\multicolumn{1}{l}{\fbox{Disjuncts (\texttt{j\_ndisj)}}}
\\
\rVS
\AxiomC{$\jrct{\Gamma}{M}{\sigma}$}
\noLine\UnaryInfC{$\jrcc{\Gamma}{N}$}
\UnaryInfC{$\jrcd{\Gamma}{\setlit{M}_b~\kwwhere~N}{b,\sigma}$}
\DisplayProof
\hspace{1cm}
\AxiomC{$\jrcg{\Gamma}{N}{b,\tau}$}
\noLine\UnaryInfC{$\jrcd{\tau\#\Gamma}{M}{b,\sigma}$}
\UnaryInfC{$\jrcd{\Gamma}{\comprehension{M \mid N}}{b,\sigma}$}
\DisplayProof
\\

\\
\multicolumn{1}{l}{\fbox{Generators (\texttt{j\_ngen})}}
\\
\rVS
\AxiomC{$D(t) = \kwsome~\sigma$}
\UnaryInfC{$\jrcg{\Gamma}{\kwtb~t}{\kwbag,\sigma}$}
\DisplayProof
\hspace{1cm}
\AxiomC{$\jrcs{\Gamma}{M}{\kwset,\sigma}$}
\UnaryInfC{$\jrcg{\Gamma}{\promote M}{\kwbag,\sigma}$}
\DisplayProof
\\
\rVS
\AxiomC{$\jrcs{\Gamma}{M}{\kwbag,\sigma}$}
\AxiomC{$\jrcs{\Gamma}{N}{\kwbag,\sigma}$}
\BinaryInfC{$\jrcg{\Gamma}{M-N}{\kwbag,\sigma}$}
\DisplayProof
\hspace{1cm}
\AxiomC{$D(t) = \kwsome~\sigma$}
\UnaryInfC{$\jrcg{\Gamma}{\distinct (\kwtb~t)}{\kwset,\sigma}$}
\DisplayProof
\\
\rVS
\hspace{1cm}
\AxiomC{$\jrcs{\Gamma}{M}{\kwbag,\sigma}$}
\AxiomC{$\jrcs{\Gamma}{N}{\kwbag,\sigma}$}
\BinaryInfC{$\jrcg{\Gamma}{\distinct (M-N)}{\kwset,\sigma}$}
\DisplayProof
\\

\\
\multicolumn{1}{l}{\fbox{Conditions (\texttt{j\_ncond})}}
\\
\rVS

\AxiomC{$\phantom{A}$}
\UnaryInfC{$\jrcc{\Gamma}{\{\kwtrue \vert \kwfalse\}}$}
\DisplayProof
\hspace{1cm}
\AxiomC{$\jrcb{\Gamma}{M}$}
\UnaryInfC{$\jrcc{\Gamma}{\kwisnull(M)}$}
\DisplayProof
\hspace{1cm}
\AxiomC{$\jrcc{\Gamma}{M}$}
\UnaryInfC{$\jrcc{\Gamma}{\kwistrue(M)}$}
\DisplayProof
\\
\rVS
\AxiomC{$\vert \vect{t} \vert = n$} 
\AxiomC{$\jtl{\Gamma}{\vect{t}}$}
\BinaryInfC{$\jrcc{\Gamma}{P^n(\vect{t})}$}
\DisplayProof
\hspace{1cm}
\AxiomC{$\Gamma \vdash M : \kwcoll~b$}
\UnaryInfC{$\jrcc{\Gamma}{\plempty_b(M)}$}
\DisplayProof
\\
\rVS
\AxiomC{$\jrcc{\Gamma}{M}$}
\AxiomC{$\jrcc{\Gamma}{N}$}
\BinaryInfC{$\jrcc{\Gamma}{M \{\land \vert \lor\} N}$}
\DisplayProof
\hspace{1cm}
\AxiomC{$\jrcc{\Gamma}{M}$}
\UnaryInfC{$\jrcc{\Gamma}{\lnot~M}$}
\DisplayProof
\end{tabular}
}
\caption{Relational Calculus normal forms.}\label{fig:RCjudg}
\end{figure}

\subsection{Semantics}
We provide semantic evaluation judgments for RC terms using the same
approach we presented in Section~\ref{sec:semantics} for SQL queries:
as shown in Figure~\ref{fig:rcsemtypes}, there is a separate judgment
for each of the syntactic categories of terms in normal form. All
terms are interpreted using 3VL rather than Boolean logic.

\begin{figure}[t]
\footnotesize{
\[
\begin{array}{llll}
\mbox{Base exp.} & \sem{\Gamma \vdash E} \eval S_E & \mbox{s.t. } S_E & : \mathtt{env}~\Gamma \to \mathtt{V}
\\
\mbox{Tuples} & \sem{\jrct{\Gamma}{L}{\sigma}} \eval S_{L} & \mbox{s.t. } S_{L} & : \mathtt{env}~\Gamma \to \mathtt{T}~\len{\sigma}
\\
\mbox{Collections} & \sem{\jrcs{\Gamma}{M}{b,\sigma}} \eval S_{M} & \mbox{s.t. } S_M & : \mathtt{env}~\Gamma \to \mathtt{R}~\len{\sigma}
\\
\mbox{Disjuncts} & \sem{\jrcd{\Gamma}{C}{b,\sigma}} \eval S_C & \mbox{s.t. } S_C & : \mathtt{env}~\Gamma \to \mathtt{R}~\len{\sigma}
\\
\mbox{Generators} & \sem{\jrcg{\Gamma}{G}{b,\sigma}} \eval S_G & \mbox{s.t. } S_G & : \mathtt{env}~\Gamma \to \mathtt{R}~\len{\sigma}
\\
\mbox{Conditions} & \sem{\jrcc{\Gamma}{c}} \eval S_c & \mbox{s.t. } S_c & : \mathtt{env}~\Gamma \to \mathtt{tribool}
\end{array}
\]
}
\caption{Formal semantics of the Relational Calculus (types).}\label{fig:rcsemtypes}
\end{figure}

The evaluation of a base expression maps an environment to a value; valuations of sequences of base expressions return tuples of values, with arity corresponding to the length of the sequence; similarly, the evaluation of an RC tuple returns a tuple of values, with arity corresponding to the length of the tuple schema. Collections (and the auxiliary categories of disjuncts and generators) are mapped to evaluations returning relations, whose arity matches the schema of the input expression. Finally, the evaluation of conditions returns a truth value from $\mathtt{tribool}$.

Simple attributes are defined in a schema rather than a context: their semantics $\sem{\tau \vdash x}$ maps an environment for the singleton context $[\tau]$ to a value. Similarly, the semantics of fully qualified attributes $\sem{\Gamma \vdash n.x}$ maps an environment for $\Gamma$ to a value. In both cases, the output value is obtained by lookup into the environment.

The evaluation of terms $\sem{\jt{\Gamma}{t}}$ returns a value for $t$ given a certain environment $\gamma$ for $\Gamma$. In our definition, terms can be either full attributes $n.x$, constants $\bfk$, or "NULL". We have just explained the semantics of full attributes; on the other hand, constants and "NULL"s are already values and can thus be returned as such. The evaluation of term sequences $\sem{\Gamma \vdash \vect{t}}$, given an environment, returns the tuple of values corresponding to each of the terms and is implemented in the obvious way.

\subsection{Conversion to SQL}
Finally, in Figure~\ref{fig:RCxlatetypes} and~\ref{fig:RCxlate}, we formalize type and definition of the translation of normal form RC terms to SQL expressions: just like the RC semantics, this definition comprises several mutually inductive judgments, following the structure of normal forms rather than that of general RC expressions: this allows us to translate base expressions to SQL terms, tuples to sequences of SQL terms, conditions to SQL conditions, and collections to SQL queries. Comprehension generators are translated to SQL tables (which can be database tables or inner queries to be used in the $\kwfrom$ clause of an external query). Finally, disjuncts must return the three clauses of a $\kwsel-\kwfrom-\kwwhere$ statement: these are returned separately as a triple (for technical reasons related to the fact that recursion is needed to collect all these items in the case of nested comprehensions), and it is up to the collection translation judgment to compose them into a single SQL statement.

\begin{figure}[t]
\footnotesize{
\[
\begin{array}{lllll}
\mbox{Base exp.} & \xlate{\Gamma \vdash E} = M & \mbox{s.t. } & M & : \mathtt{Sql.pretm}
\\
\mbox{Tuples} & \xlate{\jrct{\Gamma}{L}{\sigma}} = \vect{M} & \mbox{s.t. } & M & : \mathtt{list~Sql.pretm}
\\
\mbox{Collections} & \xlate{\jrcs{\Gamma}{M}{b,\sigma}} = M' & \mbox{s.t. } & M' & : \mathtt{Sql.prequery}
\\
\mbox{Disjuncts} & \xlate{\jrcd{\Gamma}{C}{b,\sigma}} = (\vect{N},c,\vect{T}) & \mbox{s.t. } & \vect{N} & : \mathtt{list~Sql.pretm} \\
  & & & c & : \mathtt{Sql.precond} \\
  & & & \vect{T} & : \mathtt{list~Sql.pretb}
\\
\mbox{Generators} & \xlate{\jrcg{\Gamma}{G}{b,\sigma}} = T & \mbox{s.t. } & T & : \mathtt{Sql.pretb}
\\
\mbox{Conditions} & \xlate{\jrcc{\Gamma}{c}} = c' & \mbox{s.t. } & c' & : \mathtt{Sql.precond}
\end{array}
\]
}
\caption{Relational Calculus translation to SQL (types).}\label{fig:RCxlatetypes}
\end{figure}

\begin{figure}
\scriptsize{
\begin{tabular}{c}

\multicolumn{1}{l}{\fbox{Base expressions (\texttt{j\_base\_x})}}
\\
\rVS
\AxiomC{$\phantom{A}$}
\UnaryInfC{$\xlate{\jrcb{\Gamma}{\mathbf{k}}} = \mathbf{k}$}
\DisplayProof
\hspace{.7cm}
\AxiomC{$\phantom{A}$}
\UnaryInfC{$\xlate{\jrcb{\Gamma}{\kwnull}} = \kwnull$}
\DisplayProof
\hspace{.7cm}
\AxiomC{$\phantom{A}$}
\UnaryInfC{$\xlate{\jrcb{\Gamma}{n.x}} = n.x$}
\DisplayProof
\\

\\

\multicolumn{1}{l}{
%
\fbox{Tuples (\texttt{j\_tuple\_x})}
%
\hspace{1cm}
\AxiomC{(for all $i$: $\xlate{\jrcb{\Gamma}{M_i}} = N_i$)}
\UnaryInfC{$\xlate{\jrct{\Gamma}{\tuple{\vect{x = M}}}{\vect{x}}} = \vect{N}$}
\DisplayProof
}
\\

\\
\multicolumn{1}{l}{\fbox{Collections (\texttt{j\_coll\_x})}}
\\
\rVS
\AxiomC{$\phantom{A}$}
\UnaryInfC{$\xlate{\jrcs{\Gamma}{\emptyset_{b,\sigma}}{b,\sigma}} = \mathtt{sql\_nil}~\sigma$}
\DisplayProof
\hspace{1cm}
\AxiomC{$\xlate{\jrcd{\Gamma}{M}{b,\sigma}} = (\vect{N},c,\vect{G})$}
\UnaryInfC{$\xlate{\jrcs{\Gamma}{M}{b,\sigma}} = \mathtt{sql\_select}~b~(\vect{N},\sigma)~\vect{G}~c$}
\DisplayProof
\\
\rVS
\AxiomC{$\xlate{\jrcd{\Gamma}{M_1}{b,\sigma}} = (\vect{N_1},c_1,\vect{G_1})$}
\noLine\UnaryInfC{$\xlate{\jrcs{\Gamma}{M_2}{b,\sigma}} = N_2$}
\UnaryInfC{$\xlate{\jrcs{\Gamma}{M_1 \cup M_2}{b,\sigma}} = (\mathtt{sql\_select}~b~(\vect{N_1},\sigma)~\vect{G}~c)~\kwunion_b~N_2$}
\DisplayProof
\\

\\
\multicolumn{1}{l}{\fbox{Disjuncts (\texttt{j\_disj\_x)}}}
\\
\rVS
\AxiomC{$\xlate{\jrct{\Gamma}{M}{\sigma}} = \vect{M'}$}
\noLine\UnaryInfC{$\xlate{\jrcc{\Gamma}{N}} = N'$}
\UnaryInfC{$\xlate{\jrcd{\Gamma}{\setlit{M}_b~\kwwhere~N}{b,\sigma}} = (\vect{M'},N',[])$}
\DisplayProof
\hspace{1cm}
\AxiomC{$\xlate{\jrcg{\Gamma}{N}{b,\tau}} = N'$}
\noLine\UnaryInfC{$\xlate{\jrcd{\tau\#\Gamma}{M}{b,\sigma}} = (\vect{M'}, P', \vect{R'})$}
\UnaryInfC{$\xlate{\jrcd{\Gamma}{\comprehension{M \mid N}}{b,\sigma}} = (\vect{M'}, P', (N'\#\vect{R'}))$}
\DisplayProof
\\

\\
\multicolumn{1}{l}{\fbox{Generators (\texttt{j\_gen\_x})}}
\\
\rVS
\AxiomC{$D(t) = \kwsome~\sigma$}
\UnaryInfC{$\xlate{\jrcg{\Gamma}{\kwtb~t}{\kwbag,\sigma}} = \kwtb~t$}
\DisplayProof
\hspace{1cm}
\AxiomC{$\xlate{\jrcs{\Gamma}{M}{\kwset,\sigma}} = M'$}
\UnaryInfC{$\xlate{\jrcg{\Gamma}{\promote M}{\kwbag,\sigma}} = \kwquery~M'$}
\DisplayProof
\\
\rVS
\AxiomC{$\xlate{\jrcs{\Gamma}{M}{\kwbag,\sigma}} = M'$}
\AxiomC{$\xlate{\jrcs{\Gamma}{N}{\kwbag,\sigma}} = N'$}
\BinaryInfC{$\xlate{\jrcg{\Gamma}{M-N}{\kwbag,\sigma}} = \kwquery~(M'~\kwexcept~\kwall~N')$}
\DisplayProof
\\
\rVS
\hspace{1cm}
\AxiomC{$D(t) = \kwsome~\sigma$}
\UnaryInfC{$\xlate{\jrcg{\Gamma}{\distinct (\kwtb~t)}{\kwset,\sigma}} = \kwquery~(\mathtt{sql\_distinct}~(\kwtb~t))$}
\DisplayProof
\\
\rVS
\hspace{1cm}
\AxiomC{$\xlate{\jrcs{\Gamma}{M}{\kwbag,\sigma}} = M'$}
\AxiomC{$\xlate{\jrcs{\Gamma}{N}{\kwbag,\sigma}} = N'$}
\BinaryInfC{$\xlate{\jrcg{\Gamma}{\distinct (M-N)}{\kwset,\sigma}} = \kwquery~(\mathtt{sql\_distinct}~(\kwquery~(M'~\kwexcept~\kwall~N')))$}
\DisplayProof
\\

\\
\multicolumn{1}{l}{\fbox{Conditions (\texttt{j\_cond\_x})}}
\\
\rVS
\AxiomC{$\phantom{A}$}
\UnaryInfC{$\xlate{\jrcc{\Gamma}{\kwtrue}} = \kwtrue$}
\DisplayProof
\hspace{1cm}
\AxiomC{$\phantom{A}$}
\UnaryInfC{$\xlate{\jrcc{\Gamma}{\kwfalse}} = \kwfalse$}
\DisplayProof
\\
\rVS
\AxiomC{$\xlate{\jrcc{\Gamma}{M}} = M'$}
\noLine\UnaryInfC{$\xlate{\jrcc{\Gamma}{N}} = N'$}
\UnaryInfC{$\xlate{\jrcc{\Gamma}{M \land N}} = M'~\kwand~N'$}
\DisplayProof
\hspace{1cm}
\AxiomC{$\xlate{\jrcc{\Gamma}{M}} = M'$}
\noLine\UnaryInfC{$\xlate{\jrcc{\Gamma}{N}} = N'$}
\UnaryInfC{$\xlate{\jrcc{\Gamma}{M \lor N}} = M'~\kwor~N'$}
\DisplayProof
\\
\rVS
\AxiomC{$\xlate{\jrcc{\Gamma}{M}} = M'$}
\UnaryInfC{$\xlate{\jrcc{\Gamma}{\lnot~M}} = \kwnot~ M'$}
\DisplayProof
\\
\rVS
\AxiomC{$\xlate{\jrcb{\Gamma}{M}} = M'$}
\UnaryInfC{$\xlate{\jrcc{\Gamma}{\kwisnull(M)}} = M'~\kwis~\kwnull$}
\DisplayProof
\hspace{1cm}
\AxiomC{$\xlate{\jrcc{\Gamma}{M}} = M'$}
\UnaryInfC{$\xlate{\jrcc{\Gamma}{\kwistrue(M)}} = M'~\kwis~\kwtrue$}
\DisplayProof
\\
\rVS
\AxiomC{$\vert \vect{t} \vert = n$} 
\AxiomC{$\xlate{\jtl{\Gamma}{\vect{t}}} = \vect{t'}$}
\BinaryInfC{$\xlate{\jrcc{\Gamma}{P^n(\vect{t})}} = P^n(\vect{t'})$}
\DisplayProof
\hspace{1cm}
\AxiomC{$\xlate{\Gamma \vdash M : \kwcoll~b,\sigma} = M'$}
\UnaryInfC{$\xlate{\jrcc{\Gamma}{\plempty_b(M)}} = \mathtt{sql\_empty}~M'~\sigma$}
\DisplayProof
\end{tabular}
}
\\
\begin{align*}
\mathtt{sql\_nil}~\vect{x} := &\ \kwsel~\vect{\kwnull : x}~\kwfrom~[]~\kwwhere~\kwfalse
\\
\mathtt{sql\_select}~\kwbag~(\vect{t,x})~\vect{G}~c := &\ \kwsel~\vect{t : x}~\kwfrom~G_1~\kwlat~\ldots~\kwlat~G_n~\kwwhere~c
\\
\mathtt{sql\_select}~\kwset~(\vect{t,x})~\vect{G}~c := &\ \kwsel~\kwdist~\vect{t : x}~\kwfrom~G_1~\kwlat~\ldots~\kwlat~G_n~\kwwhere~c
\\
\mathtt{sql\_distinct}~T~\sigma := &\ \kwsel~\kwdist~*~\kwfrom~T:\sigma~\kwwhere~\kwtrue
\\
\mathtt{sql\_empty}~M~\sigma := &\ \kwnot~\kwex~\kwsel~*~\kwfrom~(\kwquery~M) : \sigma~\kwwhere~\kwtrue
\end{align*}
\caption{Relational Calculus translation to SQL.}\label{fig:RCxlate}
\end{figure}

The translation rules use some additional definitions as useful shorthands: $\mathtt{sql\_nil}$ returns an SQL query returning an empty relation of a certain schema; $\mathtt{sql\_select}$ composes its input into a $\kwsel-\kwfrom-\kwwhere$ statement: an important point to note is that all the inputs to this query are declared as $\kwlat$ due to the fact that the in the relational calculus, in a nested comprehension of the form $\bigcup\setlit{\bigcup\setlit{L \mid M} \mid N}$, $M$ is allowed to reference the tuples in $N$: therefore, similar dependencies must be allowed in the output of the translation as well. Another auxiliary definition $\mathtt{sql\_distinct}$ uses $\kwsel~\kwdist~*$ to deduplicate an input table with a given schema; $\mathtt{sql\_empty}$ constructs an SQL condition which is true whenever a certain query evaluates to an empty relation.


We are able to prove that the translation above is correct, by showing
that the semantics of an RC collection expression is equal to that of
the corresponding SQL query:
\begin{theorem}
Suppose $\sem{\jrcs{\Gamma}{M}{b,\sigma}} \eval S_M$; then, for all $M'$ such that
\linebreak
$\xlate{\jrcs{\Gamma}{M}{b,\sigma}} = M'$, there exists $S_{M'}$ such that
$\sem{\jq{\Gamma}{M'}{\sigma}}^\threevl \eval S_{M'}$ and for all $\eta : \mathtt{env}~\Gamma$, we have $S_M~\eta = S_{M'}~\eta$.
\end{theorem}
The proof of the theorem is by induction on the semantic judgment yielding $S_M$, followed by inversion on the translation of $M$ to $M'$.
More precisely, the proof uses mutual induction on the four mutually defined judgments for the semantics of collections, disjuncts, generators, and conditions.

\section{Related work}
\label{sec:related}

\subsubsection*{Semantics of query languages with incomplete information and nulls}
Nulls arise from the need for \emph{incomplete information} in
databases, which was appreciated from an early stage. Codd~\cite{Codd79}
made one of the first proposals based on null values and three-valued
logic, though it was criticized early on due to semantic
irregularities and remains a controversial
feature~\cite{Rubinson07,Grant08}.  A great deal of subsequent
research has gone into proposing semantically satisfying approaches
to incomplete information, in which a database with null values (or
other additional constructs) is viewed as representing a \emph{set of
  possible worlds}, and we wish to find \emph{certain} query answers
that are true in all possible worlds.  Many of these techniques are
surveyed by van der Meyden~\cite{vandermeyden98ldbis}, but most such techniques
either make query answering intractable (e.g. coNP-hard), have
semantic problems of their own, or both.  However, SQL's standard
behaviour remains largely as proposed by Codd, leading database
researchers such as Libkin~\cite{libkin14pods} to propose revisiting the
topic with an eye towards identifying \emph{principled} approaches to
incomplete information that are \emph{realistic} relative to the
standard capabilities of relational databases.  For example,
Libkin~\cite{Libkin16} compares certain answer semantics with SQL's actual
semantics, shows that SQL's treatment of nulls is neither sound nor
complete with respect to certain answers, and proposes modifications
to SQL's semantics that restore soundness or completeness while
remaining (like plain SQL) efficiently implementable.

Some
work has explored the semantics and logical properties of nulls in
set-valued relational queries, but did not grapple with SQL's
idiosyncrasies or multiset semantics~\cite{franconi12amw}.
Guagliardo and Libkin~\cite{guagliardo17} were the first to define a semantics that is a
realistic model of SQL's actual behaviour involving both multisets and
nulls.  They empirically validated a (Python) implementation of the
semantics against the behaviour of real database systems such as
PostgreSQL and MySQL, and confirmed some minor but nontrivial known
discrepancies between them in the process.  In addition they gave
(paper) proofs of the main results relating the SQL semantics,
three-valued and two-valued semantics.  Our work complements and
deepens this work by making all notions of their semantics precise and
formal, and formally proving their main result relating the
three-valued and two-valued semantics.

%

 Because our formalization follows Guagliardo and Libkin's on-paper
 presentation closely, it benefits indirectly from their extensive
 experimental validation.  Nevertheless, there remains a
small ``formalization gap'' between our work and theirs in the sense
that our (formally validated) Coq definitions might differ from their
(empirically validated) Python implementation.  So, in addition to
extending the coverage of SQL features as discussed below, it could be
worthwhile to derive an executable semantics from our definitions and
empirically validate it against the same examples they used.

\subsubsection*{Formalizations of query languages}
%

Malecha et al.~\cite{malecha10popl} formalized components of a relational database
engine (including a front-end providing a SQL-like relational core,
optimization laws including side-conditions, and an implementation of
B+-trees) in Coq using the YNot framework.  Their work (like most
prior formalizations) employs set semantics; while the data model
allows for fields to have optional types, the behaviour of missing
values in primitive operations is not discussed, and their semantics
is the standard two-valued, set-theoretic interpretation of relational
algebra.  The main technical challenge in this work was verifying the
correctness of imperative algorithms and pointer-based data structures
used in efficient database implementations.
Benzaken et al.~ \cite{benzaken14esop} formalized the relational data model, going
beyond the core relational operations in Malecha et al.'s
formalization to include integrity constraints (functional
dependencies).  They formalize a number of algorithms from database
theory whose standard presentations are imprecise, and showed that
careful attention to variable binding and freshness issues is
necessary to verify them.  Their formalization included proofs of
correctness of relational rewrite rules (with respect to the set-theoretic
semantics) but did not directly consider SQL queries, multiset
semantics, or features such as nulls.

Chu et al.~\cite{Chu17} presented a new approach to formalizing and reasoning
about SQL, called \HoTTSQL.  \HoTTSQL 
uses homotopy type theory to formalize SQL with multiset semantics,
correlated subqueries, and aggregation in Coq.  \HoTTSQL is based on
the intriguing insight (inspired by work on semiring-valued database
query semantics~\cite{Green07}) that we can define multisets as
\emph{functions} mapping tuples to cardinalities. They propose
representing cardinalities using certain (finite) \emph{types} thanks
to the univalence axiom; this means that Coq's strong support for
reasoning about types can be brought to bear, dramatically simplifying
many proofs of query equivalences.  However, since \HoTTSQL does not
consider nulls or three-valued logic, it validates query equivalences
that become unsound in the presence of nulls.  Unfortunately,
it does not
appear straightforward to extend the \HoTTSQL approach of conflating
types with semiring annotations to handle SQL-style three-valued logic
correctly.  In addition, the adequacy of \HoTTSQL's approach requires
proof. It should also be noted that the univalence axiom used by homotopy 
type theory and Streicher's K axiom required to work with John Major equality, 
which we used in our formalization, are incompatible: this would make it
challenging to merge the two efforts.


Most recently, Benzaken and Contejean~\cite{Benzaken19} proposed a formal semantics for a
subset of SQL (\SQLCoq) including all of the above-mentioned features: multiset
semantics, nulls, grouping and aggregation. 
SQL has well-known idiosyncrasies arising from interactions among these features: for example, the two queries
\begin{sql}
SELECT COUNT(field) FROM T
SELECT COUNT(*) FROM T
\end{sql}
are not equivalent.  The first one counts the number of
\emph{non-null} \verb|field| values in $T$, while the second counts the
number of rows, ignoring their (possibly null) values.  These two
queries \emph{are} provably equivalent in the \HoTTSQL semantics, but
are correctly handled by \SQLCoq.  

Moreover, Benzaken and Contejean
highlight the complexity of SQL's treatment of grouping and
aggregation for \emph{nested subqueries}, propose a semantics for such
queries, and prove correctness of translations from \SQLCoq to a
multiset-valued relational algebra \SQLAlg.  Their work focuses on bag
semantics and uses a Coq library for finite bags, and treats duplicate
elimination as a special case of grouping.
While grouping can be expressed, in principle,
by desugaring to correlated subqueries (an approach proposed by Buneman et al.~\cite{buneman94}
and adopted by \HoTTSQL, which we could also adapt to our setting) these features of \SQLCoq highlight
many intricacies of the semantics of grouping that make it difficult to get such a desugaring right.

We can highlight several aspects where our work
complements \SQLCoq: (1) superficially, their approach does not deal
with named aliases for table records, requiring additional renaming;
(2) their novel semantics is tested on example queries but not
evaluated as thoroughly as Guagliardo and Libkin's; (3) we present
well-formedness criteria for \SQLNull, which are more accurate than those considered
for \SQLCoq, ensuring that queries with unbound table references should not be accepted; 
(4) their work does not consider formal results such as
the equivalence of 2-valued and 3-valued semantics, which to the best
of our knowledge has not been investigated in the presence of grouping and
aggregation; (5) the fragment of SQL formalized in our work allows lateral joins, 
an SQL:1999 feature that is becoming increasingly popular thanks to the support by recent versions of major DBMSs;
(6) building on the support for lateral joins, we are able to formalize a verified translation from the 
nested relational calculus to SQL, which is of interest for the theory of programming languages supporting
language-integrated query. Finally, because of the complexity of
their semantics (required to handle SQL's idiosyncratic treatment of
grouping and aggregation), our formalization may be preferable for
proving properties of queries that lack these features; it would be
enlightening to formally relate our formalization with theirs, and
establish whether equivalences proved in \SQLNull are still valid in \SQLCoq.

Formalization has also been demonstrated to be useful for designing
and implementing new query languages and verified transformations, for
example in the QCert system~\cite{Auerbach17}.  This work considers a
nested version of relational calculus, and supports a subset of SQL as
a source language, but does not appear to implement
Guagliardo and Libkin's semantics for SQL nulls.  It could be
interesting to incorporate support for SQL-style nulls into such a
verified query compiler.


\section{Conclusion}
\label{sec:concl}


We have mechanically checked the recently proposed semantics of
\linebreak \SQLNull~\cite{guagliardo17} and proved the main results
about its metatheory.  Our work should be compared to two recent
formalizations, \HoTTSQL~\cite{Chu17}, and \SQLCoq~\cite{Benzaken19}.
Compared to \HoTTSQL, our representation of multisets is elementary
and it does not appear straightforward to adjust \HoTTSQL to handle
null values, since its treatment of predicates using homotopy type
theory assumes standard two-valued logic.  Compared to \SQLCoq, our
semantics is simpler and closely modeled on the on-paper semantics of
\cite{guagliardo17}, which was thoroughly tested against real database
implementations. Our work is also the first formalization of SQL to consider queries with lateral inputs.
On the negative side, compared to both \HoTTSQL and
\SQLCoq, our formalization does not attempt to handle grouping and
aggregation, but as a result it may be simpler and easier to use, when
these features are not needed.


%
In this paper we also presented the first ever mechanized proofs of
the expressive equivalence of two-valued and three-valued SQL queries,
the first ever verified translation of relational calculus queries to SQL queries,
and the correctness of rewrite rules that are valid for SQL's real semantics (including
multisets and nulls).  The diversity of recent approaches to
formalizing SQL also suggests that consolidation
and cross-fertilization of ideas among approaches may reap rewards,
to provide a strong foundation for exploring
verification of other key components of database systems.

\bibliographystyle{spmpsci}      
\bibliography{main}

\appendix
\section{Commented semantics of \SQLNull}
We give here a commented version of the formalized semantics of \SQLNull beyond what was possible to report on in the paper.
The semantics consists of four inductive judgments for simple attributes, full attributes, terms and terms sequences ("j_var_sem", "j_fvar_sem", "j_tm_sem", "j_tml_sem"), and five mutually defined judgment for the main SQL expressions, namely queries ("j_q_sem"), tables ("j_tb_sem"), conditions ("j_cond_sem"), table bindings ("j_btb_sem"), and existentially nested queries ("j_in_q_sem").

\subsection{Semantics of attributes}
An attribute is evaluated in a singleton context $[s]$ under an environment for that context.

\[
\sem{s \vdash a} \eval S_a
\qquad
S_a : \mathtt{env} \; [s] \to \mathtt{Value}
\]

\begin{coq}
  Inductive j_var_sem : 
    forall s, Name -> (env (s::List.nil) -> Value) 
	-> Prop :=
  | jvs_hd : forall a s, ~ List.In a s -> 
      j_var_sem (a::s) a (fun h => env_hd h)
  | jvs_tl : forall a s b, 
      forall Sb, a <> b -> j_var_sem s b Sb -> 
	  j_var_sem (a::s) b (fun h => Sb (env_tl h)).
\end{coq}
Under a context $a::s$, the semantics of $a$ is the head value in the environment; we also check that $a$ should not be in the remainder of the context for well-formedness. Under a context $a::s$ where $a \neq b$, we first evaluate $b$ under the remainder context $s$ and lift the resulting evaluation from context $s$ to context $a::s$.

The judgment for full variables (in the form $n.a$, where $n$ is a de Bruijn index) lifts the semantics of simple variables to contexts composed of multiple schemas.

\[
\sem{\Gamma \vdash n.a} \eval S_{n.a}
\qquad
S_{n.a} : \mathtt{env} \; \Gamma \to \mathtt{Value}
\]

\begin{coq}
  Inductive j_fvar_sem : 
    forall G, nat -> Name -> (env G -> Value) 
	-> Prop :=
  | jfs_hd : forall s G a, 
      forall Sa, j_var_sem s a Sa -> 
	  j_fvar_sem (s::G) O a 
	    (fun h => Sa (@subenv1 (s::List.nil) G h))
  | jfs_tl : forall s G i a,
      forall Sia, j_fvar_sem G i a Sia -> 
	  j_fvar_sem (s::G) (S i) a 
	    (fun h => Sia (@subenv2 (s::List.nil) G h)).
\end{coq}
To evaluate attributes in the form $0.a$ in a context $s::G$, we first evaluate the simple attribute $a$ in $s$ and then lift the resulting evaluation from $[s]$ to $s::G$. The evaluation of $(i+1).a$ is obtained recursively by evaluating $i.a$ in $G$ and lifting the valuation to $s::G$.

\subsection{Semantics of terms}
A term $t$ is evaluated in a context $\Gamma$ to a function from a suitable environment to values.

\[
\sem{\Gamma \vdash t} \eval S_{t}
\qquad
S_{t} : \mathtt{env} \; \Gamma \to \mathtt{Value}
\]

\begin{coq}
  Inductive j_tm_sem0 (G:Ctx) : 
	pretm -> (env G -> Value) 
	-> Prop :=
  | jts_const : forall c, 
      j_tm_sem0 G (tmconst c) (fun _ => Db.c_sem c)
  | jts_null  : 
      j_tm_sem0 G tmnull (fun _ => None)
  | jts_var   : forall i a, 
      forall Sia, j_fvar_sem G i a Sia -> 
	  j_tm_sem0 G (tmvar (i,a)) Sia.
\end{coq}
While the semantics of constants and nulls is trivial, full variables are evaluated in the judgment for full variables.

The evaluation of sequences of terms is similar, but it returns a tuple of values of corresponding size.
\[
\sem{\Gamma \vdash \vect{t}} \eval S_{\vect{t}}
\qquad
S_{\vect{t}} : \mathtt{env} \; \Gamma \to \mathtt{T} \; \vert \vect{t} \vert
\]

\begin{coq}
  Inductive j_tml_sem0 (G:Ctx) : 
    forall (tml : list pretm), 
	(env G -> Rel.T (List.length tml)) 
	-> Prop :=
  | jtmls_nil  : j_tml_sem0 G List.nil (fun _ => Vector.nil _)
  | jtmls_cons : forall t tml,
      forall St Stml,
      j_tm_sem0 G t St -> j_tml_sem0 G tml Stml ->
      j_tml_sem0 G (t::tml) (fun h => 
	    Vector.cons _ (St h) _ (Stml h)).
\end{coq}
This judgment is implemented in the obvious way, by mapping empty sequences of terms to an evaluation returning the empty tuple, and by recursion when the input list of terms is not empty.

\subsection{Semantics of queries}
If a query $Q$ with schema $\sigma$ is evaluated in a context $\Gamma$, we obtain a function returning a relation with arity corresponding to $\sigma$.

\[
\gsem{\jq{\Gamma}{Q}{\sigma}} \eval S_{q} \qquad S_{q} : \mathtt{env}~\Gamma \to \mathtt{R}~\vert \sigma \vert
\]
\begin{coq}
Inductive j_q_sem (d : Db.D) : 
  forall G (s : Scm), prequery -> 
  (env G -> Rel.R (List.length s))
  -> Prop := 
  | jqs_sel : forall G b tml btbl c,
      forall G0 Sbtbl Sc Stml s e,
      j_btbl_sem d G G0 btbl Sbtbl ->
      j_cond_sem d (G0++G) c Sc ->
      j_tml_sem (G0++G) (List.map fst tml) Stml -> 
      s = List.map snd tml ->
      j_q_sem d G s 
	    (select b tml btbl c) 
        (fun h => 
		   let S1 := Sbtbl h in
           let p  := fun Vl => Sem.is_btrue 
		     (Sc (Evl.env_app _ _ (Evl.env_of_tuple G0 Vl) h)) 
		   in
           let S2 := Rel.sel S1 p in
           let f  := fun Vl => Stml 
		     (Evl.env_app _ _ (Evl.env_of_tuple G0 Vl) h) in
           let S := cast _ _ e (Rel.sum S2 f)
           in if b then Rel.flat S else S)
\end{coq}
The evaluation of select queries was described in detail in the paper. Here, we just notice that the list "tml" contains pairs of terms and attribute names, where attribute names are used to produce the output schema. Since the Coq typechecker cannot automatically infer that the arity of the semantics for the list of terms "List.map fst tml" matches the arity of the schema "List.map snd tml", the rule takes evidence of this fact in the form of an equation "e". The output relation is flattened to a set relation if the |DISTINCT| clause (signaled by the boolean "b") was used.

\begin{coq}
  | jqs_selstar : forall G b btb c,
      forall G0 Sbtb Sc Stml e,
      j_btbl_sem d G G0 btb Sbtbl ->
      j_cond_sem d (G0++G) c Sc ->
      j_tml_sem (G0++G) (tmlist_of_ctx G0) Stml ->
      j_q_sem d G (List.concat G0) (selstar b btb c) 
        (fun h => let S1 := Sbtbl h in
           let p  := fun Vl => Sem.is_btrue 
            (Sc (Ev.env_app _ _ (Ev.env_of_tuple G0 Vl) h)) 
           in
           let S2 := Rel.sel S1 p in
           let f  := fun Vl => Stml 
            (Ev.env_app _ _ (Ev.env_of_tuple G0 Vl) h) in
           let S := cast _ _ e (Rel.sum S2 f)
           in if b then Rel.flat S else S)
\end{coq}
The evaluation of select star queries proceeds similarly, by desugaring the star to a list of terms ("tmlist_of_ctx G0").

\begin{coq}
  | jqs_union : forall G b q1 q2,
      forall s S1 S2,
      j_q_sem d G s q1 S1 -> j_q_sem d G s q2 S2 ->
      j_q_sem d G s (qunion b q1 q2) 
        (fun Vl => let S := Rel.plus (S1 Vl) (S2 Vl) 
          in if b then S else Rel.flat S)
  | jqs_inters : forall G b q1 q2,
      forall s S1 S2,
      j_q_sem d G s q1 S1 -> j_q_sem d G s q2 S2 ->
      j_q_sem d G s (qinters b q1 q2) 
        (fun Vl => let S := Rel.inter (S1 Vl) (S2 Vl) 
          in if b then S else Rel.flat S)
  | jqs_except : forall G b q1 q2,
      forall s S1 S2,
      j_q_sem d G s q1 S1 -> j_q_sem d G s q2 S2 ->
      j_q_sem d G s (qexcept b q1 q2) 
        (fun Vl => if b then Rel.minus (S1 Vl) (S2 Vl) 
           else Rel.minus (Rel.flat (S1 Vl)) (S2 Vl))
\end{coq}
|UNION|, |INTERSECT| and |EXCEPT| queries are implemented all in the same fashion, by evaluating their subqueries recursively and combining them with the relational operators $\oplus$, $\cap$ and $\setminus$ from the ADT.

When a query $Q$ is evaluated in a context $\Gamma$ as an existentially nested query, we obtain a function returning a Boolean denoting whether the resulting relation is non-empty.

\[
\gsem{\jiq{\Gamma}{Q}} \eval S_{Q} \qquad S_{Q} : \mathtt{env}~\Gamma \to \mathtt{bool}
\]
\begin{coq}
  with j_in_q_sem (d : Db.D) : 
    forall G, prequery -> (env G -> bool) 
	-> Prop :=
  | jiqs_sel : forall G b tml btb c,
      forall G0 Sbtb Sc Stml,
      j_btb_sem d G G0 btb Sbtb ->
      j_cond_sem d (G0++G) c Sc ->
      j_tml_sem (G0++G) (List.map fst tml) Stml ->
      j_in_q_sem d G (select b tml btb c) 
        (fun h => let S1 := Sbtb h in
                  let p  := fun Vl => Sem.is_btrue 
                    (Sc (Ev.env_app _ _ 
                      (Ev.env_of_tuple G0 Vl) h)) in
                  let S2 := Rel.sel S1 p in
                  let f  := fun Vl => Stml (Ev.env_app _ _ 
                    (Ev.env_of_tuple G0 Vl) h) in
                  let S := Rel.sum S2 f
                  in 0 <? Rel.card 
                   (if b then Rel.flat S else S))
  | jiqs_selstar : forall G b btb c,
      forall G0 Sbtb Sc,
      j_btb_sem d G G0 btb Sbtb ->
      j_cond_sem d (G0++G) c Sc ->
      j_in_q_sem d G (selstar b btb c) 
        (fun h => let S1 := Sbtb h in
                  let p  := fun Vl => Sem.is_btrue 
                    (Sc (Ev.env_app _ _ 
                      (Ev.env_of_tuple G0 Vl) h)) in
                  let S2 := Rel.sel S1 p in
                  0 <? Rel.card
                   (if b then Rel.flat S2 else S2))
  | jiqs_union : forall G b q1 q2,
      forall s S1 S2,
      j_q_sem d G s q1 S1 -> j_q_sem d G s q2 S2 ->
      j_in_q_sem d G (qunion b q1 q2) 
       (fun Vl => let S := Rel.plus (S1 Vl) (S2 Vl) 
        in 0 <? Rel.card (if b then S else Rel.flat S))
  | jiqs_inters : forall G b q1 q2,
      forall s S1 S2,
      j_q_sem d G s q1 S1 -> j_q_sem d G s q2 S2 ->
      j_in_q_sem d G (qinters b q1 q2) 
       (fun Vl => let S := Rel.inter (S1 Vl) (S2 Vl) 
        in 0 <? Rel.card (if b then S else Rel.flat S))
  | jiqs_except : forall G b q1 q2,
      forall s S1 S2,
      j_q_sem d G s q1 S1 -> j_q_sem d G s q2 S2 ->
      j_in_q_sem d G (qexcept b q1 q2) 
       (fun Vl => 0 <? Rel.card 
        (if b then Rel.minus (S1 Vl) (S2 Vl) 
         else Rel.minus (Rel.flat (S1 Vl)) (S2 Vl)))
\end{coq}
The implementation of the semantics of existentially nested queries mostly reflects, in a simplified way, the corresponding rules for general queries. At the end, the resulting relation is tested for non-emptiness by checking whether its cardinality is greater than zero or not.

\subsection{Semantics of tables}
The type of the semantics of tables is similar to that of the semantics of queries.
\[
\gsem{\jT{\Gamma}{T}{\sigma}} \eval S_{T} \qquad S_{T} : \mathtt{env}~\Gamma \to \mathtt{R}~\vert \sigma \vert
\]
\begin{coq}
  with j_tb_sem (d : Db.D) : 
  forall G (s : Scm), pretb -> 
  (env G -> Rel.R (List.length s)) 
  -> Prop :=
  | jtbs_base : forall G x,
      forall s (e : Db.db_schema d x = Some s), 
      j_tb_sem d G s (tbbase x) (fun _ => Db.db_rel e)
  | jtbs_query : forall G q,
      forall s h,
      j_q_sem d G s q h ->
      j_tb_sem d G s (tbquery q) h
\end{coq}
The definition is trivial: the data base provides semantics for stored named tables, whereas tables resulting from queries are evaluated by means of their judgment.

The type of the semantics of frames (sequences of tables) is as follows:
\[
\gsem{\jTl{\Gamma}{\vect{T:\sigma}}{\Gamma'}} \eval S_{\vect{T}} \qquad S_{\vect{T}} : \mathtt{env}~\Gamma \to \mathtt{R}~\vert \mathtt{concat} \; \Gamma' \vert
\]
\begin{coq}
  with j_btb_sem (d : Db.D) : 
    forall G G', list (pretb * Scm) -> 
    (env G -> Rel.R (list_sum 
	  (List.map (length (A:=Name)) G'))) 
	-> Prop :=
  | jbtbs_nil : forall G, 
      j_btb_sem d G List.nil List.nil (fun _ => Rel.Rone) 
  | jbtbs_cons : forall G T s' btb,
      forall s G0 ST Sbtb e,
      j_tb_sem d G s T ST ->
      j_btb_sem d G G0 btb Sbtb -> 
	  length s = length s' ->
      j_btb_sem d G (s'::G0) ((T,s')::btb) (fun Vl => 
	    cast _ _ e (Rel.times (ST Vl) (Sbtb Vl)))
\end{coq}
The sequence of tables is unfolded as in the case of terms. The base case for empty sequences returns the 0-ary relation "Rone", which is the neutral element for the cartesian product of relations; non null sequences are evaluated recursively, and the resulting semantics are combined by means of the relational operator $\times$. A cast is used to make the definition typecheck.

The type of the semantics of generators (lateral joins of frames) is as follows:
\[
\gsem{\jTl{\Gamma}{G}{\Gamma'}} \eval S_{\vect{T}} \qquad S_{\vect{T}} : \mathtt{env}~\Gamma \to \mathtt{R}~\vert \mathtt{concat} \; \Gamma' \vert
\]
\begin{coq}
  with j_btbl_sem (d : Db.D) : 
    forall G G', list (list (pretb * Scm)) -> 
	(env G -> Rel.R (list_sum 
	  (List.map (length (A:=Name)) G'))) 
	-> Prop :=
  | jbtbls_nil : forall G, 
      j_btbl_sem d G List.nil List.nil (fun _ => Rel.Rone) 
  | jbtbls_cons : forall G btb btbl,
      forall G0 G1 Sbtb Sbtbl e,
      j_btb_sem d G G0 btb Sbtb ->
      j_btbl_sem d (G0 ++ G) G1 btbl Sbtbl ->
      j_btbl_sem d G (G1 ++ G0) (btb::btbl)
        (fun h =>
         let Rbtb := Sbtb h in
         Rel.rsum Rbtb (fun Vl => 
		   cast _ _ e (Rel.times 
		     (Sbtbl (Evl.env_app _ _ (Evl.env_of_tuple _ Vl) h)) 
			 (Rel.Rsingle Vl))))
\end{coq}
The generator too is evaluated one frame at a time, ending with the 0-ary relation "Rone" in the case of the empty sequence of frames: however, unlike the evaluation of simple frames, in a generator we do not perform a simple cartesian product of the semantics of the components, because a certain frame may depend on the ones declared to its left. More details are provided in the paper.

\subsection{Semantics of conditions}
The evaluation of conditions returns a truth value in the abstract type $\bfB$.
\[
\gsem{\jc{\Gamma}{c}} \eval S_{c} \qquad S_{c} : \mathtt{env}~\Gamma \to \bfB
\]
\begin{coq}
  with j_cond_sem (d : Db.D) : 
    forall G, precond -> (env G -> B) 
	-> Prop :=
  | jcs_true : forall G, 
      j_cond_sem d G cndtrue (fun _ => btrue)
  | jcs_false : forall G, 
      j_cond_sem d G cndfalse (fun _ => bfalse)
  | jcs_null : forall G b t, 
      forall St,
      j_tm_sem G t St ->
      j_cond_sem d G (cndnull b t) (fun Vl => 
	    of_bool (match St Vl with 
		  None => b | _ => negb b end))
  | jcs_istrue : forall G c,
      forall Sc,
      j_cond_sem d G c Sc ->
      j_cond_sem d G (cndistrue c) (fun Vl => 
	    of_bool (Sem.is_btrue (Sc Vl)))
\end{coq}
The evaluation of |TRUE| and |FALSE| is trivial, returning the corresponding elements of type $\bfB$. To evaluate |t IS [NOT] NULL|, we first evaluate "t" and then check whether the evaluation returns null or not. Similarly, to evaluate |c IS TRUE|, we first evaluate "c" and then check whether the evaluation yields |btrue| or not.

\begin{coq}
  | jcs_pred : forall G n p tml,
      forall Stml e,
      j_tml_sem G tml Stml ->
      j_cond_sem d G (cndpred n p tml) (fun Vl => 
	    Sem.sem_bpred _ p (to_list (Stml Vl)) 
		  (eq_trans (length_to_list _ _ _) e))
\end{coq}
This is the evaluation of an "n"-ary basic predicate "p" applied to a list of terms "tml". We first obtain "Stml" as the evaluation function for "tml", then the evaluation for the basic predicate is a function that takes an environment "Vl" as input and returns the result of "p" applied to "(Stml Vl)". However, "p" expects to receive list of constants, while "(Stml Vl)" is a tuple that may contain nulls: so, we first convert the tuple to a list, and then we use the operation "sem_bpred" from the ADT for "B" to lift a predicate of type "list BaseConst -> bool" to one of type "list Value -> B".

\begin{coq}
  | jcs_memb : forall G b tml q, 
      forall s Stml Sq (e : length tml = length s),
      j_tml_sem G tml Stml ->
      j_q_sem d G s q Sq ->
      let e' := f_equal Rel.T e in 
      j_cond_sem d G (cndmemb b tml q) (fun Vl => 
        let Stt := Rel.sel (Sq Vl) (fun rl => 
          Vector.fold_right2 (fun r0 V0 acc => 
            acc && is_btrue (veq r0 V0))
           true _ rl (cast _ _ e' (Stml Vl))) in
        let Suu := Rel.sel (Sq Vl) (fun rl => 
          Vector.fold_right2 (fun r0 V0 acc => 
            acc && negb (is_bfalse (veq r0 V0)))
           true _ rl (cast _ _ e' (Stml Vl))) in
        let ntt := Rel.card Stt in
        let nuu := Rel.card Suu in
        if (0 <? ntt) then of_bool b
        else if (0 <? nuu) then bmaybe
        else of_bool (negb b))
\end{coq}
The evaluation of membership of a tuple within a nested query was discussed in the paper; in the concrete definition, the boolean "b" is used to differentiate between "IS IN Q" and "IS NOT IN Q". Casts are also added to make the definitions typecheck.

\begin{coq}
  | jcs_ex : forall G q,
      forall Sq,
      j_in_q_sem d G q Sq ->
      j_cond_sem d G (cndex q) (fun Vl => of_bool (Sq Vl))
  | jcs_and : forall G c1 c2,
      forall Sc1 Sc2,
      j_cond_sem d G c1 Sc1 -> j_cond_sem d G c2 Sc2 ->
      j_cond_sem d G (cndand c1 c2) 
	    (fun Vl => band (Sc1 Vl) (Sc2 Vl))
  | jcs_or : forall G c1 c2,
      forall Sc1 Sc2,
      j_cond_sem d G c1 Sc1 -> j_cond_sem d G c2 Sc2 ->
      j_cond_sem d G (cndor c1 c2) 
	    (fun Vl => bor (Sc1 Vl) (Sc2 Vl))
  | jcs_not : forall G c0,
      forall Sc0,
      j_cond_sem d G c0 Sc0 ->
      j_cond_sem d G (cndnot c0) (fun Vl => bneg (Sc0 Vl))
\end{coq}
|EXISTS Q| conditions are implemented by the existentially nested query judgment "j_in_q_sem". The remaining conditions implement logical connectives by means of the corresponding operations on the ADT of truth values.

\section{Commented semantics of the flat relational calculus}

\subsection{Semantics of base expressions}

The semantics of base expressions of flat relational terms in normal form has the following type:
\[
\sem{\Gamma \vdash E} \eval S_E \qquad  S_E : \mathtt{env}~\Gamma \to \mathtt{V}
\]
Sequences of base terms are also given a semantic evaluation judgment for convenience:
\[
\sem{\Gamma \vdash \vect{E}} \eval S_{\vect{E}} \qquad  S_{\vect{E}} : \mathtt{env}~\Gamma \to \mathtt{T}~\len{\vect{E}}
\]
\begin{coq}
  Inductive j_base_sem (d : Db.D) : 
    forall G (t :  tm), (env G -> Value) -> Prop :=
  | jbs_cst  : forall G c,
      j_base_sem d G (cst c) (fun _ => Db.c_sem c)
  | jbs_null : forall G, j_base_sem d G null (fun _ => None)
  | jbs_proj : forall G i a,
      forall Sia,
      j_fvar_sem G i a Sia -> j_base_sem d G (proj (var i) a) Sia.

  Inductive j_basel_sem (d : Db.D) : 
    forall G (tml : list tm), 
	(env G -> Rel.T (List.length tml)) -> Prop :=
  | jbls_nil  : forall G, j_basel_sem d G List.nil (fun _ => Vector.nil _)
  | jbls_cons : forall G t tml,
      forall St Stml,
      j_base_sem d G t St -> j_basel_sem d G tml Stml ->
      j_basel_sem d G (t::tml) (fun h => Vector.cons _ (St h) _ (Stml h)).
\end{coq}
In the relational calculus, base expressions serve the same purpose as SQL terms, and their semantics are analogous.

\subsection{Semantics of tuples}
The semantics of flat relational calculus tuples in normal form has the following type:
\[
\sem{\jrct{\Gamma}{L}{\sigma}} \eval S_{L} \qquad  S_{L} : \mathtt{env}~\Gamma \to \mathtt{T}~\len{\sigma}
\]
\begin{coq}
  Inductive j_tuple_sem (d : Db.D) : 
    forall G (t:tm) (s:Scm), 
	(env G -> Rel.T (List.length s)) -> Prop :=
  | jts_mktup : forall G al bl,
      forall e Sbl, List.NoDup al -> 
	  List.length al = List.length bl -> j_basel_sem d G bl Sbl ->
      j_tuple_sem d G (mktup (List.combine al bl)) al (cast _ _ e Sbl).
\end{coq}
Normal forms of tuples are sequences of pairs attribute name-base expression ("List.combine al bl"): their semantics is obtained by evaluating the sequence of base expressions ("bl"), with a cast to make the judgment typecheck.

\subsection{Semantics of conditions}
The type of the semantics of conditions is as follows:
\[
\sem{\jrcc{\Gamma}{c}} \eval S_c \qquad  S_c : \mathtt{env}~\Gamma \to \mathtt{tribool}
\]
\begin{coq}
  Inductive j_cond_sem (d : Db.D) : forall G (t:tm), (env G -> B) -> Prop :=
  | jws_empty : forall G q b n, 
      forall Sq, j_coll_sem d G q b n Sq ->
      j_cond_sem d G (empty b q) (fun h => sem_empty _ (Sq h))
\end{coq}
The semantics of the emptiness test on a collection "q" first evaluates "q" recursively, and then uses an auxiliary definition "sem_empty" that checks whether the resulting relation has cardinality equal to zero.
\begin{coq}
  | jws_pred : forall G n p tml,
     forall Stml e,
     j_basel_sem d G tml Stml ->
     j_cond_sem d G (pred n p tml) (fun Vl => 
	   Sem.sem_bpred _ p (to_list (Stml Vl)) 
	     (eq_trans (length_to_list _ _ _) e))
  | jws_true : forall G, j_cond_sem d G rctrue (fun _ => Sem.btrue)
  | jws_false : forall G, j_cond_sem d G rcfalse (fun _ => Sem.bfalse)
  | jws_isnull : forall G t,
      forall St, j_base_sem d G t St ->
      j_cond_sem d G (isnull t) (fun Vl => Sem.of_bool 
	    (match St Vl with None => true | _ => false end))
  | jws_istrue : forall G c,
      forall Sc, j_cond_sem d G c Sc ->
      j_cond_sem d G (istrue c) (fun Vl => Sem.of_bool (Sem.is_btrue (Sc Vl)))
  | jws_and : forall G c1 c2,
      forall Sc1 Sc2, j_cond_sem d G c1 Sc1 -> j_cond_sem d G c2 Sc2 ->
      j_cond_sem d G (rcand c1 c2) (fun Vl => Sem.band (Sc1 Vl) (Sc2 Vl))
  | jws_or : forall G c1 c2,
      forall Sc1 Sc2, j_cond_sem d G c1 Sc1 -> j_cond_sem d G c2 Sc2 ->
      j_cond_sem d G (rcor c1 c2) (fun Vl => Sem.bor (Sc1 Vl) (Sc2 Vl))
  | jws_not : forall G c,
      forall Sc, j_cond_sem d G c Sc ->
      j_cond_sem d G (rcnot c) (fun Vl => Sem.bneg (Sc Vl))
\end{coq}
The remaining cases of the semantics of relational calculus conditions closely correspond to \SQLNull conditions, and their semantics is similar.

\subsection{Semantics of collections}
The type of the semantics of collections is as follows:
\[
\sem{\jrcs{\Gamma}{M}{b,\sigma}} \eval S_{M} \qquad  S_M : \mathtt{env}~\Gamma \to \mathtt{R}~\len{\sigma}
\]
\begin{coq}
  with j_coll_sem (d : Db.D) : 
    forall G (t : tm) (b:bool) (s:Scm), 
	(env G -> Rel.R (List.length s)) -> Prop :=
  | jcs_nnil : forall G b s,
     List.NoDup s -> j_coll_sem d G (nil b s) b s (fun h => sem_nil _)
  | jcs_ndisj : forall G t b s,
     forall St,
     j_disjunct_sem d G t b s St ->
     j_coll_sem d G t b s St
  | jcs_nunion : forall G t1 t2 b s,
      forall St1 St2,
      j_disjunct_sem d G t1 b s St1 ->
      j_coll_sem d G t2 b s St2 ->
      let S := fun h => 
	    if b then Rel.flat (Rel.plus (St1 h) (St2 h))
		else Rel.plus (St1 h) (St2 h) in
      j_coll_sem d G (union t1 t2) b s S
\end{coq}
To evaluate "nil b s" (that is, an empty collection with schema "s", where the Boolean "b" specifies whether the collection is a set or a bag), we need to provide an empty relation with arity equal to the length of "s": this is returned by the function "sem_nil", which first builds a singleton containing a tuple of nulls of suitable length, and then filters it using the trivially false predicate.
If the collection is a disjunct, it is evaluated by a separate judgment; if it is a union "union t1 t2" (where "t1" is a disjunct and "t2" a collection), the two subterms are evaluated recursively and then their semantics are combined using "Rel.plus" (this is followed by a call to "Rel.flat" to perform deduplication if "b" is true, signalling the collection is a set).

\subsection{Semantics of disjuncts}
The type of the semantics of disjuncts is the following:
\[
\sem{\jrcd{\Gamma}{C}{b,\sigma}} \eval S_C \qquad  S_C : \mathtt{env}~\Gamma \to \mathtt{R}~\len{\sigma}
\]
\begin{coq}
  with j_disjunct_sem (d : Db.D) :
    forall G (t : tm) (b : bool) (s:Scm), 
	(env G -> Rel.R (List.length s)) -> Prop :=
  | jds_single : forall G b tup c,
      forall stup Stup Sc, 
	  j_tuple_sem d G tup stup Stup -> 
	  j_cond_sem d G c Sc ->
      j_disjunct_sem d G (cwhere (single b tup) c) b stup 
        (fun h => 
		  if Sem.is_btrue (Sc h) then Rel.Rsingle (Stup h) 
		  else sem_nil _)
  | jds_comprn : forall G q1 q2,
      forall b sq2 Sq2 sq1 Sq1 e,
      j_gen_sem d G q2 b sq2 Sq2 ->
      j_disjunct_sem d (sq2::G) q1 b sq1 Sq1 ->
      j_disjunct_sem d G (comprn q1 q2) b sq1 (fun h =>
        let f := fun (Vl : Rel.T (length sq2)) => 
		  Sq1 (env_app _ _ (Evl.env_of_tuple (sq2::List.nil) 
		    (cast _ _ e Vl)) h) in
        let S := Rel.rsum (Sq2 h) f in
        if b then Rel.flat S else S)
\end{coq}
A disjunct is either $\setlit{M}_b~\kwwhere~N$, where $M$ is a tuple and $N$ is a condition, or a comprehension whose head is a disjunct. In the first case, we evaluate $M$ and $N$ using the respective judgments: if $N$ evaluates to true, we use "Rel.Rsingle" to return a singleton relation containing a tuple corresponding to the semantics of $M$; otherwise we return an empty relation of appropriate arity using "sem_nil".

In the case of a comprehension $\bigcup\setlit{M \mid N}$, we first evaluate the generator $N$, then for each element $\vect{v}$ of the resulting relation we evaluate the semantics of $M$ in an environment extended with $\vect{v}$; finally, we take the take the disjoint union of all the resulting relations using "Rel.rsum" (this is followed by a deduplication step if we are evaluating a set rather than a bag.

\subsection{Semantics of generators}
The type of the semantics of generators is the following:
\[
\sem{\jrcg{\Gamma}{G}{b,\sigma}} \eval S_G \qquad  S_G : \mathtt{env}~\Gamma \to \mathtt{R}~\len{\sigma}
\]
\begin{coq}
  with j_gen_sem (d : Db.D) : 
    forall G (t : tm) (b : bool) (s : Scm), 
	(env G -> Rel.R (List.length s)) -> Prop :=
  | jgs_tab : forall G x,
      forall s (e : Db.db_schema d x = Some s),
      j_gen_sem d G (tab x) false _ (fun _ => Db.db_rel e)
\end{coq}
The evaluation of named tables is provided by the database through "Db.db_rel".
\begin{coq}
  | jgs_prom : forall G q,
      forall s Sq,
      j_coll_sem d G q true s Sq ->
      j_gen_sem d G (prom q) false s Sq
\end{coq}
A bag generator can be a set collection "q" promoted to a bag. Its semantics is trivial, resorting to a recursive evaluation of "q" as a set collection.
\begin{coq}
  | jgs_bagdiff : forall G q1 q2,
      forall s Sq1 Sq2,
      j_coll_sem d G q1 false s Sq1 -> j_coll_sem d G q2 false s Sq2 ->
      j_gen_sem d G (diff q1 q2) false s (fun h => Rel.minus (Sq1 h) (Sq2 h))
\end{coq}
A generator consisting of the bag difference between two collections "q1" and "q2" is evaluated by taking the semantics of "q1" and "q2" as collections, and using "Rel.minus" to obtain the relation corresponding to their difference.
\begin{coq}
  | jgs_dtab : forall G x,
      forall s (e : Db.db_schema d x = Some s),
      j_gen_sem d G (dist (tab x)) true s (fun _ => Rel.flat (Db.db_rel e))
  | jgs_ddiff : forall G q1 q2,
      forall s Sq1 Sq2,
      j_coll_sem d G q1 false s Sq1 -> j_coll_sem d G q2 false s Sq2 ->
      j_gen_sem d G (dist (diff q1 q2)) true s 
	    (fun h => Rel.flat (Rel.minus (Sq1 h) (Sq2 h))).
\end{coq}
The semantics of deduplicated tables and bag differences is similar to the non-deduplicated case, but uses "Rel.flat" to deduplicate the result.


\end{document}